\newtheorem{theorem}{Theorem}[section]
\newtheorem{proposition}[theorem]{Proposition}
\newtheorem{lemma}[theorem]{Lemma}
\newtheorem{definition}[theorem]{Definition}
\newcommand{\complexityclass}[2][]{\ensuremath{\mathsf{#2}\ifthenelse{\isempty{#1}}{}{(#1)}}}
\newcommand{\bE}{\ensuremath{\mathbf{E}}}
\DeclareMathOperator{\polyloglog}{polyloglog}
\DeclareMathOperator{\polylog}{polylog}
\DeclareMathOperator{\poly}{poly}
\newtheorem*{rep@theorem}{\rep@title}
\newcommand{\newreptheorem}[2]{%
\newenvironment{rep#1}[1]{%
 \def\rep@title{#2 \ref{##1}}%
 \begin{rep@theorem}}%
 {\end{rep@theorem}}}
\begin{document}

\begin{abstract}
We describe approximation algorithms in Linial's classic LOCAL model of distributed computing to find maximum-weight matchings in a hypergraph of rank $r$. Our main result is a deterministic algorithm to generate a matching which is an $O(r)$-approximation to the maximum weight matching, running in $\tilde O(r \log \Delta + \log^2 \Delta + \log^* n)$ rounds. (Here, the $\tilde O()$ notations hides $\polyloglog \Delta$ and $\polylog r$ factors). This is based on a number of new derandomization techniques extending methods of Ghaffari, Harris \& Kuhn (2017).

As a main application, we obtain nearly-optimal algorithms for the long-studied problem of maximum-weight graph matching. Specifically, we get $(1+\epsilon)$ approximation algorithm in $\tilde O(\frac{\log \Delta}{\epsilon^3}+ \polylog(1/\epsilon, \log \log n))$ randomized time and $\tilde O( \frac{\log^2 \Delta}{ \epsilon^4} + \frac{\log^*n}{\epsilon})$ deterministic time. 

  The second application is a faster algorithm for hypergraph maximal matching, a versatile subroutine introduced in Ghaffari et al. (2017) for a variety of local graph algorithms. This  gives an algorithm for $(2 \Delta - 1)$-edge-list coloring in $\tilde O(\log^2 \Delta \log n)$ rounds deterministically or $\tilde O( (\log \log n)^3 )$ rounds randomly. Another consequence (with additional optimizations) are algorithms which generates an edge-orientation with out-degree at most $\lceil (1+\epsilon) \lambda \rceil$ for a graph of arboricity $\lambda$; for fixed $\epsilon$ this takes $\tilde O(\log^6 n)$ deterministic time and or $\tilde O(\log^3 n )$ randomized time.
\end{abstract}

\title[Approx algorithms for max matchings in graphs and hypergraphs]{Distributed local approximation algorithms for maximum matching in graphs and hypergraphs}

\author[David G. Harris]{
{\sc David G.~Harris}$^{1}$
}

\setcounter{footnote}{0}
\addtocounter{footnote}{1}
\footnotetext{Department of Computer Science, University of Maryland, 
College Park, MD 20742. 
Email: \texttt{davidgharris29@gmail.com}.}

\maketitle

This is an extended version of a paper appearing in the 60th annual IEEE Symposium on Foundations of Computer Science (FOCS), 2019.

\section{Introduction}
Consider a hypergraph $H = (V,E)$ with rank (maximum edge size) $r$ and maximum degree $\Delta$. A matching of $H$ is a set of pairwise-disjoint edges; equivalently, it is an independent set of the line graph of $H$.  We develop distributed hypergraph matching algorithms in Linial's classic LOCAL model of computation \cite{lin92}. In this model, time proceeds in synchronous rounds, in which each vertex in the hypergraph can communicate with any other vertex sharing a hyperedge. Computation and message size are unbounded.

There are two main reasons for studying hypergraph matching in this context. First, many symmetry-breaking and locality issues for graphs remain relevant to hypergraphs. Since graph maximal matching is one of the ``big four'' symmetry-breaking problems (which also includes maximal independent set (MIS), vertex coloring, and edge coloring) \cite{pan-rizzi}, it is a natural extension to generalize it to the richer setting of hypergraphs.

The second and more important reason is that distributed hypergraph matching can be used as a clean subroutine for a number of graph algorithms \cite{fgk}. These include maximum-weight matching,  edge-coloring and Nash-Williams decomposition \cite{prev,GKMU18}. Such algorithms need to find many disjoint ``augmenting paths'' (in various flavors) in the graph. These paths can be represented by an auxiliary hypergraph, and a disjoint collection of such paths corresponds to a hypergraph matching.

The problem of \emph{Hypergraph Maximum Weight Matching (HMWM)} is to find a matching $M$ whose weight $a(M) = \sum_{e \in M} a(e)$ is maximum for a given edge-weighting function $a: E \rightarrow [0,\infty)$. This is often intractable to solve exactly, so we define a \emph{$\rho$-approximation for HMWM} to be a matching $M$ whose weight $a(M)$ is at least $1/\rho$ times the maximum matching weight. The main contribution of this paper is a deterministic LOCAL algorithm to approximate HMWM. 
\begin{theorem}[Simplified]
\label{main-thm1}
There is a deterministic $\tilde O( r \log \Delta + \log^2 \Delta + \log^* n)$-round  algorithm for $O(r)$-approximation to HMWM.\footnote{  Throughout, we define  $\tilde O(x)$ to be $x \polylog(x)$.}
\end{theorem}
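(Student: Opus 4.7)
I plan to prove the theorem by reducing HMWM approximation to a prioritized maximal matching computation in $H$. Partition the edges into weight classes $C_i = \{e : a(e) \in [2^i, 2^{i+1})\}$ and process them in decreasing order of $i$: within each class, compute a maximal matching $M_i$ among the edges of $C_i$ that do not share a vertex with any edge of $\bigcup_{j>i} M_j$. Output $M = \bigcup_i M_i$.

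For the approximation ratio, fix an optimal matching $M^*$. For each $e^* \in M^* \setminus M$, maximality at level $i(e^*)$ forces $e^*$ to share a vertex with some $e \in M_j$ with $j \ge i(e^*)$, so $a(e) \ge 2^{i(e^*)} > a(e^*)/2$. Each $e \in M$ has only $r$ vertices, each incident to at most one edge of $M^*$, so $e$ is blamed by at most $r$ edges of $M^*$. Summing gives $a(M^*) \le 2r \cdot a(M)$, which is the desired $O(r)$-approximation.

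The main obstacle is the round complexity: a naive implementation calls hypergraph maximal matching once per weight class, incurring a multiplicative $O(\log W)$ factor where $W$ is the ratio of largest to smallest weight. To address this, I would first discard, at each vertex $v$, edges whose weight is below $a_{\max}(v) / \poly(r \Delta)$; such edges can contribute only an $O(1/r)$ fraction of the optimum through any matching touching $v$, so deletion costs only a constant factor in the approximation and leaves $O(\log(r\Delta))$ nontrivial classes.

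To remove even this logarithmic overhead, the plan is to fold the class-by-class processing into a single invocation of a \emph{weighted} hypergraph maximal matching primitive, in which (rounded log-)weights serve as edge priorities and the selection rule always prefers heavier edges. The core technical challenge — and where I expect the main difficulty to lie — is extending the Ghaffari--Harris--Kuhn derandomization framework and its pessimistic estimators from a cardinality-based objective to a weighted one while preserving the $\tilde O(r \log \Delta + \log^2 \Delta + \log^* n)$ round complexity. This likely requires designing estimators that track a weighted potential (e.g., expected weight of removed conflicts per surviving candidate edge) rather than the usual degree-reduction potential, and showing that a constant-probability-per-phase weighted progress guarantee can still be derandomized within the same network decomposition / bit-by-bit conditioning machinery.
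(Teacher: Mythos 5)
Your approach differs substantially from the paper's: you propose greedy processing of logarithmic weight classes, reducing to a prioritized hypergraph maximal matching (HMM) primitive, whereas the paper finds a near-optimal \emph{fractional} matching via an LP solver (Lemma~\ref{find-fmatch1}, based on \cite{kmw}) and then derandomizes a randomized rounding of it (Theorem~\ref{split2}). Your approximation-ratio charging argument is correct. However, the proposal has a substantive gap, not merely unfinished details.

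The reduction terminates at a weighted/prioritized HMM primitive that you do not construct, and there is no off-the-shelf substitute at the claimed round complexity. In the paper the dependency runs the \emph{opposite} direction: HMM (Theorem~\ref{mt2}) is derived from HMWM (Theorem~\ref{main-thm1}) by repeated residual matching, so you cannot lean on it here. The best prior deterministic HMM \cite{prev} runs in $O(r^2 \log(n\Delta)\log n\log^4\Delta)$ rounds with an inherent $\polylog n$ dependence, far worse than the target $\tilde O(r\log\Delta + \log^2\Delta + \log^*n)$, and moreover it does not supply the ``always prefer heavier edges'' promise that your charging argument requires (that is a strictly stronger, greedy/stable-matching-type guarantee).

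More fundamentally, the remark that the remaining difficulty is ``extending the Ghaffari--Harris--Kuhn derandomization framework and its pessimistic estimators\,\dots\,within the same network decomposition / bit-by-bit conditioning machinery'' misjudges where the difficulty lies. That framework (proper coloring of $G^2$ plus conditional expectations; Lemma~\ref{basicderand}) pays one round per color class, and its degree-splitting step requires virtual nodes of degree $\Theta(\log n)$ to control Chernoff-type failure events, so a $\log n$ factor is baked into it. The actual content of Theorem~\ref{main-thm1} is a \emph{qualitatively new} derandomization technique --- the non-proper-coloring / multilinear-potential method of Section~\ref{sec:derand-noprop}, driven by the carefully engineered multi-stage pessimistic estimators $S_i$ of Section~\ref{rnd-sec2}, which track only aggregate weighted deviations rather than per-vertex, per-stage degree bounds --- and this is precisely what removes the $\log n$ dependence. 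Treating the step as a routine ``swap cardinality for weight'' modification of the GHK estimators would not yield the stated bound; that is where the real proof lives, and your proposal leaves it open.
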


Randomization reduces the run-time still further and yields a truly local algorithm:
\begin{theorem}
\label{main-thm2}
There is a randomized $\tilde O( \log \Delta + r \log \log \tfrac{1}{\delta} + ( \log \log \tfrac{1}{\delta})^2 )$-round  algorithm for $O(r)$-approximation to HMWM with success probability at least $1 - \delta$.
\end{theorem}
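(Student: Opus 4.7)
My plan for Theorem \ref{main-thm2} is to obtain the randomized algorithm by combining a Ghaffari-style randomized HMWM iteration with a shattering argument, and then finishing on the small residual components using the deterministic algorithm of Theorem \ref{main-thm1}. The three phases are: (i) a randomized procedure that runs for $\tilde O(\log \Delta)$ rounds and handles the bulk of the weight; (ii) a shattering guarantee that the residual hypergraph decomposes into small components; and (iii) running the deterministic algorithm of Theorem \ref{main-thm1} on each such component.

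For the randomized bulk phase, I would adapt the randomized iteration presumably underlying Theorem \ref{main-thm1}: in each round each edge $e$ is sampled with probability proportional to $a(e)$ divided by an appropriate local weighted degree, sampled edges that do not collide at any vertex are committed to the matching, and their incident vertices retire. A standard LP-progress/averaging argument should give constant relative progress per round against the fractional matching LP, so $\tilde O(\log \Delta)$ rounds suffice to either collect almost all of an $O(r)$-approximation's worth of weight or reduce the surviving weighted fractional degree to an arbitrarily small polynomial in $1/\Delta$.

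For the shattering step, the usual Ghaffari-type analysis says that once the per-round survival probability at each vertex is at most $1/\Delta^c$ for a sufficiently large constant $c$, with probability at least $1-\delta$ every residual component has size and maximum degree at most $N' = \poly(\log \Delta, \log(1/\delta))$. This is proved via a union bound over ``witness subgraphs'' of size $k = \Theta(\log(n/\delta))$ in an appropriate power of the conflict hypergraph, exploiting that survival events are only locally correlated. Invoking Theorem \ref{main-thm1} on each residual component then costs $\tilde O(r \log N' + \log^2 N' + \log^* N')$ rounds, which under the above bound simplifies to $\tilde O(r \log \log(1/\delta) + (\log \log(1/\delta))^2)$ after absorbing $\log \log n$ into the $\log \Delta$ from the randomized phase (valid whenever $\delta \geq 2^{-\poly(n)}$). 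Summing the three phases gives the stated bound.

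The main obstacle will be establishing the shattering bound for hypergraph matching at the required rate, since edges sharing a common vertex are more highly correlated than in the graph case and conflict chains are longer (of length up to $r$). Driving the per-vertex survival probability below the $\Delta^{-\Omega(1)}$ threshold while simultaneously preserving the $O(r)$ approximation guarantee against the fractional LP will, I expect, require the rate-control and weighted-sampling ideas of Ghaffari-Harris-Kuhn 2017 adapted to hypergraphs: the sampling distribution must be both weight-aware (for the approximation) and sufficiently ``spread out'' (for the bounded-dependence union bound), and reconciling these two demands is the step I would budget the most effort on.
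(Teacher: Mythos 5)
Your proposal takes a fundamentally different (and in fact problematic) route from the paper. The paper does \emph{not} use any form of shattering, iterative randomized matching, or component decomposition. Its proof of Theorem~\ref{main-thm2} is a short reduction to the deterministic Theorem~\ref{main-thm1} via \emph{random sparsification of the fractional matching itself}: one runs $t = O(\log \tfrac{1}{\delta})$ independent copies of the randomized fractional-matching procedure of \Cref{find-fmatch1}, each producing an $O(\log r)$-proper fractional matching with expected value $\Omega(a^*(H))$; one then averages these to form $h$ and passes to the replicate hypergraph $H^{[h]}$, which has maximum degree $\Delta' = O(\log r \log \tfrac{1}{\delta})$. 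By Markov plus independence, at least one of the $t$ samples is already $\Omega(a^*(H))$ with probability $1 - \delta/2$, so $H^{[h]}$ inherits a good fractional matching. After a cheap random coloring to get a good coloring, the deterministic Theorem~\ref{main-thm1} is run on $H^{[h]}$ in $\tilde O(r \log \Delta' + \log^2 \Delta') = \tilde O(r \log\log \tfrac{1}{\delta} + (\log\log \tfrac{1}{\delta})^2)$ rounds. There is no residual graph, no union bound over vertices, and no component-size analysis.

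The gap in your approach is concrete and, I believe, fatal to hitting the claimed bound. The shattering paradigm inherently relies on a union bound over all $n$ vertices (or all witness subgraphs), which produces residual components of size $\poly(\Delta)\cdot\log n$, not $\poly(\log(1/\delta))$. Your claim to ``absorb $\log\log n$ into $\log \Delta$'' is not valid: the theorem's round bound $\tilde O(\log \Delta + r\log\log\tfrac{1}{\delta} + (\log\log\tfrac{1}{\delta})^2)$ has no $n$-dependence at all, and $\Delta$ can be a constant while $n$ is arbitrarily large. Moreover you explicitly restrict to $\delta \geq 2^{-\poly(n)}$, whereas the paper's statement and its ``truly local'' emphasis allow $\delta$ to be completely arbitrary, including much smaller than $1/\poly(n)$. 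The paper's sparsification trick sidesteps both issues because it never needs a global union bound: it boosts a constant-probability event (one fractional-matching sample being good) to probability $1 - \delta$ by simple independent repetition, at the cost of only an additive $\log\tfrac{1}{\delta}$ blow-up in the degree of $H^{[h]}$, which then enters the deterministic run-time only doubly-logarithmically. Shattering is indeed used elsewhere in the paper (Theorem~\ref{split4r}, for unweighted hypergraph \emph{maximal} matching, where the target failure probability is $1/\poly(n)$ to begin with), but for the weighted, arbitrary-$\delta$ setting of Theorem~\ref{main-thm2} the sparsification route is both simpler and strictly necessary to meet the stated complexity.
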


By far the most important application of HMWM is to \emph{Graph Maximum Weight Matching (GMWM)}. This is one of the most well-studied problems in algorithmic graph theory, with many variants such as specialized graph classes or computational models. We cannot summarize the full literature here, but let us provide a brief summary of the role played by our HMWM algorithm.

\subsection{Graph matching} Overall, there are three main paradigms for GMWM approximation algorithms. The first is based on \emph{maximal matching}, which is a $2$-approximation to (unweighted) graph maximum matching. With some extensions, this can also be used for approximation algorithms in weighted graphs \cite{lotker}. The best algorithms for maximal matching are a deterministic algorithm in $O( \log^2 \Delta \log n)$ rounds \cite{fischer}, and a randomized algorithm in $O( \log \Delta +  (\log \log n)^3)$ rounds (a combination of a randomized algorithm of \cite{beps} with the deterministic algorithm of \cite{fischer}).
 
 The second paradigm is based on rounding a fractional matching. For general graphs, there can be a gap of $3/2$ between the maximum weight matching and the fractional matching LP (to be distinguished from the matching polytope). Hence these algorithms typically achieve an approximation ratio of $3/2$ at best, sometimes with additional loss in the rounding. The most recent example is the deterministic algorithm \cite{ahmadi2} for $(3/2 + \epsilon)$-approximate maximum matching in $O(\log W+ \log^2 \Delta + \log^*n)$ rounds for fixed $\epsilon$, where $W$ is the ratio between maximum and minimum edge weight. Note that, unlike maximal matching, the run-time has a negligible dependence on $n$.  (This algorithm has better approximation ratios on restricted graph classes; for example, a $(1+\epsilon)$ approximation for bipartite graphs.)
 
 To get an approximation factor arbitrarily close to one in general graphs, it appears necessary to use the third type of approximation algorithm based on \emph{path augmentation}. These algorithms build up the matching by iteratively finding and applying short augmenting paths. The task of finding these paths can be formulated as a hypergraph matching problem.

 Our focus here will be on the $(1+\epsilon)$-approximate GMWM problem in the LOCAL model. In addition to the run-time, there are some other important properties to keep in mind. The first is how the algorithm depends on the dynamic range $W$ of the edge weights. Some algorithms only work in the case $W = 1$, i.e. maximum cardinality matching. We refer to these as \emph{unweighted} algorithms. Other algorithms may have a run-time scaling logarithmically in $W$. 

The second property is the message size. Our focus is on the LOCAL graph model, in which message size is unbounded. A more restrictive model CONGEST is often used, in which message sizes are limited to $O(\log n)$ bits per edge per round. 

A third property is the role of randomness. We have the usual dichotomy in local algorithms between randomized and deterministic. It is traditional in randomized algorithms to seek success probability $1 - 1/\poly(n)$, known as \emph{with high probability} (w.h.p.). Some GMWM algorithms give a weaker guarantee that the algorithm returns a matching $M$ whose \emph{expected weight} is within a $(1+\epsilon)$ factor of the maximum weight. Note that when $W$ is large, we cannot expect any meaningful concentration for the matching weight, since a single edge may contribute disproportionately. We refer to this as a \emph{first-moment} probabilistic guarantee.

Table~\ref{fi1} summarizes a number of $(1+\epsilon)$-approximation GMWM algorithms, listed roughly in order of publication. For readability, we have simplified the run-time bounds, omitting some asymptotic notations as well as dropping some second-order terms. We note that more recent MIS algorithms \cite{ghaffari-mis, ghaffari-roz} could potentially be used here; since such results are not trivial and are not claimed in the literature, we do not include them in the table.
\begin{table*}[h]
\centering
\begin{tabular}{|l|l|l|l|l|l|}
\hline
Ref & Randomization & Message size & run-time & Weighted? \\
\hline
\hline
\cite{czy3} & Det & LOCAL & $(\log n)^{O_{\epsilon}(1)}$ & No \\
\hline
\cite{lotker2} & W.h.p. & LOCAL & $\epsilon^{-3} \log n$ & No \\
\hline
\cite{lotker2} & W.h.p. & CONGEST & $2^{1/\epsilon} \log n$ & No \\
\hline
\cite{nieberg} & W.h.p. & LOCAL & $\epsilon^{-3} \log n$ & Yes \\
\hline
\cite{even}& Det & LOCAL & $\Delta^{1/\epsilon} + \epsilon^{-2} \log^* n$  & No \\
\hline
\cite{even} & Det & LOCAL & $( \log W n)^{1/\epsilon} ( \Delta^{1/\epsilon} + \log^* n)$ & Yes \\
  \hline
  \cite{bcgs} & First-moment & CONGEST & $2^{1/\epsilon} \frac{\log \Delta}{\log \log \Delta}$ & No \\
  \hline
  \cite{fgk} & Det & CONGEST & $\Delta^{1/\epsilon} + \poly(1/\epsilon) \log^* n $ & No \\
  \hline
  \cite{prev} & Det & LOCAL & $\epsilon^{-9} \log^5 \Delta \log^2 n$ & No \\
  \hline
  \cite{GKMU18} + \cite{prev} & Det & LOCAL & $\epsilon^{-7} \log^4 \Delta \log^3 n$  (for $\Delta \ll n$) & Yes \\
  \hline
  \hline
  This paper & Det & LOCAL & $\epsilon^{-4} \log^2 \Delta + \epsilon^{-1} \log^* n$ & Yes \\
  \hline
  This paper & W.h.p. & LOCAL & $\epsilon^{-3} \log \Delta + \epsilon^{-3} \log \log n + \epsilon^{-2} (\log \log n)^2$  & Yes \\
  \hline
  This paper & First-moment & LOCAL & $\epsilon^{-3} \log \Delta$ & Yes \\
  \hline
\end{tabular}
\vspace{1mm}
\caption{Comparison of graph $(1+\epsilon)$-approximate maximum matching algorithms.}
\label{fi1}
\vspace{-1mm}
\end{table*}

Our HMWM algorithm yields a $(1+\epsilon)$-approximation algorithm for graph matching:
\begin{theorem}
\label{mt1}
For $\epsilon > 0$, there is a $\tilde O(\epsilon^{-4} \log^2 \Delta + \epsilon^{-1} \log^* n)$-round deterministic algorithm for  $(1+\epsilon)$-approximate GMWM.
  
For $\delta > 0$ there is $\tilde O(\epsilon^{-3} \log \Delta + \epsilon^{-3} \log \log \tfrac{1}{\delta} + \epsilon^{-2} (\log \log \tfrac{1}{\delta})^2)$-round randomized algorithm for $(1+\epsilon)$-approximate GMWM with success probability at least $1 - \delta$.
\end{theorem}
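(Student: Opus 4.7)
My plan is to use the classical path-augmentation paradigm, iteratively improving the current matching $M$ by applying disjoint short augmenting structures. The starting point is the standard reduction (in the style of Pettie--Sanders / Duan--Pettie for the weighted case): if $M$ admits no alternating path or cycle of length $O(1/\epsilon)$ whose application strictly increases $w(M)$, then $w(M) \geq (1-\epsilon)\, w(M^*)$, where $M^*$ is an optimal matching. So I would proceed in $T = \poly(1/\epsilon)$ phases, each phase finding a good collection of mutually vertex-disjoint short augmentations and applying all of them to $M$ in parallel.

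Each phase is naturally a hypergraph matching problem. I would build an auxiliary hypergraph $H$ on vertex set $V(G)$ whose hyperedges are exactly the short augmenting structures (of length $O(1/\epsilon)$) with respect to the current matching $M$, weighted by the gain $w(M \oplus S) - w(M)$ of applying the structure $S$. Then $H$ has rank $r = O(1/\epsilon)$ and maximum degree $\Delta' \leq \Delta^{O(1/\epsilon)}$, since each such structure is contained in the radius-$O(1/\epsilon)$ neighborhood of any of its vertices; correspondingly, one round of communication in $H$ costs $O(1/\epsilon)$ rounds in $G$. A matching in $H$ is exactly a vertex-disjoint set of simultaneously applicable augmentations, so invoking Theorem~\ref{main-thm1} (resp.~Theorem~\ref{main-thm2}) gives an $O(r)=O(1/\epsilon)$-approximation to the best possible single-phase improvement. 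Plugging $\log \Delta' = O(\epsilon^{-1}\log\Delta)$ and $r = O(1/\epsilon)$ into Theorem~\ref{main-thm1}, a single phase runs deterministically in
\[
\tilde O\bigl(r\log\Delta' + \log^2\Delta' + \log^* n\bigr) \;=\; \tilde O\bigl(\epsilon^{-2}\log^2\Delta + \log^* n\bigr)
\]
rounds. Running $T = O(\epsilon^{-2})$ such phases and amortizing the $\log^* n$ cost by computing, once and for all, a distance-$O(1/\epsilon)$ vertex coloring of $G$ (via a Linial-style algorithm on the $O(1/\epsilon)$-th power graph) that is reused in every phase yields the claimed deterministic bound $\tilde O(\epsilon^{-4}\log^2\Delta + \epsilon^{-1}\log^* n)$. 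The randomized bound is obtained identically by substituting Theorem~\ref{main-thm2} and rescaling its failure parameter to $\delta/T$ to union-bound across phases.

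The main obstacle will be the convergence analysis: because each phase loses a factor $\Theta(r) = \Theta(1/\epsilon)$ relative to the exact maximum-weight augmentation, one must argue carefully that $\poly(1/\epsilon)$ phases still close the gap. This requires the standard amortized charging argument that decomposes $M \oplus M^*$ into short alternating components and shows that the residual gap $w(M^*) - w(M)$ shrinks by a $(1 - \Omega(\epsilon))$ factor per phase. A secondary technical issue is controlling $\Delta'$: naive enumeration of alternating walks could blow up super-polynomially in $\Delta$, but restricting to vertex-disjoint structures within an $O(1/\epsilon)$-neighborhood keeps $\log\Delta' = O(\epsilon^{-1}\log\Delta)$, which is exactly what the arithmetic above requires.
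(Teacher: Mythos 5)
Your proposal takes essentially the same route as the paper: iteratively augment via vertex-disjoint short alternating paths/cycles of length $O(1/\epsilon)$, cast each augmentation phase as weighted matching in a rank-$O(1/\epsilon)$ auxiliary hypergraph of degree $\Delta^{O(1/\epsilon)}$, invoke Theorems~\ref{main-thm1}/\ref{main-thm2} for the per-phase improvement, argue a $(1-\Omega(\epsilon))$ multiplicative shrinkage of the residual gap via a Pettie--Sanders-type structural lemma (the paper cites \cite{pettie-sanders}), and compute a good coloring of the static path hypergraph once so that the $\log^* n$ cost is paid only a single time. However, your round accounting has a self-cancelling bug that is worth flagging. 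You correctly note that each round of the hypergraph algorithm costs $O(1/\epsilon)$ rounds of $G$, yet you drop this factor when computing the per-phase cost (you report $\tilde O(\epsilon^{-2}\log^2\Delta)$ instead of $\tilde O(\epsilon^{-3}\log^2\Delta)$). Simultaneously you overestimate the phase count as $T = O(\epsilon^{-2})$; the $(1-\Omega(\epsilon))$ decay you yourself invoke implies the correct count is $T = \Theta(\epsilon^{-1}\log(1/\epsilon)) = \tilde O(\epsilon^{-1})$, which is what the paper uses. These two errors happen to multiply out to the advertised $\tilde O(\epsilon^{-4}\log^2\Delta)$, but the intermediate reasoning is inconsistent: restoring the $O(1/\epsilon)$ simulation factor while keeping $T = O(\epsilon^{-2})$ would give $\tilde O(\epsilon^{-5}\log^2\Delta)$. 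The corrected bookkeeping ($\tilde O(\epsilon^{-1})$ phases, $\tilde O(\epsilon^{-3}\log^2\Delta)$ per phase on $G$) is exactly the paper's and closes the argument cleanly.
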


These are the first $(1+\epsilon)$-approximation algorithms (either randomized or deterministic) that simultaneously have three desirable properties: (1) run-time essentially independent of $n$; (2) a polynomial dependence on $1/\epsilon$, and (3) allowing weighted graphs, without run-time dependence on the parameter $W$.   Note that the deterministic algorithm  matches the fastest prior \emph{constant-factor} approximation algorithms  \cite{ahmadi2, fischer}, and both the randomized and deterministic algorithms nearly match a number of known lower bounds.

\subsection{Maximal matching and other applications}
The HMWM algorithm can also be used for the closely related problem of hypergraph \emph{maximal} matching (HMM):
\begin{theorem}[Simplified]
\label{mt2}
HMM can be solved in $\tilde O( (\log n) (r^2 \log \Delta + r \log^2 \Delta) )$ rounds deterministically  or  $\tilde O(r \log^2 \Delta + r^2 (\log \log n)^2 + r (\log \log n)^3)$ rounds w.h.p.
\end{theorem}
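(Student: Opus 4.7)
The plan is to reduce HMM to repeated calls of the $O(r)$-approximate HMWM algorithms from \Cref{main-thm1,main-thm2}. I will maintain a running matching $M$, initially empty, and in each iteration $i$ restrict attention to the \emph{alive} sub-hypergraph $H_i$ whose edges have all vertices unsaturated by $M$. I then run the approximate HMWM algorithm on $H_i$ with unit weights to obtain a matching $M_i$ and append it to $M$. The process halts once $H_i$ has no edges, at which point $M$ is automatically maximal.

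The structural crux is a geometric decay of $\mu_i$, defined as the size of a maximum matching in $H_i$. Because the HMWM algorithm guarantees $|M_i| \geq \mu_i/(cr)$ for a constant $c$, and because any maximum matching of $H_{i+1}$ is vertex-disjoint from $M_i$ (so its union with $M_i$ is again a matching of $H_i$), we get $\mu_{i+1} \leq \mu_i - |M_i| \leq \mu_i(1 - 1/(cr))$. Starting from $\mu_0 \leq n$, after $T = O(r \log n)$ iterations one has $\mu_T < 1$, i.e.\ $H_T$ is edgeless and $M$ is maximal.

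For the deterministic bound, I substitute the cost of \Cref{main-thm1}: $T$ iterations at $\tilde O(r \log \Delta + \log^2 \Delta + \log^* n)$ rounds each gives $\tilde O((\log n)(r^2 \log \Delta + r \log^2 \Delta))$ once the $\log^* n$ tail is absorbed. For the randomized bound, instantiating \Cref{main-thm2} with per-call failure probability $\delta/T$ and union-bounding gives a naive bound carrying a $\log n$ factor in the dominant term; to bring this down to the claimed $\log \Delta$, I plan a two-phase argument: run only $O(r \log \Delta)$ randomized iterations and then show via a shattering analysis that the residual alive hypergraph with high probability splits into connected components of size $\poly(\log n)$, each of which is cleaned up deterministically in $\poly(\log \log n)$ further rounds by the standard post-shattering machinery.

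The main obstacle lies in this shattering step: one must show that after $O(r \log \Delta)$ randomized iterations the probability that any fixed vertex is still incident to an alive edge is at most $\Delta^{-\Omega(1)}$, so that a union bound over local neighborhoods yields the desired component bound. This is delicate because \Cref{main-thm2} gives only a first-moment guarantee per iteration, so the cross-iteration decay has to be extracted from the evolving alive-hypergraph structure rather than from any clean independence between rounds, and care is needed to ensure the approximation guarantee continues to hold uniformly on the changing sub-hypergraphs.
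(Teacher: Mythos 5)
Your deterministic argument is essentially the paper's: repeatedly apply \Cref{main-thm1} to the residual hypergraph, use $\tau_{i+1}\leq\tau_i(1-\Omega(1/r))$ to bound the number of iterations by $O(r\log\tau(H))\leq O(r\log n)$, and multiply by the per-iteration cost. One small inaccuracy: the $\log^* n$ term is a one-time cost (for obtaining a good coloring of $H$, which is then reused across all stages), not a per-iteration tail that needs ``absorbing,'' but the final bound is the same either way.

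The randomized part has a genuine gap, and it is exactly the step you flag as the ``main obstacle.'' You propose to run $O(r\log\Delta)$ iterations of the randomized HMWM algorithm and then argue that the residual alive hypergraph shatters into $\poly(\log n)$-size components. But \Cref{main-thm2} only guarantees that the matching returned in each round has weight $\Omega(\tau/r)$ with probability $1-\delta$; it says nothing about the saturation probability of any fixed vertex. There is no reason a vertex that stays unmatched through $O(r\log\Delta)$ HMWM iterations is rare, and the cross-iteration dependence you mention (the residual hypergraph evolves adaptively) blocks the usual shattering argument. The paper's proof of the randomized claim (\Cref{split4r}) proceeds in the opposite order: Phase~I runs the \emph{randomized MIS algorithm} of Ghaffari on the line graph of $H$, which by the standard MIS shattering analysis gives components of size $\poly(r,\Delta)\log n$ w.h.p.\ in $O(\log(r\Delta))$ rounds. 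Only after this do the approximate-matching iterations begin, and moreover Phase~II does not use \Cref{main-thm2} at all; it uses a separate, simpler randomized routine (\Cref{conc-prop33}, proved via a polynomial concentration bound) that yields $|M|\geq\Omega(\tau/r)$ w.h.p.\ whenever $\tau\geq(r\log n)^{10}$, costing only $O(\log r\log\Delta)$ rounds and requiring no $\delta$-tuning. After $O(r\log(r\Delta))$ such iterations $\tau$ drops to $\poly(r,\log n)$, at which point \Cref{mt2a} with $\delta=1/\poly(n)$ finishes in the stated $\polyloglog n$ rounds. So the three-phase structure (MIS shattering $\rightarrow$ cheap high-probability matchings $\rightarrow$ deterministic cleanup) is what makes the arithmetic work out, and simply iterating \Cref{main-thm2} and hoping for shattering does not.
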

By contrast, the deterministic algorithm of \cite{prev} uses $O(r^2 \log(n \Delta) \log n \log^4 \Delta)$ rounds. There is a lower bound for graph maximal matching algorithms which have sublinear dependence on $\Delta$ of $\Omega(\frac{\log n}{\log \log n})$ deterministic rounds or $\Omega(\frac{\log \log n}{\log \log \log n})$ randomized rounds \cite{bbhors}. Thus, our algorithms have nearly optimal dependence on $n$, up to $\polyloglog n$ factors.

Using Theorem~\ref{mt2} as a subroutine, we immediately get improved distributed algorithms for a number of graph problems. Three simple consequences are for edge coloring:
\begin{theorem}
  \label{mt4} 
Let $G$ be a graph with maximum degree $\Delta$.
\begin{enumerate}
\item There is a $\tilde O(\log n \log^2 \Delta)$-round deterministic algorithm for $(2\Delta-1)$-list-edge-coloring.
\item There is a $\tilde O( (\log  \log n)^3 )$-round randomized algorithm  for $(2\Delta-1)$-list-edge-coloring
\item There is a $\tilde O(\Delta^4 \log^6 n)$-round deterministic algorithm for $\tfrac{3}{2} \Delta$-edge-coloring. 
\end{enumerate}
\end{theorem}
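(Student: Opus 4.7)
The plan is to derive all three claims by reducing the coloring problem to hypergraph maximal matching (HMM) and invoking Theorem~\ref{mt2} with appropriate parameters.

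\textbf{Parts (1) and (2).} The central observation is that $(2\Delta-1)$-list edge coloring of $G$ reduces to finding a maximal matching in a constant-rank auxiliary hypergraph $H$. The natural construction places a vertex $(v,c)$ for each (graph-vertex, color) pair and a vertex for each edge $e\in E(G)$; for each $uv\in E(G)$ and each $c\in L(uv)$ it contains the hyperedge $\{uv,(u,c),(v,c)\}$. Any matching in $H$ encodes a proper partial list coloring (each $G$-edge receives at most one color, and no two matching hyperedges share an $(v,c)$), and \emph{maximality} forces a complete coloring: if $uv$ were uncolored then each of its $\ge 2\Delta-1$ listed colors would have to be blocked by the already-matched hyperedges at $u$ or at $v$, but these together block at most $2(\Delta-1)$ colors. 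The hypergraph $H$ has rank $3$ and max degree $O(\Delta)$, so the deterministic branch of Theorem~\ref{mt2} immediately yields $\tilde O(\log n\cdot\log^2\Delta)$ rounds, proving (1). For (2), combine (1) with the standard graph-shattering paradigm: a short randomized preprocessing (e.g.\ random color sampling followed by one round of conflict removal) reduces the leftover instance to components of $\polylog n$ size and max degree, and then part~(1)'s algorithm finishes each component in $\tilde O(\log\log n\cdot(\log\log n)^2)=\tilde O((\log\log n)^3)$ rounds.

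\textbf{Part (3).} Starting from a proper edge coloring with $2\Delta-1$ colors (provided by part~(1)), iteratively shrink the palette toward $\tfrac{3}{2}\Delta$. Each reduction step identifies many simultaneous Vizing-style augmenting fans/chains that reassign colors along short alternating structures and free one color. Finding a maximal disjoint collection of such augmentations is itself a matching problem on a ``fan hypergraph'' of rank $\poly(\Delta)$ (each fan touches $O(\Delta)$ edges and color slots); one deterministic call to Theorem~\ref{mt2} per outer iteration applies all of them in parallel. Balancing the $\poly(\Delta)$ rank, the number of outer iterations needed to absorb the slack between $2\Delta-1$ and $\tfrac{3}{2}\Delta$, and the deterministic HMM cost $\tilde O(\log n\cdot(r^2\log\Delta+r\log^2\Delta))$ yields the announced $\tilde O(\Delta^4\log^6 n)$ bound after some bookkeeping.

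The main obstacle is part~(3): one must model Vizing-style augmentations as hyperedges of a small-rank hypergraph so that a matching corresponds precisely to a set of pairwise non-interfering color updates (both the path structure and the local palette changes must be disjoint), and then tightly bound the rank and the iteration count so the final product matches $\tilde O(\Delta^4\log^6 n)$. Parts (1) and (2) are essentially plug-and-play applications of Theorem~\ref{mt2} once the rank-$3$ conflict hypergraph is in place.
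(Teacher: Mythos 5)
Parts (1) and (2) of your proposal follow essentially the same route as the paper: the paper cites the rank-$3$ conflict hypergraph from Fischer, Ghaffari \& Kuhn and then plugs in the deterministic HMM bound (giving $\tilde O(\log n\log^2\Delta)$) and the shattering-based randomized HMM bound. Your self-contained construction of the rank-$3$ hypergraph is a correct alternative description of the same reduction; note only that for part (2) the statement must also cover $\Delta \gg \polylog n$, which the paper handles via a case analysis from \cite{fgk} rather than a single shattering step as you sketch, and your sketch does not address the degree of the residual components when $\Delta$ is large.

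Part (3) is where your argument has a genuine gap, and it is the interesting part. The paper does not attempt to engineer its own Vizing-fan hypergraph. Instead it invokes the reduction of Ghaffari, Kuhn, Maus \& Uitto \cite{GKMU18}, in which the relevant hypergraph has rank $\Delta\log n$ (not $\poly(\Delta)$) and max degree $\Delta^{O(\Delta\log n)}$, and the outer loop has $\Delta^2\log n$ iterations. Crucially, with these parameters the \emph{deterministic} HMM bound of Theorem~\ref{sum-hmm-thm}(c)/(f) is too slow --- its $r^2\log\Delta$ term alone blows up past $\Delta^4$ once $r=\Delta\log n$ and $\log\Delta_H = \Theta(\Delta\log\Delta\log n)$. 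The paper's trick is to run the \emph{randomized} HMM algorithm (Theorem~\ref{sum-hmm-thm}(d)) with failure probability $\delta=2^{-n^c}$, which only pays $\log\log\tfrac{1}{\delta}=O(\log n)$ in the round complexity, and then derandomize existentially: since the number of labelled $n$-vertex instances is $2^{\poly(n)}$, a union bound shows that for $c$ large enough some single random seed succeeds simultaneously on all of them, and hardwiring that seed yields a deterministic algorithm. This ``low-failure randomized $\Rightarrow$ nonuniform deterministic'' step is exactly what the paper flags in the remark after the theorem as counter-intuitive, and it is not reproducible by the route you describe. Your plan of $\poly(\Delta)$-rank fan hypergraphs, one deterministic HMM call per palette-reduction step, and ``balancing'' does not give $\tilde O(\Delta^4\log^6 n)$: for rank $r=\Delta^2$, the $r^2\log\Delta$ factor in Theorem~\ref{mt2} already costs $\Delta^4$ per HMM call and there are $\Theta(\Delta)$ palette-reduction steps, overshooting by a $\Delta$ factor, and you have not specified a hypergraph model for Vizing augmentations for which maximality provably yields a free color. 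Concretely: you are missing both the \cite{GKMU18} reduction and the existential derandomization via an ultra-low-failure randomized subroutine.
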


One more involved application of HMM (along with some additional optimizations) is the following approximate Nash-Williams graph decomposition:
\begin{theorem}
\label{mt3}
For a multi-graph $G$ of arboricity $\lambda$  there is a $\tilde O(\log^6 n / \epsilon^4)$-round deterministic algorithm and a $\tilde O(\log^3 n/\epsilon^3)$-round randomized algorithm to find an edge-orientation of $G$ with maximum out-degree $\lceil (1+\epsilon) \lambda \rceil$.
\end{theorem}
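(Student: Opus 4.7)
The plan is to reduce approximate Nash--Williams decomposition to repeated applications of the HMM algorithm of Theorem~\ref{mt2}, following the standard ``flip-path'' paradigm for orientation algorithms. Start from any orientation of $G$ (e.g.\ based on arbitrary vertex IDs, giving out-degree at most $\Delta$) and iteratively drive the maximum out-degree down to the target $d=\lceil (1+\epsilon)\lambda\rceil$. A directed path $P$ from a vertex $u$ with $\deg^+(u)>d$ to a vertex $w$ with $\deg^+(w)<d$ is \emph{improving}: reversing every edge of $P$ leaves the out-degree of every internal vertex unchanged and strictly decreases the potential $\Phi=\sum_v \max(0,\deg^+(v)-d)$. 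Two disjoint improving paths can be flipped simultaneously, so the task is to find, in each phase, a large collection of vertex-disjoint improving paths in parallel.

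The key structural input is a Hakimi-style lemma: whenever $\Phi>0$, one can find improving paths of length $L=O(\epsilon^{-1}\log n)$ starting at every violating vertex, and in fact a constant fraction of the ``excess'' $\Phi$ can be removed by a vertex-disjoint family of such paths. This follows by running a BFS along out-edges from the set of violating vertices: if after $L$ layers no under-saturated vertex is reached, the induced subgraph would have edge-to-vertex ratio exceeding $\lambda$, contradicting the arboricity bound. I would then encode each candidate length-$\leq L$ improving path as a hyperedge (on its vertex set) of an auxiliary hypergraph $H^\star$ of rank $r=O(L)=O(\epsilon^{-1}\log n)$, and invoke the deterministic (resp.\ randomized) HMM procedure of Theorem~\ref{mt2} on $H^\star$ to obtain a maximal family of vertex-disjoint improving paths. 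Each phase consists of (i) building $H^\star$ via an $L$-round broadcast, (ii) running HMM, and (iii) flipping the selected paths; the potential lemma guarantees that $O(\epsilon^{-1}\log n)$ phases reduce $\Phi$ to zero. Substituting the per-phase cost $\tilde O(\log n(r^2\log \Delta+r\log^2\Delta))=\tilde O(\log^5 n/\epsilon^2)$ deterministically, and $\tilde O(r\log^2\Delta+r^2(\log\log n)^2)$ randomized, into this outer loop gives the bounds $\tilde O(\log^6 n/\epsilon^4)$ and $\tilde O(\log^3 n/\epsilon^3)$ respectively.

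The hard part, and where the ``additional optimizations'' mentioned in the statement must come in, is twofold. First, the number of candidate length-$L$ paths per vertex is $\Delta^{O(L)}$, which is unmanageable; I would control this by bucketing vertices into $O(\log n)$ out-degree levels and restricting the hyperedges of $H^\star$ to paths whose internal vertices come from a single carefully chosen level, so that each vertex generates only $\poly(\log n,1/\epsilon)$ relevant hyperedges while enough structural richness is preserved to still witness a constant-fraction decrease in $\Phi$. Second, one must argue that the progress lemma survives this restriction; this is a Nash--Williams-style double-counting on the layered BFS, using that an over-saturated region with no low-level exit would force a dense subgraph of arboricity $>\lambda$. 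Once these two pieces are established, the rest of the proof is bookkeeping: multiplying the per-phase HMM cost by the number of phases and tracking the rank- and $\Delta$-dependence to arrive at the claimed exponents. The randomized bound follows by simply swapping in the randomized branch of Theorem~\ref{mt2}; no further change to the outer loop is needed.
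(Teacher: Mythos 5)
Your proposal differs from the paper's in ways that create real gaps; the overall ``augment via HMM'' shape is right, but the key technical choices do not work out as claimed.

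First, you take hyperedges of $H^\star$ to be the \emph{vertex sets} of candidate paths and ask for a collection of \emph{vertex-disjoint} improving paths, with the claim that a maximal such collection removes a constant fraction of the excess $\Phi=\sum_v\max(0,\deg^+(v)-d)$. This fails: if a single vertex $v$ carries almost all the excess (say $\deg^+(v)=\Delta\gg d$ with all other vertices at out-degree $0$, as in a star with $\lambda=1$), then every improving path starts at $v$, so a vertex-disjoint family contains at most one path and $\Phi$ drops by $1$ per phase, forcing $\Omega(\Delta)$ phases. The paper (following Ghaffari--Su) instead works with \emph{edge-disjoint} paths: it builds an auxiliary graph $G'_i$ with a source $s$ sending $\deg^+(v)-D$ parallel arcs to each overloaded $v$ and a sink $t$ receiving $D-\deg^+(v)$ arcs from each underloaded $v$; the hypergraph's vertices correspond to the \emph{edges} of $G'_i$, so a matching is a set of edge-disjoint $s$--$t$ paths. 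The outer loop is then controlled not by a ``constant fraction of $\Phi$'' potential (which you assert but do not prove, and which I do not believe holds even for edge-disjoint maximal families) but by the Ghaffari--Su blocking-flow invariant: after stage $i$ the shortest $s$--$t$ path has length $\geq i+1$, and the Hakimi-type BFS argument caps this length at $O(\epsilon^{-1}\log n)$, giving $\ell=O(\epsilon^{-1}\log n)$ stages. This is a qualitatively different termination argument than yours.

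Second, the complexity accounting does not close. You instantiate the deterministic HMM of Theorem~\ref{mt2} directly and write the per-phase cost as $\tilde O(\log n(r^2\log\Delta+r\log^2\Delta))=\tilde O(\log^5 n/\epsilon^2)$, but the maximum degree of $H^\star$ is the number of competing hyperedges at a vertex, which is on the order of the number of paths, $\Delta_G^{O(L)}=2^{\Theta(\log^2 n/\epsilon)}$; thus $\log\Delta_{H^\star}=\Theta(\log^2 n/\epsilon)$ and your per-phase figure is a $\polylog$ factor too small. When you correct this and multiply by the $O(L)$ simulation overhead and the $O(\ell)$ phases, the deterministic path via Theorem~\ref{mt2} lands around $\tilde O(\log^8 n/\epsilon^5)$, not $\tilde O(\log^6 n/\epsilon^4)$. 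The paper avoids this by using the \emph{randomized} HMM (Theorem~\ref{sum-hmm-thm}(d)) with failure probability $\delta=2^{-(n/\epsilon)^c}$ and then derandomizing by a union bound over all $2^{\poly(n/\epsilon)}$ possible inputs, exactly the counter-intuitive trick remarked on after Theorem~\ref{mt4}. Similarly, for the randomized bound the paper does not swap in the randomized branch of Theorem~\ref{mt2} (which is Theorem~\ref{sum-hmm-thm}(e) and would again be too slow); it uses Luby's $O(\log m)$-round MIS on the line graph, plus a separate random-partition sparsification step to strip the $\log\lambda$ factor. Your bucketing-by-out-degree idea is also not needed and not used: the paper controls the path count by observing that any $s$--$t$ path in $G'_i$ has at most one vertex of out-degree $>D$ (else it could be short-circuited, contradicting the Ghaffari--Su shortest-path invariant), which bounds $|E(H_i)|\leq n^3(2\lambda)^i$, and then only $\log$ of this quantity enters the running time.
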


\subsection{Overview and outline}
Our algorithm is based on a simple randomized rounding procedure we refer to as \emph{direct rounding}. Suppose we are given a maximum-weight \emph{fractional} matching $h$ for some edge-weighting function $a$ (this can be found using an LP solving procedure of \cite{kmw}).  Consider the process wherein each edge is selected independently with probability $h(e) \log r$; for any vertex with more than $\Delta' = \Omega( \log r)$ selected edges,  we discard all such edges.

One can easily check that $L$ has expected weight of $\Omega( \sum_{e \in E} a(e) h(e) \log r )$. Since $L$ has small degree, it is inexpensive to convert it into a matching of weight $\Omega( \frac{a(L)}{\Delta' r} ) = \Omega( \frac{\sum_{e \in E} a(e) h(e)}{r})$, which is an $O(r)$-approximation to HMWM.

The crux of our algorithm, and the most important technical contribution of this paper, is derandomization of direct rounding. We use three main derandomization techniques. These all build on each other, and may be of interest in other settings.

In Section~\ref{rnd-sec}, we describe the first general technique for derandomizing  LOCAL graph algorithms. This is an adaptation of a method of \cite{prev} based on a proper vertex coloring of the graph. Roughly speaking, in each stage $i$, all the vertices of color $i$ select a value for their random bits to ensure that the conditional expectation of some statistic of interest does not increase. The objective function here acts in a black-box way and can be almost completely arbitrary.

In Section~\ref{sec:derand-noprop}, we develop our second derandomization technique. This extends this first method to use a \emph{non-proper} vertex coloring, which may have fewer colors. This new algorithm is fundamentally white-box: it requires a carefully tailored pessimistic estimator for the conditional expectation. To state it somewhat informally, the estimator must be ``multilinear'' with respect to the coloring. This allows all the vertices of a color class to act simultaneously without non-linear interactions. 

This multilinearity condition is a significant restriction, but we show that there is a natural way to satisfy it for Chernoff bounds. To explain this at a very high level, concentration bounds with probabilities of order $\delta$ depend on the joint behavior of $w$-tuples of vertices for $w = O(\log \tfrac{1}{\delta})$. For an appropriately chosen  coloring, most such $w$-tuples will have vertices of different colors.

In Sections~\ref{rnd-sec2} and ~\ref{splitg2sec}, we turn this machinery to derandomize direct rounding. We slow down the random process: instead of selecting the edges with probability $p = h(e) \log r$ at once, we go through multiple stages in which each edge is selected with probability $1/2$. We then use the conditional expectations method at each stage, ensuring that the weight of the retained edges at the end is close to the expected weight initially. This is the most technical part of the paper. The statistic is a complex, non-linear function, so instead of directly computing its conditional expectation, we carefully construct a family of pessimistic estimators which approximate it but are amenable to the derandomization method in Section~\ref{sec:derand-noprop}. 

This algorithm runs in $\tilde O(r \log \Delta + \log^2 \Delta)$ rounds to generate the $O(r)$-approximate maximum weight matching.  We find it remarkable that the global statistic $\sum_{e \in L} a(e)$ can be derandomized in this completely local way, without dependence on $n$. (The overall algorithm requires an edge coloring of $H$, and obtaining this requires $O(\log^* n)$ rounds.)

In Section~\ref{find-match}, we describe the initial step of obtaining the fractional matching. This uses the LP algorithm of \cite{kmw} as well as a few additional quantization steps. For the randomized algorithm, we also randomly sparsify the original graph, reducing the degree from $\Delta$ to $\log \tfrac{1}{\delta}$ where $\delta$ is the desired failure probability, and then we execute the deterministic algorithm. Note that the randomized algorithm, which is based on the derandomization of direct rounding, has a  failure probability which is exponentially smaller than direct rounding itself.

In Section~\ref{graph-match}, we develop an algorithm for $(1+\epsilon)$-approximation to GMWM. As we have discussed, this algorithm repeatedly finds a collection of disjoint high-weight augmenting paths, which we represent in terms of a hypergraph matching. It is critical here that our algorithm finds a \emph{high-weight} hypergraph matching, not merely a maximal matching.  We also provide further details on lower bounds for GMWM.

  In Section~\ref{maximal-match}, we discuss HMM and applications to edge-list-coloring. The basic algorithm for HMM is simple: at each stage, we find an approximate maximum-cardinality matching in the residual hypergraph. Our HMWM approximation  ensures that the maximum matching cardinality in the residual graph decreases by a $1 - 1/O(r)$ factor in each stage, which leads to a maximal matching  $O(r \log n)$ repetitions.  The randomized algorithm also takes advantage of a variant of the ``shattering method'' of \cite{beps} (we provide a self-contained description of this process).
    
  In Section~\ref{edge-sec}, we describe a more elaborate application of HMM to approximate Nash-Williams decomposition. We describe both randomized and deterministic algorithms for this task. Counter-intuitively,  the deterministic algorithm is built on our \emph{randomized} HMM algorithm.
  
\subsection{Comparison with related work}
Our algorithmic framework can be interpreted combinatorially as the following problem: given a hypergraph $H = (V,E)$, find a matching of weight approximately $\frac{a(E)}{\Delta r}$ for some edge-weighting function $a$.  Let us summarize the basic approach of \cite{fgk} and \cite{prev} for this problem and how we improve the complexity. 

The algorithm of \cite{fgk}  is based on a primal-dual method: a vertex cover (which is the dual problem to maximum matching) is maintained to witness the optimality of the selected edge subset. This algorithm gives an $O(r^3)$-approximation to HMWM in $O(r^2 (\log \Delta)^{6 + \log r} + \log^* n)$ rounds.

The algorithm of \cite{prev}, like ours, is based on derandomized rounding. They only aim for a hypergraph maximal matching, not an approximate HMWM. The key algorithmic subroutine for this is \emph{degree-splitting}: namely, selecting an edge-set $E' \subseteq E$ which has degree at most $\tfrac{\Delta}{2} ( 1 + \epsilon)$ and which contains approximately half of the edges. This has a trivial $0$-round randomized algorithm, by selecting edges independently with probability $1/2$. To derandomize this, \cite{prev} divides $H$ into ``virtual nodes'' of degree $\frac{\log n}{\epsilon^2}$, and then uses a proper vertex coloring of the resulting line graph (which has maximum degree $\frac{r \log n}{\epsilon^2})$.

To reduce the degree further, \cite{prev} repeats this edge-splitting process for $s \approx \log_2 \Delta$ steps. This generates nested edge-sets $E = E_0 \supseteq E_1 \supseteq \dots \supseteq E_s$ wherein each $E_i$ has degree at most $\Delta ( \frac{1+\epsilon}{2})^i$ and has $|E_i| \approx 2^{-i} |E|$. The final set $E_s$ has very small maximum degree, and a simpler algorithm can then be used to select a large matching of it.

The process of generating edge sets $E_0, \dots, E_s$ with decreasing maximum degree is quite similar to our derandomization of direct rounding. Both algorithms generate nested edge sets which simulate the random process of retaining edges independently. But the key difference is that \cite{prev}  ensures that \emph{all} the vertices have degree at most $\frac{\Delta}{2}(1+\epsilon)$ \emph{in every stage}. Since they use a union bound over the vertices, they must pay a factor of $\log n$ in the run-time.  Furthermore, in order to use degree-splitting over $s$ stages with constant-factor total loss, they need to satisfy $(1+\epsilon)^s = O(1)$, i.e., $\epsilon \approx 1/\log \Delta$. These strict concentration bounds give a complexity of $\tilde O(\log \Delta \times r \log n/\epsilon^2) = \tilde O(r \log n \log^3 \Delta)$ over all stages.

Let us  discuss how our algorithm avoids these two issues. First, to avoid dependence on $n$, we do not insist that all vertices have their degree reduced. We discard the vertices for which certain bad-events occur, e.g. the degree is much larger than expected. These are rare so this does not lose too much in the weight of the matching.

Second, observe that it would not be unusual in the random process for the degree of a vertex $v$ to deviate significantly from its mean value in a \emph{single} stage. Thus, we only keep track of the \emph{total} deviation of $\deg(v)$ from its mean value, aggregated over all vertices $v$ and stages $i$. We achieve this through a carefully crafted potential function to analyze direct rounding.  By allowing more leeway for each vertex per stage, we get away with looser concentration bounds.

\subsection{Notation and conventions}
For a graph $G = (V,E)$ and $v \in V$, we define $N[v]$ to be the inclusive neighborhood of vertex $v$, i.e. $\{v \} \cup \{ w \mid (w, v) \in E \}$. For a hypergraph $H = (V,E)$ and $v \in V$, we define $N(v)$ to be the set of edges containing $v$. We define $\text{deg}(v) = |N(v)|$ and for $L \subseteq E$ we define $\text{deg}_L(v) = |N(v) \cap L|$. Unless stated otherwise, $E$ may be a multi-set. 

For a set $X$, we define $2^X$ to be the power set of $X$, i.e. the collection of subsets $Y \subseteq X$. For integer $k$, we define $\binom{X}{k} \subseteq 2^X$ to be the collection of $k$-element subsets of $X$.

For a graph $G = (V,E)$, we define the power graph $G^t$ to be thee graph on vertex set $V$, with an edge $(u,v)$ if there is a path of length up to $t$ in $G$ from $u$ to $v$. Note that if $G$ has maximum degree $\Delta$, then $G^t$ has maximum degree at most $\Delta^t$.

We define a \emph{fractional matching} to be a function $h: E \rightarrow [0,1]$ such that $\sum_{e \in N(v)} h(e) \leq 1$ for every $v \in V$. This should be distinguished from a fractional solution to the matching polytope of a graph, which also includes constraints for all odd cuts. 

We define an \emph{edge-weighting} to be a function $a: E \rightarrow [0,\infty)$. For an edge-weighting $a$ and an edge subset $L \subseteq E$, we define $a(L) = \sum_{e \in L} a(e)$. Similarly, for a fractional matching, we define $a(h) = \sum_{e \in E} h(e) a(e)$. For a hypergraph $H = (V,E)$ we write $a(H)$ as shorthand for $a(E)$ and we define $a^*(H)$ to be the maximum possible value of $a(h)$ over all fractional matchings $h$ of $H$.  
  
  For any boolean predicate $\mathcal P$, we use the Iverson notation so that $[[\mathcal P]]$ is the indicator function that $\mathcal P$ is true, i.e. $[[\mathcal P]] = 1$ if $\mathcal P$ is true and $[[\mathcal P]] = 0$  otherwise.

    \subsection{The LOCAL model}
    \label{model:sec}
    Our algorithms are all based on the LOCAL model for distributed computations in a hypergraph. This is a close relative to Linial's classic LOCAL graph model \cite{lin92, pel00}, and in fact the main motivation for studying hypergraph LOCAL algorithms is because they are useful subroutines for LOCAL graph algorithms.

 The LOCAL model for graphs has two variants depending on the role of randomness. In the deterministic variant, each vertex is provided with a unique ID which is a bit-string of length $\Theta(\log n)$; here $n$ is a global parameter passed to the algorithm. A vertex has a list of the ID's of its neighbors. In each round a vertex can perform arbitrary computations and transmit messages of arbitrary sizes to its neighbors. At the end of this process, each vertex must make a decision for a graph problem. For example, if the graph problem is to compute a maximal independent set, then each vertex $v$ sets a flag $F_v$ indicating whether it has joined the MIS.

In the randomized LOCAL graph model, each vertex maintains a private random bit-string $R_v$ drawn from some distribution $\mu$.  We define $\vec R \in \mu^V$ to be the overall collection of values $R_v$. All steps except the generation of $\vec R$ can be regarded as deterministic, and so each $F_v$ can be viewed as a function on the domain $\mu^V$. At the end of the process, the flags $F_v$ must correctly solve the graph problem w.h.p., i.e. with probability at least $1 - n^{-c}$ for any desired constant $c > 0$.

To define the LOCAL model on a hypergraph $H$, we first form the \emph{incidence graph} $G = \text{Inc}(H)$; this is a bipartite graph in which each edge and vertex of $H$ corresponds to a vertex of $G$. If $u_v$ and $u_e$ are the vertices in $G$ corresponding to the vertex $v \in V$ and $e \in E$, then $G$ has an edge $(u_e, u_v)$ whenever $v \in e$. The LOCAL model for hypergraph $H$ is simply the LOCAL graph model on $G$. In other words, in a single timestep on the hypergraph $H$, each vertex can send arbitrary messages to every edge $e \in N(v)$, and vice versa.

Note that the length-$\ell$ paths of a graph $G$ can be represented as hyperedges in an auxiliary rank-$\ell$ hypergraph $H$. A single round of the LOCAL model on $H$ can be simulated in $O(\ell)$ rounds of the LOCAL graph model on $G$. 

Graph and hypergraph algorithms may depend on statistics such as the \emph{maximum degree} $\Delta$, \emph{rank} $r$, and vertex count $n$. These parameters cannot be computed locally. As is standard in distributed algorithms, we consider $\Delta, r, n$ to be globally-known upper-bound parameters with the guarantee that $|N(v)| \leq \Delta, |e| \leq r, |V| \leq n$ for every vertex $v$ and edge $e$. When $V$ is understood, we may say that $E$ has maximum degree $\Delta$ and maximum rank $r$. We also assume throughout that $r \geq 2$, as the cases $r = 0$ and $r = 1$ are trivial. 

\section{Derandomization via proper vertex coloring}
\label{rnd-sec}
We begin with a general method of converting randomized LOCAL algorithms into deterministic ones through proper vertex colorings and conditional expectations. This is a slight generalization of \cite{prev};  we describe it here to set the notation and since some of the parameters are slightly different.

Consider a 1-round randomized algorithm $A$ run on a graph $G = (V,E)$. Each vertex $v$ draws its random string $R_v$, runs algorithm $A$, and outputs a real-valued flag $F_v$. Since the algorithm $A$ takes just one round, each value $F_v(\vec R)$ is determined by the values $R_w$ for $w \in N[v]$. We emphasize that the flag $F_v$ here is not necessarily an indicator that the overall algorithm has failed with respect to $v$, and the underlying graph problem may not even be locally checkable.

\begin{lemma}
  \label{basicderand}
  Suppose that $G^2$ has maximum degree $d$.  Then there is a deterministic  algorithm in $O(d + \log^* n)$ rounds to determine values $\vec \rho$ for the random bits $\vec R$, such that
  when (deterministically) running $A$ with the values $\vec R = \vec \rho$, it satisfies
  $$
  \sum_v F_v(\vec \rho) \leq \sum_v \bE[ F_v(\vec R) ]
  $$

  Furthermore, when given a $O(d)$ coloring of $G^{2}$ as input, the $\log^* n$ term can be omitted.
\end{lemma}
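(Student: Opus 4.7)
\medskip

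The plan is to apply the method of conditional expectations, where a proper coloring of $G^2$ allows all vertices of a single color class to fix their random bits simultaneously without coupling. Since $F_v(\vec R)$ depends only on $\{R_w : w \in N[v]\}$, any two vertices $v,v'$ with $d_G(v,v') > 2$ satisfy $N[v] \cap N[v'] = \emptyset$; consequently, $R_v$ and $R_{v'}$ affect disjoint families of flags. This is exactly the separation enforced by a proper coloring of $G^2$.

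Concretely, I would first produce a proper $O(d)$-coloring of $G^2$. Standard deterministic LOCAL coloring algorithms on a graph of maximum degree $d$ yield a $(d{+}1)$-coloring in $O(d + \log^* n)$ rounds, and a single $G^2$-round is simulated by two $G$-rounds. When such a coloring is supplied as input, this step is skipped, matching the ``furthermore'' clause. I would then iterate through the color classes $c = 1,\ldots,d{+}1$, maintaining a partial assignment $\vec\rho$ that fixes $R_v := \rho_v$ for every previously processed $v$, together with the potential
\[
\Phi(\vec\rho) \;=\; \sum_v \bE\bigl[F_v(\vec R) \mid \vec R \text{ agrees with } \vec\rho\bigr].
\]
In stage $c$, every vertex $v$ of color $c$ simultaneously selects
\[
\rho_v \in \arg\min_{\rho} \sum_{w \in N[v]} \bE\bigl[F_w \mid \vec\rho,\, R_v = \rho\bigr].
\]
Each such conditional expectation depends only on the bits already fixed inside $N^2[v]$ (since $F_w$ depends on $R_u$ for $u \in N[w] \subseteq N^2[v]$), which $v$ gathers in $O(1)$ rounds. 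Summed over color classes, the loop contributes $O(d)$ additional rounds, matching the claimed round complexity.

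The crux of the argument is that $\Phi$ is non-increasing across each stage. Fix stage $c$ with pre-stage assignment $\vec\rho$. Because $N[v] \cap N[v'] = \emptyset$ for distinct $v,v'$ of color $c$, the collection of flags touched in this stage splits into blocks $\{F_w : w \in N[v]\}$, one per color-$c$ vertex $v$, with the $v$-block depending on the new bits only through $R_v$. The local greedy choice then gives
\[
\sum_{w \in N[v]} \bE\bigl[F_w \mid \vec\rho, R_v = \rho_v\bigr] \;\le\; \bE_{R_v}\!\left[\sum_{w \in N[v]} \bE\bigl[F_w \mid \vec\rho, R_v\bigr]\right] \;=\; \sum_{w \in N[v]} \bE\bigl[F_w \mid \vec\rho\bigr],
\]
while every flag $F_w$ with $w \notin \bigcup_{v \text{ of color } c} N[v]$ is entirely unaffected in this stage. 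Summing yields non-increase of $\Phi$; telescoping from initial value $\sum_v \bE[F_v]$ down to final value $\sum_v F_v(\vec\rho)$ (once all bits are fixed) gives the desired inequality.

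The main obstacle is pinpointing the correct separation parameter: a proper coloring of $G$ alone would not suffice, since two vertices at $G$-distance exactly $2$ could share a color while their closed neighborhoods overlap at a common neighbor, so their greedy updates would both try to steer the expectation of the same flag and the per-block minimization argument would collapse. Moving to $G^2$ is what makes the simultaneous decoupled update provably safe.
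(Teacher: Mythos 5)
The proposal is correct and follows essentially the same approach as the paper's (sketched) proof: obtain a proper $O(d)$-coloring of $G^2$, then process color classes one at a time, having each color-$c$ vertex $v$ greedily fix $R_v$ to minimize the conditional expectation of $\sum_{w\in N[v]}F_w$, with the key observation that distinct color-$c$ vertices have disjoint closed neighborhoods (and no other color-$c$ vertex lies in $N^2[v]$), so the per-stage update decouples and the global potential $\sum_v \bE[F_v \mid \vec\rho]$ is non-increasing. Your exposition spells out the decoupling argument in more detail than the paper, which simply cites \cite{prev} and notes that same-color vertices ``do not interfere''; the paper also cites \cite{fraigniaud} for a slightly faster coloring step, but both versions land in $O(d+\log^*n)$.
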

\begin{proof}
See \cite{prev} for a full exposition; we provide just a sketch here. 

If we are not already given  a proper vertex coloring of $G^{2}$ with $O(d)$ colors,  we  use the algorithm of \cite{fraigniaud} to obtain this in $\tilde O(\sqrt{d}) + O(\log^* n)$ rounds. Next, we proceed sequentially through the color classes; at the $i^{\text{th}}$ stage, every vertex $v$ of color $i$ selects a value $\rho_v$ to ensure that the expectation of $F_v + \sum_{(u,v) \in E} F_u$, conditioned on $R_v = \rho_v$, does not increase. Note that all the vertices of color $i$ are non-neighbors, so they do not interfere during this process.
\end{proof}

\Cref{basicderand} is stated in terms of \emph{minimizing} the sum $\sum_{v}  F_v(\vec \rho)$; by replacing $F$ with $-F$, we can also maximize the sum, i.e. get $\sum_{v}  F_v(\vec \rho) \geq \bE[ \sum_{v} F_v(\vec R)]$. In our applications, we will use whichever form (maximization or minimization) is most convenient; we do not explicitly convert between these two forms by negating the objective functions.

To derandomize hypergraph  algorithms, we apply \Cref{basicderand} to the incidence graph $G = \text{Inc}(H)$. It is convenient to rephrase Lemma~\ref{basicderand} in terms of $H$ without explicit reference to $G$.
\begin{definition}
  \label{def-good-coloring}
  For a hypergraph $H$ of rank $r$ and maximum degree $\Delta$, and incidence graph $G$, we define a \emph{good coloring of $H$} to be a proper vertex coloring of $G^2$ with $\poly(r, \Delta)$ colors.
\end{definition}

\begin{lemma}
  \label{basicderand2}
  Let $A$ be a randomized $1$-round algorithm run on a hypergraph $H = (V,E)$.  Each $u \in V \cup E$ has private random bit-string $R_u$ and at its termination $A$ outputs quantities $F_u$.  If we are provided a good coloring of $H$, then there is a deterministic $O(r \Delta)$-round algorithm to determine values $\vec \rho$ such that 
  when (deterministically) running $A$ with the values $\vec R = \vec \rho$, it satisfies
  $$
  \sum_{u \in V \cup E}  F_u(\vec \rho) \leq \bE[ \sum_{u \in V \cup E} F_u(\vec R)]
  $$
\end{lemma}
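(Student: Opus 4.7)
The plan is to deduce Lemma~\ref{basicderand2} from Lemma~\ref{basicderand} by taking the incidence graph $G = \text{Inc}(H)$ as the underlying graph. By the definition of the hypergraph LOCAL model (Section~\ref{model:sec}), one round on $H$ is literally one round on $G$, so any 1-round hypergraph algorithm $A$ with outputs $F_u$ indexed by $u \in V \cup E$ is automatically a 1-round graph algorithm on $G$ with outputs indexed by $V(G) = V \cup E$. In particular, each $F_u(\vec R)$ depends only on the random strings $R_w$ for $w \in N_G[u]$, which is exactly the locality hypothesis needed to invoke Lemma~\ref{basicderand}.

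Next I would bound the parameter $d$ governing Lemma~\ref{basicderand}. In $G$, each vertex $u_v$ (for $v\in V$) has degree at most $\Delta$, and each vertex $u_e$ (for $e \in E$) has degree at most $r$, so the number of length-$2$ walks from any node of $G$ is at most $r\Delta$. Therefore $G^2$ has maximum degree $d = O(r\Delta)$. A good coloring of $H$ is, by Definition~\ref{def-good-coloring}, a proper vertex coloring of $G^2$ with $\poly(r,\Delta)$ colors, which plays the role of the precomputed $O(d)$-coloring in the second clause of Lemma~\ref{basicderand}; this is what lets us drop the $\log^* n$ term and keeps the round count at $O(d) = O(r\Delta)$.

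Applying Lemma~\ref{basicderand} to $G$ with this coloring then produces, in $O(r\Delta)$ rounds, an assignment $\vec \rho$ satisfying
\[
\sum_{v \in V(G)} F_v(\vec \rho) \ \le\ \bE\Bigl[\sum_{v \in V(G)} F_v(\vec R)\Bigr].
\]
Re-indexing $V(G)$ as $V \cup E$ gives the statement of the lemma.

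The only point that needs a careful check is the one-round-on-$H$ versus one-round-on-$G$ correspondence: I would simply remark that, by construction of $\text{Inc}(H)$, the graph-distance-$1$ neighborhood in $G$ of a hypergraph vertex $u_v$ is exactly $\{u_e : e \in N_H(v)\}$, and the graph-distance-$1$ neighborhood of a hyperedge vertex $u_e$ is exactly $\{u_w : w \in e\}$, so each $F_u$ has precisely the locality promised in the hypothesis of Lemma~\ref{basicderand}. I do not expect any genuine obstacle; the content is entirely a repackaging of Lemma~\ref{basicderand} through the incidence-graph translation, with the maximum-degree bookkeeping on $G^2$ being the only nontrivial (and still elementary) step.
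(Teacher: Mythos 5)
Your approach matches the paper's exactly: pass to $G = \text{Inc}(H)$, note that one hypergraph round equals one round on $G$, bound the degree of $G^2$ by $O(r\Delta)$, and invoke Lemma~\ref{basicderand}. The degree bound on $G^2$ is computed correctly.

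There is, however, one genuine gap. You assert that a good coloring of $H$ --- which Definition~\ref{def-good-coloring} says is a proper coloring of $G^2$ with $\poly(r,\Delta)$ colors --- ``plays the role of the precomputed $O(d)$-coloring in the second clause of Lemma~\ref{basicderand}.'' But $\poly(r,\Delta)$ colors is in general much larger than $O(d) = O(r\Delta)$: the good coloring might use, say, $(r\Delta)^{10}$ colors. The ``Furthermore'' clause of Lemma~\ref{basicderand} requires specifically an $O(d)$-coloring of $G^2$; if you simply run the color-by-color conditional-expectations sweep with the $\poly(r,\Delta)$ colors you actually have, the round complexity becomes $\poly(r,\Delta)$, not $O(r\Delta)$. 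The missing step is a color-reduction: the paper first converts the $\poly(r,\Delta)$-coloring of $G^2$ into an $O(d)$-coloring using the algorithm of Fraigniaud et al.~\cite{fraigniaud}, which takes $\tilde O(\sqrt d) + O(\log^* k)$ rounds when starting from a $k$-coloring --- here $k = \poly(d)$, so this is $o(d)$ and does not reintroduce any $\log^* n$ dependence. Once that conversion is added, the rest of your argument goes through as written and yields the stated $O(r\Delta)$ bound.
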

\begin{proof}
  Let $G$ be the incidence graph of $H$. Note that $G^2$ has maximum degree $d = \Delta r$. The good coloring of $H$ is a $k$-coloring of $G^2$ with $k = \poly(d)$; we transform this into an $O(d)$-coloring of $G^2$ using the algorithm \cite{fraigniaud} in $o(d) + O( \log^* k)$ rounds. Then apply Lemma~\ref{basicderand} with respect to $G$.
\end{proof}

A typical strategy for our algorithms will be to first get a good coloring of the original input hypergraph $H$, in $O(\log^* n)$ rounds, using the algorithm of \cite{lin92}. Whenever we modify $H$ (by splitting vertices, replicating edges, etc.) we will update this coloring. All these subsequent updates can be performed in $O(\log^*(r \Delta))$ rounds, which will be negligible compared to the runtime of the overall algorithm.  These coloring updates are routine but quite cumbersome to describe. For simplicity of exposition, we will mostly ignore them for the remainder of the paper.

As a simple application of Lemma~\ref{basicderand2}, which we need later in our algorithm, we consider a randomized procedure to find a hypergraph matching.
\begin{lemma}
  \label{match-prop2} 
Let $H = (V,E)$ be a hypergraph with a good coloring and an edge-weighting $a$. There is a deterministic $O(r \Delta)$-round algorithm to compute a matching $M$ with $a(M) \geq \Omega( \frac{ a(H)}{r \Delta} )$.
\end{lemma}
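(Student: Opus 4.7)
The plan is to derandomize a standard random-sparsification procedure using \Cref{basicderand2}. Consider the randomized process in which each edge $e \in E$ has an independent random bit $R_e \in \{0,1\}$ with $\bP[R_e = 1] = p := 1/(2 r \Delta)$, and set
$$
M \;=\; \bigl\{e : R_e = 1 \text{ and } R_{e'} = 0 \text{ for every other edge } e' \text{ sharing a vertex with } e\bigr\}.
$$
By the elementary inequality $\prod_{e'}(1-R_{e'}) \geq 1 - \sum_{e'} R_{e'}$, we have $[[e \in M]] \geq R_e - \sum_{e' \ne e,\, e \cap e' \ne \emptyset} R_e R_{e'}$, so
$$
a(M) \;\geq\; \Phi(\vec R) \;:=\; \sum_e a(e) R_e \;-\; \sum_e \sum_{e' \ne e,\, e \cap e' \ne \emptyset} a(e)\, R_e R_{e'}.
$$
Since each edge has at most $r(\Delta-1)$ conflicting edges, a direct calculation using independence gives $\bE[\Phi(\vec R)] \geq p\, a(H)(1 - pr\Delta) \geq a(H)/(4r\Delta)$.

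The key technical step is to realize $\Phi$ as $\sum_{u \in V \cup E} F_u$ for $1$-round flags $F_u$ on $H$, so that \Cref{basicderand2} applies. For the linear term I place $F_e := a(e) R_e$ at each edge; this depends only on $R_e$. For each ordered pair $(e,e')$ of distinct edges with $e \cap e' \ne \emptyset$, designate a canonical shared vertex $v(e,e') \in e \cap e'$ (say, the smallest-ID element of $e \cap e'$), and place at each vertex
$$
F_v \;:=\; -\!\!\!\sum_{(e,e'):\, v(e,e') = v}\!\!\! a(e)\, R_e R_{e'}.
$$
Every pair $(e,e')$ contributing to $F_v$ satisfies $e, e' \in N(v)$, so $F_v$ depends only on the random bits of edges incident to $v$ and is therefore $1$-round. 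By construction $\sum_u F_u(\vec R) = \Phi(\vec R)$.

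Applying \Cref{basicderand2} in its maximization form (i.e.\ with $-F$ in place of $F$, as noted after \Cref{basicderand}) then produces, in $O(r\Delta)$ rounds, an assignment $\vec\rho$ with $\sum_u F_u(\vec\rho) \geq \bE[\Phi(\vec R)] \geq a(H)/(4r\Delta)$. Each edge can then determine its membership in $M(\vec\rho)$ locally in $O(1)$ additional rounds, and the resulting matching satisfies $a(M(\vec\rho)) \geq \Phi(\vec\rho) = \Omega(a(H)/(r\Delta))$, as required. The only nontrivial design choice is the linearization above: derandomizing the indicator $[[e \in M]]$ directly would require tracking joint statistics at $G$-distance $2$ from $u_e$, forcing the use of a proper coloring of $G^4$ and blowing up the runtime quadratically. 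Routing each quadratic conflict term $R_e R_{e'}$ to a single vertex inside $e \cap e'$ confines the computation to the strictly $1$-round regime that \Cref{basicderand2} handles, which is precisely what makes the $O(r\Delta)$ bound achievable.
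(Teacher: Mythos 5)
Your proposal is correct and takes essentially the same approach as the paper: mark each edge independently with probability $\Theta(1/(r\Delta))$, lower-bound the resulting matching weight by a linear term minus a quadratic conflict term, route the conflict terms to vertices so each $F_v$ is $1$-round in the incidence graph, and apply \Cref{basicderand2}. The only cosmetic difference is that you assign each conflicting pair $(e,e')$ to a single canonical shared vertex, whereas the paper charges the pair once per shared vertex; both give valid lower bounds on $a(M)$, and both expectation calculations land on $\Omega(a(H)/(r\Delta))$.
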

\begin{proof}
  Consider the following 1-round randomized algorithm: we first form an edge-set $L$, wherein each edge $e \in E$ goes into $L$ independently with probability $p = \frac{0.1}{r \Delta}$, and we then form a matching $M$ from $L$ by discarding any pair of intersecting edges. 
  
  It is straightforward to see that the resulting matching $M$ has expected weight $\Omega( \frac{a(H)}{r \Delta})$.  We derandomize this via a pessimistic estimator of $\bE[a(M)]$. Let us define the following flags computed by each edge $e$ and vertex $v$ of $H$:
    $$
  F_e = [[e \in L]] a(e), \qquad \qquad  F_v = \sum_{\substack{e,  e' \in N(v) \\ e \neq e'}} -[[e \in L \wedge e' \in L]] a(e)
$$

We compute $\sum_{u \in V \cup E} \bE[F_u]$ as:
  $$
  \sum_{u \in V \cup E} \bE[F_u] = \sum_{e \in E} a(e) \Pr(e \in L) - \sum_{v \in V} \sum_{\substack{e, e' \in N(v) \\ e' \neq e}} a(e) \Pr( e \in L \wedge e' \in L)
  $$
  and we can estimate
  \begin{align*}
&  \sum_{v \in V} \sum_{\substack{e, e' \in N(v) \\ e \neq e'}} a(e) \Pr( e \in L \wedge e' \in L) \leq  \sum_{v \in V} \sum_{e \in N(v)} a(e) p^2 \deg(v) \leq \frac{0.01}{r^2 \Delta} \sum_{v \in V} \sum_{e \in E} a(e) \leq \frac{0.01 a(E)}{r \Delta}
    \end{align*}
  where the last inequality holds by double-counting, noting that the rank of $H$ is at most $r$.

  This implies that $$
  \sum_{u \in V \cup E} \bE[F_u] = \frac{0.1 a(E)}{r \Delta}  - \frac{ 0.01 a(E)}{r \Delta} \geq \Omega( \frac{a(E)}{r \Delta} )
  $$ 

  Lemma~\ref{basicderand2} gives an $O(r \Delta)$-round deterministic algorithm to find random values $R_u$ with $\sum_u F_u \geq \sum_u \bE[F_u]$. Let $L$ denote the corresponding set of marked edges. When we form the matching $M$ from $L$ by discarding edges, we get
  $$
  a(M) \geq \sum_{e \in E} [[e \in L]] a(e) - \sum_{\substack{e,  e' \in N(v) \\ e \neq e'}} [[e \in L \wedge e' \in L]] a(e) = \sum_u F_u
  $$
and we have already seen this is $\Omega( \frac{a(H)}{r \Delta} )$.
\end{proof}

The main limitation of Lemma~\ref{basicderand2}, used on its own, is its polynomial dependence on $\Delta$. For some problems, following \cite{prev}, this can be circumvented by splitting vertices into smaller ``virtual nodes.'' In this way, Lemma~\ref{basicderand2} can be used to obtain edge-colorings in $\polylog \Delta$ rounds. The following definitions are useful to characterize this process:
\begin{definition}[defective and balanced edge colorings]
For a hypergraph $H$ and map $\chi: E \rightarrow \{1, \dots, k \}$, we say $\chi$ is a \emph{$k$-edge-coloring}. We say that $\chi$ is \emph{$t$-defective} if every vertex $v$ has at most $t$ edges of any given color $j$; more formally, if $|N(v) \cap \chi^{-1}(j)| \leq t$ for all $v \in V, j \in \{1, \dots, k \}$. 

We say that a $k$-edge-coloring $\chi$ is \emph{balanced} if it is $t$-defective for $t = O(\Delta/k)$.
\end{definition}
Our more advanced derandomization methods will need balanced edge-colorings. Note that a proper edge coloring corresponds, in our terminology, to a $1$-defective edge coloring; this would use $k = r \Delta$ colors, and hence not be balanced.  The trivial randomized coloring algorithm gives a balanced edge coloring for $t = \log n$. One main contribution of \cite{prev} is to derandomize this, getting a balanced coloring with $t = \polylog n$. However, this dependence on $n$ is not suitable for us; we want $t$ to depend only on local parameters $r, \Delta$. 

A straightforward application of the Lov\'{a}sz Local Lemma (LLL) shows that a balanced edge-coloring exists for $t = \log (r \Delta)$. By iterating the LLL, the degree $\Delta$ can be successively reduced, yielding a balanced edge-coloring with $t = \log r$.  Unfortunately, even the randomized LLL algorithms are too slow for us. We must settle for something slightly weaker, namely, a \emph{partial} defective edge-coloring. This has a simple randomized algorithm which serves as a ``poor man's LLL'': randomly $k$-color the edges and discard vertices with more than $c \Delta/k$ edges of any color. 

 We obtain the following result; the proof is very similar to \cite{prev} and is deferred to Appendix~\ref{degree-split-app}.
\begin{lemma}
  \label{lem-split-crude}
  There is an absolute constant $C > 0$ for which the following holds.
  
Suppose $H = (V,E)$ with a good coloring and an edge-weighting $a$. For any $\delta \in (0,\tfrac{1}{2})$ and integer $k \geq 2$ satisfying $\Delta \geq C k \log \tfrac{r \log k}{\delta}$, there is an $\tilde O(r \log \tfrac{1}{\delta} \log^3 k)$-round deterministic algorithm to find an edge set $E' \subseteq E$ and $k$-edge-coloring $\chi$ of $E'$ such that $a(E') \geq (1-\delta) a(E)$ and $\chi$ is $(4 \Delta/k)$-defective on $E'$.
\end{lemma}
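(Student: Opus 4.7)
The randomized algorithm suggested in the preceding discussion is to independently assign each edge $e \in E$ a uniform color $\chi(e) \in \{1,\dots,k\}$ and then mark a vertex $v$ as \emph{bad} if some color has more than $4 \deg(v)/k$ edges at $v$; we let $E'$ consist of the edges whose endpoints are all non-bad. For each $v$ and color $j$ the random variable $|N(v) \cap \chi^{-1}(j)|$ is binomial with mean $\deg(v)/k$, so a Chernoff bound together with a union bound over the $k$ colors gives $\Pr[v \text{ is bad}] \leq k \exp(-\Omega(\deg(v)/k))$. Under the hypothesis $\Delta \geq C k \log(r \log k / \delta)$ with $C$ sufficiently large, this is at most $\delta/r$; since each edge has at most $r$ endpoints, $\Pr[e \notin E'] \leq \delta$, so $\bE[a(E')] \geq (1-\delta) a(E)$.

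\textbf{Derandomization via multi-phase splitting.} To obtain the efficient derandomization, I would sample the $\log k$ random bits of $\chi(e)$ in $s = \lceil \log_2 k \rceil$ sequential phases, each of which refines the current coloring by splitting every color class in two. Before the phases begin, each vertex $v$ is split into $\lceil \deg(v)/D \rceil$ virtual nodes of degree at most $D$, where $D = \tilde O(\log^2 k \cdot \log(r \log k / \delta))$; this bounds the degree of the virtual hypergraph on which \Cref{basicderand2} will be invoked in each phase. In phase $i$ I would use a Chernoff-MGF pessimistic estimator of the form $\bE[e^{\pm\lambda(Y_{u,c} - (1\pm \alpha_i) d_{u,c}/2)}]$ for each virtual node $u$ and each current color class $c$ at $u$; this estimator factorizes over the edges of the class incident to $u$, and after weighting each virtual-node term by its incident edge weight, its total upper-bounds the expected weight of edges that will be lost in phase $i$. \Cref{basicderand2} then deterministically fixes the phase-$i$ bit of each edge without increasing this sum.

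\textbf{Parameters and runtime.} Set the per-phase deviation $\alpha_i = \Theta(1/s)$, so that the compound factor $(1+\alpha_i)^s$ after all $s$ phases is at most $4$; this guarantees that the final color-class size at each virtual node is at most $4D/k$, hence at each original vertex is at most $4\Delta/k$, giving the desired $(4\Delta/k)$-defective coloring. Choose the virtual node degree $D$ large enough that each per-phase Chernoff failure probability is at most $\delta/(rs)$; combined with a small additive slack in the deviation event to handle late phases where $d_{u,c}$ becomes small, this gives $D = \tilde O(\log^2 k \log(r \log k/\delta))$. Each phase invokes \Cref{basicderand2} on the virtual hypergraph for $\tilde O(rD)$ rounds, so the total runtime is $\tilde O(s \cdot rD) = \tilde O(r \log^3 k \log(r \log k/\delta)) = \tilde O(r \log(1/\delta) \log^3 k)$.

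\textbf{Main obstacle.} The hardest part is arranging the pessimistic estimator so that it (a) dominates the probability of each phase's Chernoff deviation event, (b) decomposes as a sum of per-virtual-node terms in the form required by \Cref{basicderand2}, and (c) gives usable concentration uniformly in $i$, including in late phases where the sub-class $d_{u,c}$ at a virtual node can be as small as $\Theta(D/k)$. The hypothesis $\Delta \geq C k \log(r \log k/\delta)$ is precisely what is needed to choose $D$ (and the associated additive slack) so that all of these constraints close simultaneously: the total failure budget across $s$ phases and $r$ endpoints per edge matches the weight-loss tolerance $\delta a(E)$, and the compounded multiplicative deviation $(1+\alpha_i)^s$ stays within the allowed factor of $4$.
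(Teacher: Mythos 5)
Your high-level strategy — $s=\Theta(\log k)$ binary-refinement phases, each derandomized on a virtual-node hypergraph via \Cref{basicderand2} — matches the paper's, which proves the lemma by iterating a single-stage degree-splitting sub-lemma. However, there is a genuine gap in the way you set up the virtual nodes. You split each vertex \emph{once} before the phases begin into nodes of degree $D = \tilde O(\log^2 k \cdot \log(r\log k/\delta))$, and then carry that fixed partition through all $s$ phases. In late phases the mean size of a color class at a virtual node is $\approx D/2^i$, which at the last phase is only $\approx D/k$. For a multiplicative deviation $\alpha_i = \Theta(1/s)$ to have Chernoff failure probability $\leq \delta/(rs)$, you need the mean to be $\gtrsim s^2\log(rs/\delta)$, so you would need $D/k \gtrsim \log^2 k \cdot \log(r\log k/\delta)$, i.e.\ $D \gtrsim k\log^2 k\log(r\log k/\delta)$ — a factor of $k$ larger than you chose, which would blow up the round complexity to $\tilde O(rk\log^3 k\log(1/\delta))$. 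Your proposed ``additive slack'' does restore the Chernoff bound, but it does not restore the quantitative conclusion: any useful additive slack must be $\gtrsim \log(rs/\delta)$ per virtual node per phase, and summing this over the $\approx \Delta/D$ virtual nodes of a vertex $v$ (and over the $s$ phases, or even just the last one) yields a color-class bound at $v$ of order $\Delta \cdot \frac{s\log(rs/\delta)}{D} \approx \Delta/\log k$, far worse than the required $4\Delta/k$.

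The paper sidesteps this by \emph{re-partitioning into virtual nodes at every stage}: stage $i$ operates on the current edge-set $T_{i,j}$ (which has max degree $\Delta_i \approx \Delta/2^i$), and its own fresh virtual-node partition with per-node degree $\Theta(\alpha)$ for a fixed $\alpha = \Theta(s^2\log(rs/\delta))$. This way the quantity being split in half always has mean $\Theta(\alpha)$, which is exactly the size needed for a $1/s$-relative Chernoff deviation, at every stage uniformly — there is no small-mean late-phase regime at all. A second, smaller issue: even if $D$ were chosen large enough, bounding the per-virtual-node final color class by a fixed $4D/k$ and multiplying by $\lceil \deg(v)/D\rceil$ gives $4\Delta/k + O(D/k)$, not $4\Delta/k$; the paper avoids this by using a \emph{relative} bound $(1+\epsilon)\deg(u)/2$ per virtual node $u$, so that summing over the $u$'s belonging to $v$ gives exactly $(1+\epsilon)\deg(v)/2$ with no additive penalty. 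If you restructure your argument to re-split the virtual nodes at each phase and use relative (rather than fixed) per-node thresholds, the rest of your plan (including using an MGF pessimistic estimator in place of the paper's combinatorial indicator estimator, which is harmless since \Cref{basicderand2} treats $F_u$ as a black box) goes through.
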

In particular, note that the coloring $\chi$ is balanced.

\section{Derandomization without proper vertex coloring}
\label{sec:derand-noprop}
The use of a proper vertex coloring $\chi$ in Lemma~\ref{basicderand}  is somewhat limiting. In this section, we relax this condition. To explain briefly: suppose we try to use the conditional expectation method of Lemma~\ref{basicderand} with a non-proper coloring $\chi$. In this case, vertices with the same color may interfere with each other, so their contributions would be computed incorrectly.  To avoid these errors, we can carefully construct the statistic $\Phi= \sum_u F_u$ to be ``multilinear'' with respect to the coloring $\chi$. This avoids problematic non-linear interactions between vertices. (Linear interactions do not cause problems.)  This is quite different from \Cref{basicderand}, in which the function $F_u$ is almost arbitrary and is treated in a black-box way.

The derandomization result we get this way may seem very abstract. We follow with an example showing how it applies to degree splitting and concentration bounds for sums of random variables.

Note that \cite{ks18} uses a similar derandomization strategy based on a non-proper vertex coloring, but the error analysis is very different. The algorithm of \cite{ks18} \emph{ignores} all interactions between vertices with the same color, followed by a postprocessing step to correct the resulting errors. By contrast, our statistic $\Phi$ is an approximation to the true statistic of interest (incurring some error),  but the algorithm handles it ``correctly'' and incurs no further error in derandomizing it.

\subsection{The derandomization lemma}
Consider a graph $G = (V,E)$ and some global statistic $\Phi: \mu^V \rightarrow \mathbb R$ which is a function of the random bits $R_v$. Each vertex may have some additional, locally-held state; for example, we may be in the middle of a larger multi-round algorithm and so each vertex will have  information about its $t$-hop neighborhood. The function $\Phi$ may depend on this vertex state as well; to avoid burdening the notation, we do not explicitly write the dependence.

The required multilinearity condition is defined in terms of the directional derivative structure of the function $\Phi$. Formally, for any vertex $v \in V$ and any pair of values $u, u' \in \mu$, the \emph{derivative} is the function $D_{v,u,u'} \Phi: \mu^{V - \{v \}} \rightarrow \mathbb R$ defined as follows:
\begin{align*}
  (& D_{v, u,u'} \Phi) (x_1, \dots,  x_{v-1}, x_{v+1}, \dots, x_n ) \\
  & \qquad \qquad \qquad = \Phi( x_1, \dots, x_{v-1}, u , x_{v+1}, \dots, x_n) - \Phi( x_1, \dots, x_{v-1}, u', x_{v+1}, \dots, x_n)
\end{align*}
Observe that $D_{v,u,u} \Phi = 0$ and $D_{v,u,u'} \Phi = -D_{v,u',u}$.

We say that the function $\Phi$ is \emph{ uncorrelated} for vertices $v, v'$ if for any values $u, u'$, the function $D_{v,u,u'} \Phi (x_1, \dots, x_n)$ does not depend on the value of $x_{v'}$. 

Note that $D_{v,u,u'} \Phi$ is itself a function on a smaller vertex set $V - \{v \}$, so we can talk about its derivatives as well. Thus, an equivalent condition for $\Phi$ to be uncorrelated for vertices $v_1, v_2$ is that for any values $u_1, u_1', u_2, u_2'$, the function $D_{v_1, u_1, u_1'} D_{v_2, u_2, u_2'} \Phi$ is identically zero.

\textbf{Remark: binary-valued probability spaces.}
In many applications,  the underlying probability space $\mu$ is over the set $\{0, 1 \}$. We then write $D_{v}$ as shorthand for $D_{v, 1,0}$. Note that the other directional derivatives for $v$ can be obtained from $D_{v,1,0} \Phi$, since $D_{v,0,0} \Phi = D_{v,1,1} \Phi = 0$ and $D_{v,0,1} \Phi = -D_{v,1,0} \Phi$.   Also, $v, w$ are  uncorrelated iff $D_v D_w \Phi$ is identically zero.
 
We are now ready to state our main derandomization lemma.
\begin{lemma}
\label{basicderand3}
Suppose we have a vertex coloring $\chi: V \rightarrow \{1, \dots, k \}$ (not necessarily proper) for $G$, where the potential function $\Phi: \mu^V \rightarrow \mathbb R$ has the following properties:
\begin{enumerate}
\item[(A1)] For any distinct vertices $v, w$ with $\chi(v) = \chi(w)$, the function $\Phi$ is  uncorrelated for $v, w$.
\item[(A2)] For any vertex $v$ and values $u, u'$,  the function $D_{v, u, u'} \Phi (x_1, \dots, x_n)$ can be locally computed by $v$ given the values of $x_w$ for $w \in N[v]$.
\end{enumerate}

Then there is a deterministic $O(k)$-round algorithm to determine values $\vec \rho$ for the random bits,  such that $\Phi(\vec \rho) \leq \bE[ \Phi (\vec R) ]$.
\end{lemma}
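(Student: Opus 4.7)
The plan is a method-of-conditional-expectations argument carried out one color class at a time. Process $C_i := \chi^{-1}(i)$ for $i = 1, 2, \dots, k$ in order; at the start of stage $i$, values $\rho_w$ have been fixed for every $w$ with $\chi(w) < i$. Let
\[
\Psi_i(\vec x) := \bE\bigl[\Phi(\vec R) \bigm| R_w = \rho_w \text{ for } \chi(w) < i,\ R_v = x_v \text{ for } v \in C_i\bigr],
\]
viewed as a function of $\vec x \in \mu^{C_i}$. In a single round I will pick an assignment $\vec \rho_{C_i}$ with $\Psi_i(\vec\rho_{C_i}) \le \bE_{\vec R_{C_i}}[\Psi_i(\vec R_{C_i})]$; chaining these inequalities from $i=1$ through $i=k$ yields $\Phi(\vec \rho) \le \bE[\Phi(\vec R)]$, and the total round count is $O(k)$.

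The key structural observation is that, under (A1), $\Psi_i$ is \emph{additively separable} across $C_i$: there exist $f_{i,v}\colon \mu \to \mathbb R$ and a constant $c_i$ with
\[
\Psi_i(\vec x) \;=\; \sum_{v \in C_i} f_{i,v}(x_v) + c_i.
\]
Two facts give this. First, the difference operators $D_v, D_w$ for $v, w \in C_i$ commute with the conditional expectation defining $\Psi_i$ (since neither coordinate is being conditioned on), so (A1) lifts from $\Phi$ to $\Psi_i$: the second-order difference $D_{v,s,s'} D_{w,t,t'} \Psi_i$ is identically zero for distinct $v,w \in C_i$. Second, any function on a product space whose pairwise second differences vanish is a sum of univariate functions plus a constant; one picks a base point $\vec u_0 \in \mu^{C_i}$, writes $\Psi_i(\vec x) - \Psi_i(\vec u_0)$ as a telescoping sum of first differences in one coordinate at a time, and uses pairwise-uncorrelatedness to strip the dependence on the other coordinates out of each first difference. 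Once the decomposition is in hand, each $v \in C_i$ picks $\rho_v \in \arg\min_{u \in \mu} f_{i,v}(u)$; since a minimum is bounded above by an average, $\sum_v f_{i,v}(\rho_v) \le \bE[\sum_v f_{i,v}(R_v)]$, which is exactly the required inequality on $\Psi_i$.

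For locality and the round count: by (A2), vertex $v$ can evaluate $D_{v,u,u'} \Phi$ from the values $x_w$ at $w \in N[v]$; by (A1), this derivative is constant in $x_w$ for every other $w \in C_i$. So after $v$ gathers the assigned $\rho_w$ at its immediate neighbors (one round of communication), it computes the differences $f_{i,v}(u) - f_{i,v}(u')$ by averaging over $R_w \sim \mu$ restricted to $w \in N(v)$ with $\chi(w) > i$ --- a purely local computation, since only $N(v)$ enters and $\mu$ is part of the input. Ranking the values $f_{i,v}(u)$ over $u \in \mu$ then identifies a minimizer, and the stage is complete. The main obstacle is the separable-decomposition lemma over a general probability space $\mu$ (not just binary); this is essentially the discrete analog of ``vanishing mixed partial derivatives implies additive separability,'' and the telescoping argument sketched above handles it once expectation has been shown to commute with the difference operators $D_v$ for $v \in C_i$.
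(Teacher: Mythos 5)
Your proof is correct and follows the same underlying mechanism as the paper's: process color classes sequentially, let all vertices of a color minimize in parallel, and use (A1) to justify the parallelism. The paper phrases the parallel-minimization step as a telescoping chain of conditional-expectation inequalities, while you make the structure explicit via the observation that (A1) forces $\Psi_i$ to be additively separable across $C_i$; these are two presentations of the same fact, and the locality and round-count arguments are likewise identical in substance.
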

\begin{proof}
  We proceed through stages $i = 1, \dots, k$; at each stage $i$, every vertex  $v$ with $\chi(v) = i$ selects some value $\rho_v$ to minimize $\bE[ \Phi \mid R_v = \rho_v]$, and permanently commits to $R_v = \rho_v$.

 To analyze this process, suppose that we are at stage $i$ and we have fixed the values $\rho_w$ for every vertex $w$ with $\chi(w) < i$. All expectation calculations will be conditioned on the values of $R_w$ for such vertices $w$. 
  
  We first claim each vertex $v$ can determine the value $\rho_v$ in $O(1)$ rounds. For, observe that $u$ minimizes the conditional expectation $\bE[ \Phi \mid R_v = u]$ if and only if $\bE[ D_{v, u, u'} \Phi] \leq 0$ for all values $u'$. In $O(1)$ rounds $v$ can query the values of $\rho_w$ for $w \in N[v]$. By Property (A2), this is enough to determine $\bE[ D_{v, u, u'} \Phi ]$ for all $u, u'$. This allows $v$ to select $\rho_v$ appropriately.
  
  Next, we claim that the conditional expectation of $\Phi$ does not increase during stage $i$. Suppose that the color-$i$ vertices are $v_1, \dots, v_s$ and they select values $\rho_{v_1}, \dots, \rho_{v_s} $ respectively. Note that, by Property (A1), the value of $D_{v_j,u, u'} \Phi$ does not depend on the values $R_{v_1}, \dots, R_{v_{j-1}}$.

  We claim now that for all $j = 1, \dots, s$ we have
  \begin{equation}
  \label{bbff1}
  \bE[ \Phi \mid R_{v_1} = \rho_{v_1}, \dots, R_{v_j} = \rho_{v_j}] \leq \bE[ \Phi \mid R_{v_1} = \rho_{v_1}, \dots, R_{v_{j-1}} = \rho_{v_{j-1}}]
  \end{equation}
  To show this, we  calculate:
  \begin{align*}
  &  \bE[ \Phi \mid R_{v_1} = \rho_{v_1}, \dots, R_{v_{j-1}} = \rho_{v_{j-1}}] = \sum_{u'} \bE[ \Phi \mid R_{v_1} = \rho_{v_1}, \dots, R_{v_{j-1}} = \rho_{v_{j-1}}, R_{v_j} = u] \mu(u') \\
    & \qquad = \bE[ \Phi \mid R_{v_1} = \rho_{v_1}, \dots, R_{v_{j-1}} = \rho_{v_{j-1}},  R_{v_j} = \rho_{v_j}] \\
    & \qquad \qquad - \sum_{u'} \bE[ D_{v,\rho_{v_j}, u'} \Phi \mid R_{v_1} = \rho_{v_1}, \dots, R_{v_{j-1}} = \rho_{v_{j-1}}] \mu(u') \\
    & \qquad = \bE[ \Phi \mid R_{v_1} = \rho_{v_1}, \dots, R_{v_{j-1}} = \rho_{v_{j-1}},  R_{v_j} = \rho_{v_j}] - \sum_{u'} \bE[ D_{v,\rho_{v_j}, u'} \Phi] \mu(u') \qquad \text{Property (A1)} \\
      & \qquad \geq \bE[ \Phi \mid R_{v_1} = \rho_{v_1}, \dots, R_{v_{j-1}} = \rho_{v_{j-1}},  R_{v_j} = \rho_{v_j}]  \qquad \text{by our choice of $\rho_v$}
  \end{align*}

  From Eq.~(\ref{bbff1}) and induction we have $\bE[ \Phi \mid R_{v_1} = \rho_{v_1}, \dots, R_{v_s} = \rho_{v_s}] \leq \bE[ \Phi ]$, as desired.
\end{proof}

Lemma~\ref{basicderand} can be viewed a special case of Lemma~\ref{basicderand3}: for, consider a $1$-round LOCAL algorithm which computes a function $F_v$ for each vertex $v$. The potential function $\Phi = \sum_{v \in V} F_v$ satisfies Lemma~\ref{basicderand3} with respect to the graph $G^{2}$ and a proper vertex coloring $\chi$ of $G^2$. In particular, condition (A1) is satisfied since any vertices $v, w$ of the same color must have distance at least $2$ in $G$ and so $v,w$ are uncorrelated for $\Phi$.

\subsection{Example: derandomizing edge-splitting}
\label{toy-sec} For better motivation, consider the following random process. For a hypergraph $H = (V,E)$, each edge $e \in E$  is chosen for a set $L$ with probability $1/2$. We want to approximately split each vertex; formally, we say a vertex $v$ is \emph{bad} if either $\deg_L(v) \geq t$ or $\deg_{E-L}(v) \geq t$, for some threshold value $t = \tfrac{\Delta}{2} (1+\epsilon)$. We would like to select $L$ so that there are no bad vertices. We assume that $\Delta \gg \frac{\log n}{\epsilon^2}$, so that randomly-chosen $L$ has  this property with high probability.

We will use a powerful result of \cite{chernoff-hoeffding} relating Chernoff bounds to symmetric polynomials. If we define the indicator variables $X_e = [[ e \in L ]]$, we have the following inequality for any integer $w \leq t$:
$$
[[ \deg_L(v) \geq t ]] \leq \frac{ \binom{ \deg_L(v)}{w}  }{ \binom{ t}{w}} = \frac{ \sum_{ W \in \binom{N(v)}{w}} \prod_{e \in W} X_e} { \binom{t}{w}}
$$
Taking expectations, this gives:
\begin{equation}
  \label{cbeqn1}
\Pr( \deg_L(v)  \geq t ) \leq \beta \sum_{W \in \binom{N(v)}{w}} \bE[\prod_{e \in W} X_e] = \beta \tbinom{\deg(v)}{w} 2^{-w}
\end{equation}
where $\beta = 1/\binom{t}{w}$ is a normalization term. 
Further calculations of \cite{chernoff-hoeffding} show that for $w = \lceil t - \mu \rceil$ the RHS is at most the well-known Chernoff bound $( \frac{e^{\epsilon}}{(1+\epsilon)^{1+\epsilon}})^{\mu}$ for $\mu = \tfrac{\Delta}{2}$. Smaller values of $w$ also yield powerful concentration bounds, with probability bounds roughly inverse exponential in $w$.

Thus, as a proxy for the number of bad vertices, a natural pessimistic estimator is given by:
$$
\Phi(L) =  \beta \sum_v \tbinom{ \deg_L(v) }{w} + \tbinom{\deg_{E-L}(v)}{w} = \beta \sum_v \sum_{W \in \binom{N(v)}{w}}\Bigl( \prod_{e \in W} X_e + \prod_{e \in W} (1 - X_e) \Bigr)
$$

If any vertex $v$ is bad, then this implies that $\Phi(L) \geq 1$. When $\Delta \gg \frac{\log n}{\epsilon^2}$ and $w = \Theta( \frac{\log n}{\epsilon})$, one can check that $\bE[\Phi(L)] \leq \frac{1}{\poly(n)}$. Thus, with these parameters, Markov's inequality applied to $\Phi(L)$ shows that with high probability there are no bad vertices.

To derandomize this, we will apply \Cref{basicderand3} to the graph $G = \text{Inc}(H)^2$. We only care about the nodes of $G$ corresponding to edges of $H$; the nodes of $G$ corresponding to vertices are immaterial and will be ignored. Thus, the vertex-coloring of $G$ used for \Cref{basicderand3} corresponds to an edge-coloring of $H$. To avoid confusion, we will phrase our algorithm directly in terms of the edge-coloring of $H$, without explicitly translating back into the underlying graph $G$.

Let us now suppose that we are given such a balanced $k$-edge coloring $\chi$ of $E$, which has defectiveness $d = O(\Delta/k)$; the choice of $k$ and the role played by $\chi$ will become clear shortly. We want to apply Lemma~\ref{basicderand3} to the graph $G$ and coloring $\chi$ to derandomize statistic $\Phi$, but, unfortunately, $\Phi$ does not satisfy the required property (A1). The problem is that some set $W$ may contain multiple edges with the same color; for a pair of edges $e_1, e_2 \in N(v)$, a term such as $\prod_{e \in W} X_e + \prod_{e \in W} (1 - X_e)$ will have a non-vanishing second derivative $D_{e_1} D_{e_2} \prod_{e \in W} X_e$.

To avoid this, we need another statistic $\Phi'$ which approximates $\Phi$. For any vertex $v$, define $\mathcal U_v$ to be the set of all subsets $W \in \binom{N(v)}{w}$  such that all edges in $W$ have distinct colors. If we restrict the sum to only the subsets $W \in \mathcal U_v$, then we get a statistic which is compatible with \Cref{basicderand3}:
$$
\Phi'(L) = \beta \sum_v | \mathcal U_v \cap 2^L | + |\mathcal U_v \cap 2^{E - L}| = \beta \sum_v \sum_{W \in \mathcal U_v} \Bigl(  \prod_{e \in W} X_e + \prod_{e \in W} (1 - X_e) \Bigr)
$$
  
To show that $\Phi'$ satisfies property (A1), suppose that $\chi(e_1) = \chi(e_2)$. Then any $W \in \mathcal U_v$ contains at most one of $e_1, e_2$, and so the second derivative $D_{e_1} D_{e_2} \beta \bigl( \prod_{e \in W} X_e + \prod_{e \in W} (1 - X_e) \bigr)$ is zero. To show that it satisfies (A2), we compute the first derivative $D_{e_1}$ as:
$$
D_{e_1} \Phi' =  \beta \sum_v \sum_{W \in \mathcal U_v: e_1 \in W} D_{e_1}  \bigl( \prod_{e \in W} X_e + \prod_{e \in W} (1 - X_e) \bigr)
$$
For such $W$, all the other edges $e \in W$ also involve vertex $v$, and so $e$ is a neighbor to $e_1$ in $G^2$. So all such terms can be computed locally by $e_1$.

Applying Lemma~\ref{basicderand3} to $\Phi'$ gives a subset $L$ with $\Phi'(L) \leq \bE[\Phi'(L)]$, and clearly $\bE[\Phi'(L)] \leq \bE[\Phi(L)]$. We next need to argue that the resulting set $L$ has no bad vertices. This step is where we need to keep track of some error terms or slack between $\Phi'$ and $\Phi$.

Suppose that $\deg_L(v) \geq t$ for some vertex $v$. To count $|\mathcal U_v|$, note that we have at least $t$ choices for the first edge $f_1$ in a set $W$ in $\mathcal U_v$. Since $\chi$ has defectiveness $d$, we have at least $t - d$ choices for the second edge $f_2$ of $W$ (which must have a different color than $f_1$), and similarly $t - 2 d$ choices for $f_3$, and so on. Continuing this way, we see that
$$
| \mathcal U_v | \geq \frac{t (t-d)(t - 2d) \dots (t - (w-1) d) }{w!} \geq \binom{t}{w} \bigl(\frac{t - w d}{t} \bigr)^w
$$

For our parameters, we can see that $\bigl(\frac{t - w d}{t} \bigr)^w \geq e^{-O(w^2 d/t)} \geq e^{-O(w^2/k)}$. Hence we have
$$
\Phi'(L) \geq \beta \tbinom{t}{w} e^{-O(w^2/k)}  = e^{-O(w^2/k)}
$$

Our derandomization will ensure that $\Phi'(L) \leq \bE[\Phi'(L)] \leq \bE[\Phi(L)]$. Thus, in order to guarantee there are no bad vertices, we need $\bE[\Phi(L)] \leq e^{-O(w^2/k)}$. Recalling that we have chosen the parameters so that $\bE[\Phi(L)] = \frac{1}{\poly(n)}$ and $w = \Theta(\frac{\log n}{\epsilon^2})$, we thus need to take
$$
k \geq \frac{ -\log \bE[\Phi(L)] }{w^2} = \Omega( \frac{\log n}{\epsilon^2} )
$$

If we can obtain a balanced coloring with this value $k$, then applying Lemma~\ref{basicderand3} takes $O(k) = O(\frac{\log n}{\epsilon^2})$ rounds. Thus, this gives an $O( \frac{\log n}{\epsilon^2})$-round algorithm for degree-splitting. By contrast, the method of \cite{prev} for degree-splitting based on ``virtual nodes''  would require $O( \frac{r \log n}{\epsilon^2})$ rounds. (See Appendix~\ref{degree-split-app} for a further discussion of the virtual node algorithm.) 

This toy example illustrates three important caveats in using Lemma~\ref{basicderand3}. First, we may have some natural statistic to optimize in our randomized algorithm (e.g.  number of bad vertices, total weight of edges retained in a coloring, indicator functions for whether a bad-event has occurred). This statistic typically will not satisfy condition (A1) directly. Instead, we must carefully construct a pessimistic estimator which does satisfy condition (A1).

Second we must allow some slack in our potential function to cover the errors from ignoring the non-linear interactions.  Here, for instance, we need to ensure that $\bE[\Phi(L)]$ is significantly below $1$. By contrast, if we directly used a conditional-expectations method, we would only need to ensure that $\bE[\Phi(L)] < 1$. 

Finally, \Cref{basicderand3} requires an appropriate coloring $\chi$, and the statistic $\Phi'$ will be defined in terms of $\chi$. In our hypergraph matching application, $\chi$ will be a balanced edge coloring. We first obtain this coloring using the derandomization Lemma~\ref{lem-split-crude}. This will also incur some small loss in the weight of the edges, which we have not discussed in this example. Each application of \Cref{basicderand3} is essentially a derandomization within a derandomization: we first use a relatively crude method to obtain $\chi$, and then use it to obtain a more refined bound via \Cref{basicderand3}. 

Our application next in Section~\ref{rnd-sec2} will be similar to this degree-splitting example, but much more  complex. The potential function will place different weights on certain vertices and edges.  In addition, instead of a fixed threshold value $t = \tfrac{\Delta}{2} (1+\epsilon)$ for each vertex $v$, the (effective) threshold value will also depend on the current degree of $v$.

\section{Derandomization of direct rounding}
\label{rnd-sec2}
We now use our derandomization methods to round a fractional matching $h$. As we have discussed, it is convenient to view this combinatorially, without explicit   reference to $h$. We show the following main  result:
\begin{theorem}
  \label{split2}
Let $H = (V,E)$ be a hypergraph of maximum degree $\Delta$ and rank $r$,  with a good coloring and an edge-weighting $a$. There is a $\tilde O(\log^2 \Delta + r \log \Delta)$-round deterministic algorithm to find a matching $M$ with
    $$
a(M) \geq \Omega\left( \frac{a(H)}{r \Delta} \right)
$$
\end{theorem}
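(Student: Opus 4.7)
The plan is to derandomize the ``direct rounding'' random process described in the introduction, slowed down into $s = \Theta(\log(\Delta/\log r))$ halving stages. Starting with $E_0 = E$, at each stage $i = 0, \dots, s-1$ I produce a subset $E_{i+1} \subseteq E_i$ that approximately simulates keeping each edge independently with probability $1/2$: the weight $a(E_{i+1})$ should be close to $a(E_i)/2$, and every surviving vertex $v$ should satisfy $\deg_{E_{i+1}}(v) \leq O(\deg_{E_i}(v)/2)$. After $s$ stages the surviving set $E_s$ has maximum degree $\Delta' = O(\log r)$ and weight $\Omega(a(E)\log r/\Delta)$, and a final call to \Cref{match-prop2} on $E_s$ returns a matching of weight $\Omega(a(E_s)/(r\Delta')) = \Omega(a(E)/(r\Delta))$ in an extra $\tilde O(r\log r)$ rounds, which is absorbed by the overall bound.

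Each halving stage consists of two substeps. First, \Cref{lem-split-crude} with parameter $k = \tilde O(\log \Delta)$ produces a balanced edge-coloring $\chi_i$ of $E_i$ with defectiveness $d_i = O(\Delta_i/k)$, losing only a negligible fraction of the weight, at cost $\tilde O(r \polylog \Delta)$ rounds. Second, I apply \Cref{basicderand3} to the Bernoulli-$1/2$ choices $X_e = [[e \in E_{i+1}]]$, using $\chi_i$ as the vertex coloring for the multilinearity condition (A1). The pessimistic estimator is built along the lines of the toy example in \Cref{toy-sec}: for each vertex $v$ I use a symmetric-polynomial Chernoff surrogate of the form $\binom{\deg_L(v)}{w_i}/\binom{t_v}{w_i}$ with vertex-dependent threshold $t_v = \deg_{E_i}(v)(1+\epsilon_i)/2$, restricting the constituent $w_i$-subsets of $N(v)$ to those of pairwise-distinct $\chi_i$-colors so that (A1) holds, and weighting each overshooting vertex by the total weight $\sum_{e \in N(v)} a(e)$ of the edges it would cost us to discard (so that the potential controls the \emph{weighted} overshoot rather than a mere vertex count). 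This derandomization step runs in $O(k) = \tilde O(\log \Delta)$ rounds per stage.

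The key technical difficulty, and the main conceptual contribution highlighted in the comparison with \cite{prev}, is designing the potential so that per-stage errors do not accumulate into a $\log n$ factor. Rather than insisting that every vertex be balanced at every stage, I allow some vertices to overshoot and simply discard their incident edges; this requires the expected weighted loss at each stage to be only an $O(1/s)$ fraction of $a(E_i)$ so that geometric accumulation across $s$ stages leaves $a(E_s) = \Omega(a(E)\cdot 2^{-s})$. The hard part will be balancing three competing parameter budgets: (i) $w_i$ large enough, together with $\epsilon_i$, that the weighted Chernoff estimate $\sum_v a_v \binom{\deg(v)}{w_i} 2^{-w_i}$ is at most an $O(1/s)$ fraction of $a(E_i)$, keeping in mind that $a$-weighting introduces an extra factor of $r$ from double-counting; (ii) $k$ large enough that the multilinearity slack $e^{O(w_i^2/k)}$ incurred when passing from $\binom{N(v)}{w_i}$ to the color-distinct subfamily $\mathcal U_v$ does not swamp this Chernoff saving; and (iii) $k = \tilde O(\log \Delta)$ so that \Cref{basicderand3} still runs in $\tilde O(\log \Delta)$ rounds per stage. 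Summed across $s = O(\log \Delta)$ stages, the balanced-coloring cost contributes $\tilde O(r \log \Delta)$ and the derandomization cost contributes $\tilde O(\log^2 \Delta)$, yielding the claimed $\tilde O(r \log \Delta + \log^2 \Delta)$ bound.
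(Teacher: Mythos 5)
Your high-level plan — iterated halving with a balanced partial edge-coloring from \Cref{lem-split-crude} feeding a Chernoff-style, color-restricted pessimistic estimator into \Cref{basicderand3}, then a final call to \Cref{match-prop2} — is the same strategy the paper uses (\Cref{hhreduce-prop}, \Cref{lem-splitg2}). But there is a concrete gap in the degree floor you claim, and it breaks the final round budget.

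You assert that after one pass of $s = \Theta(\log(\Delta/\log r))$ halving stages, the surviving set has maximum degree $\Delta' = O(\log r)$, so that the closing call to \Cref{match-prop2} costs only $\tilde O(r\log r)$ rounds. This is not achievable. Each stage requires a balanced $k$-edge-coloring from \Cref{lem-split-crude}, whose hypothesis is $\Delta_i \geq C\,k \log(r\log k/\delta)$. With your budget item (iii) fixing $k = \tilde O(\log\Delta)$ and a per-stage loss $\delta = O(1/s)$, the splitting cannot proceed once $\Delta_i$ drops below $\Omega\bigl(\log\Delta \cdot \log(r\log\Delta)\bigr) = \polylog(\Delta,r)$, which is $\gg \log r$ whenever $\Delta \gg r$. (Shrinking $k$ as $\Delta_i$ shrinks does not save you: the multilinearity slack $e^{O(w_i^2/k)}$ in item (ii) then blows up, forcing $w_i$ down, which in turn weakens the Chernoff bound in item (i) — the three budgets are genuinely jointly constrained and settle at a $\polylog(\Delta,r)$ floor, as they do in the paper where $x = \Theta(w^4\log^{10}(r\Delta))$.) Consequently the final \Cref{match-prop2} call costs $O(r\Delta') = O(r\,\polylog(\Delta,r))$ rounds. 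That is not absorbed by $\tilde O(r\log\Delta + \log^2\Delta)$, because $\tilde O(x) = x\polylog(x)$ only hides $\polylog r$ and $\polyloglog\Delta$ factors here, not arbitrary powers of $\log\Delta$. The paper's fix, which your plan is missing, is to apply the degree-reduction routine \emph{twice}: once to drop $\Delta \to \polylog(r,\Delta)$, then again on the reduced hypergraph to reach $\poly(\log r,\log\log\Delta)$; only then does \Cref{match-prop2} fit inside $\tilde O(r\log\Delta)$.

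A secondary point: your vertex-dependent normalizer $t_v = \deg_{E_i}(v)(1+\epsilon_i)/2$ is undefined (or useless) for vertices whose current degree has fallen below $w_i$, and more generally gives an overly aggressive estimator for low-degree vertices. The paper instead smooths this with an additive term $\binom{\Delta 2^{-i}}{w}$ in its potential $S_i$, so that vertices with $d_i(v) \ll \Delta 2^{-i}$ are effectively ignored by the estimator rather than creating a divide-by-near-zero. You should replace the purely multiplicative threshold with a similar additive guard before you try to make the three-way budget tradeoff close.
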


This value of $a(M)$ is precisely what we would obtain by applying Lemma~\ref{match-prop2} to $H$. Unfortunately, Lemma~\ref{match-prop2} would take $O(r \Delta)$ rounds, which is much too large. 

Consider the following random process to reduce the degree of $H$: each edge $e \in E$ goes into an edge-set $J$ independently with probability $p = x / \Delta$, where $x = \polylog(r, \Delta)$. If any vertex $v$ has $\deg_J(v) \geq \Delta'$, where $\Delta'$ is some chosen threshold value with $\Delta' = \Theta(x)$, we discard all its neighboring edges. The remaining edge-set $J'$ then has maximum degree $\Delta'$ and a simple second-moment calculation shows that $\bE[ \frac{a(J')}{\Delta'} ] \geq \Omega( \frac{a(H)}{\Delta} )$. (See Proposition~\ref{maxmatch1} for further details).

Our main goal is to derandomize this to get $ \frac{a(J')}{\Delta'}  \geq \Omega( \frac{a(H)}{\Delta} )$ (in actuality, not expectation).  Instead of selecting edge-set $J$ in a single stage, we go through $s = \log_2(1/p)$ stages where each edge is retained with probability $1/2$. We let $J_0, \dots, J_s$ denote the edge sets during this process, so that $J_0 = E$ and $J_s = J$. We then use the method of conditional expectations to select a series of edge-sets $E_1, \dots, E_s$ to mimic $J_1, \dots, J_s$. 

Ideally, we would choose $E_i$ such that $\bE[ a(J') \mid J_i = E_i ] \geq \bE[ a(J') ]$.   In this conditional expectation, each edge $e \in E_i$ goes into $J$  with probability $2^i p$.  Unfortunately, the conditional expectation $\bE[ a(J') \mid J_i]$ is a complex, non-linear function of $J_i$.  We instead construct pessimistic estimators $S_0, \dots, S_s$ with $\bE[ a(L') \mid J_i = E_i] \approx S_i$, but which are more amenable to computations.

Now let us assume that we are given a good coloring of $H$ and edge-weight function $a$. For the formal construction, we define the potential functions
$$
S_i = (\tfrac{1}{2 \alpha})^{s-i} a(E_i) - b_i \sum_{v \in V} \Bigl( \tbinom{ \Delta 2^{-i} }{w} + \tbinom{\deg_{E_i}(v)}{w} \Bigr) a(N(v) \cap E_i)
$$
for each $i = 0, \dots, s$, where the parameters are defined as follows:
\begin{align*}
w &= \lceil 2 \log_2(r \log_2 \Delta) \rceil\\
s &= \big \lceil \log_2 \tfrac{\Delta}{ w^4 \log^{10}(r \Delta) } \big \rceil \\
\alpha &= 2^{1/s} \\
x &= \Delta 2^{-s}  \\
\beta &= 16 r (e x/w)^w \\
b_i &= 2^{-(s-i) (w+1)} \alpha^{s-i} / \beta
\end{align*}

In the definition of $\beta$, note that $ e = 2.71828...$ is the base of the natural logarithm.  For brevity, we also use the shorthand $d_i(v) = \deg_{E_i}(v)$ for vertex $v$.

The quantity $S_i$ is supposed to represent the expectation of $a(J')$, conditional on $J_i = E_i$. Note that for an edge $e$ remaining in $J_i$, the probability that $e$ survives to $J$ would be $p = 2^{-(s+i)} = 2^{i} x / \Delta$. With this in mind, let us provide some intuition for the different terms in $S_i$.  

The first term in $S_i$ represents the expected weight of the edges remaining in $J$, which is $a(E_i) p = a(E_i) 2^{-(s+i)}$. We include an additional error term $(\frac{1}{\alpha})^{s-i}$ here, because some edges will need to be discarded when we obtain our defective edge colorings.  

 The second term represents the total weight of the edges \emph{discarded} from $J'$ due to vertices with excessive degree. For a given vertex $v$, this expression is (up to scaling factors) given by:
$$
 \alpha^{s-i}  \times 2^{-(s-i) w} \Bigl( \tbinom{ \Delta 2^{-i} }{w} + \tbinom{d_i(v)}{w} \Bigr) \times 2^{-(s-i)} a(N(v) \cap E_i)
$$

\begin{itemize}
\item  The term $\alpha^{s-i}$ is a  fudge factor for some small multiplicative errors in our approximations.
\item The term $2^{-(s-i)} a(N(v) \cap E_i)$ is the expected value of $a(N(v) \cap J)$
\item The term $2^{-(s-i) w} \tbinom{d_i(v)}{w}$ is  the expected value of $\binom{\deg_J(v)}{w}$, and thus (up to rescaling) an approximation to the probability of discarding vertex $v$  from $J'$ due to having  $\deg_J(v) \gg x$.
\item The term $\binom{\Delta 2^{-i}}{w}$ is used to control the case where $d_i(v)$ is much smaller than $\Delta 2^{-i}$. In this situation, the term $\tbinom{d_i(v)}{w}$ will become negligible compared to $\tbinom{\Delta 2^{-i}}{w}$ and so we can essentially ignore $v$. 
\end{itemize}

We note that the hypergraph MIS algorithm of \cite{harris-mis} used  a similar derandomization algorithm with  a similar potential function as a pessimistic estimator. In particular, they developed the technical tool of the additive term to ``smooth'' errors in the multiplicative terms of concentration inequalities, which we adopt here. 

Our plan is to select a chain of edge subsets $E = E_0 \supseteq E_1 \supseteq \dots \supseteq E_s$, such that $S_0 \leq S_1 \leq \dots \leq S_s$. The key technical result for the algorithm will be the following: 
\begin{lemma}
\label{lem-splitg2}
 If $d_i(v) \leq \Delta 2^{4-i}$ for all vertices $v$, then there is a deterministic  $\tilde O(r + \log \Delta)$-round algorithm to find edge set $E_{i+1} \subseteq E_i$ such that $S_{i+1} \geq S_i$.
\end{lemma}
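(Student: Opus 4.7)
My plan is to apply \Cref{basicderand3} to derandomize the process that independently keeps each edge of $E_i$ with probability $1/2$, using one Bernoulli(1/2) bit $X_e := [[e \in E_{i+1}]]$ per edge. First, apply \Cref{lem-split-crude} to $E_i$ with $k = \polylog(r,\Delta)$ colors and $\delta = 1/\polylog(\Delta)$ to obtain a balanced $k$-edge-coloring $\chi$ of $E_i$ with defectiveness $d = O(\Delta 2^{-i}/k)$; the small weight loss is absorbed by the $\alpha$-slack in $S_i$. This step runs in $\tilde O(r)$ rounds.

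The main construction is a multilinear lower estimator $\hat S_{i+1} \leq S_{i+1}$. The weight term $(1/(2\alpha))^{s-i-1} a(E_{i+1})$ and the additive penalty $b_{i+1}\binom{\Delta 2^{-i-1}}{w} a(N(v)\cap E_{i+1})$ are already linear in $\{X_e\}_{e \in E_i}$ and pass through unchanged. The non-linear penalty expands as
\begin{equation*}
T_v := \binom{d_{i+1}(v)}{w} a(N(v)\cap E_{i+1}) = \sum_W c(W) \prod_{e\in W} X_e,
\end{equation*}
where $W$ ranges over $w$- and $(w{+}1)$-subsets of $N(v)\cap E_i$ and $c(W) := \sum_{e'\in W} a(e')$. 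Because $T_v$ enters $S_{i+1}$ with a minus sign, I need a multilinear \emph{upper} bound $\hat T_v \geq T_v$. Partitioning $W = \bigsqcup_c W_c$ by color class and applying the pointwise inequality $\prod_{e\in W_c} X_e \leq |W_c|^{-1} \sum_{e\in W_c} X_e$ on each class yields
\begin{equation*}
\hat T_v := \sum_W \frac{c(W)}{\prod_c |W_c|} \prod_c \Bigl(\sum_{e\in W_c} X_e\Bigr) \geq T_v.
\end{equation*}
The expansion of $\hat T_v$ is a sum of monomials each using at most one variable per color class of $\chi$; this color-distinctness directly yields property (A1), since $D_{e_1}D_{e_2}\hat T_v = 0$ whenever $\chi(e_1) = \chi(e_2)$. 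Property (A2) holds because every monomial of $D_e \hat T_v$ only involves edges in $N(v)$ for some $v \in e$, hence within two hops of $e$ in $\text{Inc}(H)$.

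To verify $\bE[\hat S_{i+1}] \geq S_i$, a direct calculation using $\bE[X_e] = 1/2$, $\bE[\binom{d_{i+1}(v)}{w}] = 2^{-w}\binom{d_i(v)}{w}$, and the scalings $(1/(2\alpha))^{s-i-1} = 2\alpha \cdot (1/(2\alpha))^{s-i}$ and $b_{i+1}/b_i = 2^{w+1}/\alpha$ shows that if $\hat T_v$ were replaced by $T_v$, the weight term of $\bE[S_{i+1}]$ would exceed that of $S_i$ by a factor $\alpha$ and each penalty term would shrink by $1/\alpha$, giving $\Theta(1/s)$ per-stage slack. The error $\bE[\hat T_v] - \bE[T_v]$ comes from subsets $W$ with color repeats; a birthday-type argument bounds the multiplicative inflation of $\bE[T_v]$ by $1 + O(w^2 d/d_i(v)) = 1 + O(w^2/k)$ in the large-degree regime $d_i(v) = \Omega(\Delta 2^{-i})$. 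In the small-degree regime $d_i(v) \ll \Delta 2^{-i}$ the birthday bound degrades, but $\binom{d_i(v)}{w}$ is then a factor $(d_i(v)/\Delta 2^{-i})^w$ smaller than the additive term $\binom{\Delta 2^{-i}}{w}$, so the inflation is absolutely negligible compared to the additive penalty. Choosing $k \gg s\, w^2 = \polylog(r,\Delta)$ guarantees the total inflation fits within the per-stage slack, yielding $\bE[\hat S_{i+1}] \geq S_i$.

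Finally, invoke \Cref{basicderand3} with the coloring $\chi$ and objective $-\hat S_{i+1}$ in $O(k) = \tilde O(\log\Delta)$ rounds to obtain $\rho$ with $\hat S_{i+1}(\rho) \geq \bE[\hat S_{i+1}] \geq S_i$, hence $S_{i+1}(\rho) \geq \hat S_{i+1}(\rho) \geq S_i$. The total runtime is $\tilde O(r + \log\Delta)$. The main obstacle is the simultaneous parameter balancing: the coloring must be fine enough ($k \gg sw^2$) for color-collision inflation to stay below the $\Theta(1/s)$ per-stage slack, yet coarse enough ($k = \polylog(r,\Delta)$) that both coloring and derandomization fit in the $\tilde O(r+\log\Delta)$ budget. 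The additive smoothing term $\binom{\Delta 2^{-i}}{w}$ is essential for vertices with $d_i(v) \ll \Delta 2^{-i}$, where the birthday argument fails but the penalty being inflated is small in absolute terms.
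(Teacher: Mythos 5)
Your approach is genuinely different from the paper's and, I believe, can be made to work, so let me compare the two.

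Both constructions build a multilinear surrogate for $S_{i+1}$ after first obtaining a balanced partial coloring via Lemma~\ref{lem-split-crude}, and both then apply Lemma~\ref{basicderand3}. The difference is in how the surrogate handles color collisions. The paper's $\tilde S$ simply \emph{deletes} all monomials with repeated colors (via the sets $\mathcal U_{v,e}$, requiring $\chi(e)$ distinct from the $w$ colors), which makes the penalty term \emph{smaller} than the true one; the nontrivial direction is then $S_{i+1}\geq\tilde S$, which requires a lower bound on the number of rainbow $w$-tuples, proved by a direct combinatorial count (Proposition~\ref{mm3}, Case~II). Your $\hat T_v$ instead \emph{replaces} within-color products by their arithmetic mean, giving a pointwise \emph{over}estimate of the penalty and making $\hat S_{i+1}\leq S_{i+1}$ free by AM--GM on $\{0,1\}$ variables. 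The burden shifts entirely to $\bE[\hat S_{i+1}]\geq S_i$, where you must bound the inflation of $\bE[\hat T_v]$ over $\bE[T_v]$. This is a reasonable trade: you trade a combinatorial lower-count argument for a ``collision inflation'' upper bound, and to first order both errors are $O(w^2 t/y)$ with $y=d_i(v)$, $t$ the defectiveness, which is why the same choice $k=\Theta(sw^2)$ works in either framing. The paper's $\tilde S$ also structurally isolates a distinguished edge $e$ whose color is excluded, which is what makes $\bE[R_{v,e}\mid e\in E_{i+1}]=|\mathcal U_{v,e}|2^{-w}$ factor cleanly in Proposition~\ref{mm1}; your joint $(w,w{+}1)$-subset expansion avoids that device but must account for the resulting lack of independence.

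Two places where your sketch is too loose. First, the line ``$\bE[\binom{d_{i+1}(v)}{w}]=2^{-w}\binom{d_i(v)}{w}$'' cannot be used as stated to analyze the penalty, because $\binom{d_{i+1}(v)}{w}$ and $a(N(v)\cap E_{i+1})$ share variables: the distinguished edge $e'$ may lie inside the $w$-tuple. Computing $\bE[T_v]$ from your own $(w,w{+}1)$-subset expansion gives
\[
\bE[T_v]=\bigl(2^{-w}\tbinom{y-1}{w-1}+2^{-(w+1)}\tbinom{y-1}{w}\bigr)\,a(N(v)\cap E_i)
=2^{-w-1}\bigl(\tbinom{y}{w}+\tbinom{y-1}{w-1}\bigr)\,a(N(v)\cap E_i),
\]
an additive excess of relative size $w/y$ beyond the ``independent'' prediction, which in the small-$y$ regime can swamp the $\Theta(1/s)$ slack from $\alpha$ and must be absorbed by the $\binom{\Delta 2^{-i}}{w}$ term (this is precisely the split handled by Case~I vs.\ Case~II in Proposition~\ref{mm3}). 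Second, the birthday bound $1+O(w^2 t/y)$ accounts only for single collisions; a $W$ with $j$ repeated pairs contributes inflation $2^{j}$ in your $\bE[\hat T_v]-\bE[T_v]$ difference, and you need to sum the resulting geometric-like series and show it stays $O(w^2 t/y)$ overall. This is doable when $w^2 t/y\leq O(w^2/k)\ll 1$, but it is an extra step not present in the paper's deletion-based estimator, where one simply observes $|\mathcal U_{v,e}|\leq\binom{d_i(v)}{w}$. Once those two estimates are done carefully, your chain $S_{i+1}(\rho)\geq\hat S_{i+1}(\rho)\geq\bE[\hat S_{i+1}]\geq S_i$ gives the lemma with the stated round complexity.
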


This lemma is quite involved; we show it next in Section~\ref{splitg2sec}. We will first show some straightforward properties of the potential function, and also show how Theorem~\ref{split2} follows from the lemma. At several places, we will use the elementary inequalities
\begin{eqnarray}
 p^w \tbinom{T}{w} \geq \tbinom{p T}{w} \qquad \qquad \forall p \in [0,1], T \in \mathbb Z_{+}  \label{yg5} \\
  (A/B)^B \leq  \tbinom{A}{B} \leq (e A / B)^B \qquad \text{for integers $A \geq B \geq 1$}   \label{yg52}
  \end{eqnarray}

\begin{proposition}
\label{s0prop}
We have $S_0 \geq \Omega( a(E) x / \Delta )$.
\end{proposition}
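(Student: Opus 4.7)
The plan is to split $S_0$ into its two natural pieces, show that the first (positive) term is already $\Theta(a(E) x / \Delta)$, and then verify that the negative penalty term is at most a constant fraction (in fact, at most half) of it. The whole point of the carefully chosen constants in $\beta$ and the exponents should be precisely that this works out at $i = 0$, so I do not expect any genuine difficulty; the proof should amount to a direct calculation.

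First, I would handle the leading term. Since $\alpha^s = 2$ by definition of $\alpha$, we have $(\tfrac{1}{2\alpha})^s = \tfrac{1}{2^s \alpha^s} = \tfrac{1}{2 \cdot 2^s}$. Combined with $x = \Delta \cdot 2^{-s}$, the leading term is $(\tfrac{1}{2\alpha})^s a(E) = \tfrac{a(E) x}{2\Delta}$, which already has the desired order of magnitude.

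Next, I would bound the penalty term. Using $\deg(v) \le \Delta$, both binomials satisfy $\binom{\deg_{E_0}(v)}{w} \le \binom{\Delta}{w}$ and $\binom{\Delta \cdot 2^0}{w} = \binom{\Delta}{w}$, so the bracketed factor is at most $2\binom{\Delta}{w}$. A double-counting argument (each edge lies in at most $r$ neighborhoods) gives $\sum_{v \in V} a(N(v) \cap E) \le r \cdot a(E)$. Plugging in $b_0 = 2 \cdot 2^{-s(w+1)}/\beta$ (using $\alpha^s = 2$), the penalty is at most
\[
b_0 \cdot 2 \tbinom{\Delta}{w} \cdot r \cdot a(E)
\;=\; \frac{4 r \binom{\Delta}{w}}{\beta \cdot 2^{s(w+1)}}\, a(E).
\]
Applying $\binom{\Delta}{w} \le (e\Delta/w)^w$ from (\ref{yg52}) and the definition $\beta = 16 r (ex/w)^w$, the ratio $\binom{\Delta}{w}/\beta$ becomes at most $\tfrac{1}{16 r} \cdot (\Delta/x)^w = \tfrac{2^{sw}}{16 r}$, since $\Delta/x = 2^s$. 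Therefore the penalty is at most $\tfrac{4 r \cdot 2^{sw}}{16 r \cdot 2^{s(w+1)}}\, a(E) = \tfrac{a(E)}{4 \cdot 2^s} = \tfrac{a(E) x}{4\Delta}$.

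Subtracting, $S_0 \ge \tfrac{a(E) x}{2\Delta} - \tfrac{a(E) x}{4\Delta} = \tfrac{a(E) x}{4\Delta} = \Omega(a(E) x / \Delta)$, as required. The only conceptual subtlety is the double-counting step and making sure one correctly tracks the $\alpha^s = 2$ cancellation; everything else follows from the parameter choices, and in particular the factor $16r$ in $\beta$ is exactly what is needed to absorb the $r$ from double-counting together with a constant slack.
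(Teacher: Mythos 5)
Your proposal is correct and follows essentially the same route as the paper's proof: both bound the leading term by $\frac{a(E)x}{2\Delta}$ via $\alpha^s = 2$ and $2^{-s} = x/\Delta$, and both bound the penalty term by $\frac{a(E)x}{4\Delta}$ using $\tbinom{d_0(v)}{w} \leq \tbinom{\Delta}{w}$, the double-counting inequality $\sum_v a(N(v)\cap E) \leq r\, a(E)$, the estimate $\tbinom{\Delta}{w} \leq (e\Delta/w)^w$, and the definition of $\beta$, arriving at the identical lower bound $S_0 \geq \frac{a(E)x}{4\Delta}$.
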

\begin{proof}
Note that $E_0 = E$. As $\alpha^s = 2$ and $2^{-s} = \frac{x}{\Delta}$,  at $i = 0$ we have
$$
(\tfrac{1}{2 \alpha})^{s-i} a(E) \geq 2^{-s-1} a(E) = \frac{a(E) x}{2 \Delta}
$$

Next, consider some vertex $v$, and we want to estimate the contribution of the term $b_i \bigl( \tbinom{ \Delta 2^{-i} }{w} + \tbinom{d_0(v)  }{w} \bigr) a(N(v) \cap E)$. We have $b_0 = 2^{-s(w+1)} \alpha^s / \beta = \tfrac{2}{\beta} (\tfrac{x}{\Delta})^{w+1}$ and $\tbinom{d_0(v)}{w} \leq \tbinom{\Delta}{w}$, so
$$
b_i  \bigl( \tbinom{ \Delta 2^{-i} }{w} + \tbinom{d_0(v) }{w} \bigr) \leq \tfrac{2}{\beta} (\tfrac{x}{\Delta})^{w+1} (\tbinom{ \Delta}{w} + \tbinom{\Delta}{w}) = \tfrac{4}{\beta} (\tfrac{x}{\Delta})^{w+1} \tbinom{ \Delta}{w}
$$

By Eq.~(\ref{yg52}), this is at most $\tfrac{4}{\beta} (\tfrac{x}{\Delta})^{w+1} (\tfrac{e \Delta}{w} )^w = \frac{4 x}{\beta \Delta} (\frac{e x}{w})^w$, which is equal to $\frac{x}{4 r \Delta}$ by definition of $\beta$. Using this formula, and the fact that $H$ has rank $r$, we then get:
\begin{align*}
&b_i \sum_{v \in V} \bigl( \tbinom{ \Delta 2^{-i} }{w} + \tbinom{d_i(v)  }{w} \bigr) a(N(v) \cap E) \leq  \frac{x}{4 r \Delta}  \sum_{v \in V} a(N(v) \cap E) \leq  \frac{x}{4 \Delta}  a(E)
\end{align*}

Thus  $S_0 \geq \frac{a(E) x}{\Delta} ( \frac{1}{2} - \frac{1}{4} ) \geq \Omega( a(E) x / \Delta ) $.
\end{proof}

\begin{proposition}
\label{vertex-discard-prop}
If we discard from edge set $E_i$ all vertices $v$ with $d_i(v) \geq \Delta 2^{4-i}$, along with their incident edges, then the value $S_i$ does not decrease. 
\end{proposition}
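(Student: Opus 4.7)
The plan is to express $S_i$ after discarding as a difference from $S_i$ before discarding, and show the positive contribution from eliminating the per-vertex penalties of the high-degree vertices dominates the negative contribution from lost edge weight. Let $V_{\text{bad}} = \{v \in V : d_i(v) \geq \Delta 2^{4-i}\}$, let $F = \{e \in E_i : e \cap V_{\text{bad}} \neq \emptyset\}$ be the removed edges, let $E_i' = E_i \setminus F$, and let $S_i'$ denote the potential reevaluated on $E_i'$. Writing $P_i(v) = \bigl(\binom{\Delta 2^{-i}}{w} + \binom{d_i(v)}{w}\bigr)\, a(N(v) \cap E_i)$ for the per-vertex penalty, I would decompose
\[
S_i' - S_i \;=\; -\bigl(\tfrac{1}{2\alpha}\bigr)^{s-i} a(F) \;+\; b_i \sum_v \bigl( P_i(v) - P_i'(v) \bigr)
\]
and seek to lower-bound the second term by the first.

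Deleting edges only shrinks $\binom{d_i(v)}{w}$ and $a(N(v) \cap E_i)$, so every summand $P_i(v) - P_i'(v)$ is nonnegative; moreover, $P_i'(v) = 0$ for every $v \in V_{\text{bad}}$ since all edges incident to such a $v$ lie in $F$. Discarding non-bad contributions and the $\binom{\Delta 2^{-i}}{w}$ piece, I would estimate
\[
\sum_v \bigl(P_i(v) - P_i'(v)\bigr) \;\geq\; \tbinom{\Delta 2^{4-i}}{w} \sum_{v \in V_{\text{bad}}} a(N(v) \cap E_i) \;\geq\; \tbinom{\Delta 2^{4-i}}{w}\, a(F),
\]
where the last inequality is a double-counting bound: each edge of $F$ touches at least one bad vertex. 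The proposition then reduces to the scalar inequality $b_i \binom{\Delta 2^{4-i}}{w} \geq (\tfrac{1}{2\alpha})^{s-i}$.

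The final step is a routine parameter computation. Using \eqref{yg52} to bound $\binom{\Delta 2^{4-i}}{w} \geq (16 \Delta 2^{-i}/w)^w$ and substituting $\Delta 2^{-i} = x \cdot 2^{s-i}$ (from $x = \Delta 2^{-s}$), together with $b_i = 2^{-(s-i)(w+1)}\alpha^{s-i}/\beta$ and $\beta = 16 r (ex/w)^w$, the factors $(x/w)^w$ cancel and the $2^{-(s-i)(w+1)}$ in $b_i$ combines with $(2^{s-i})^w$ from the binomial to leave the requirement $\alpha^{2(s-i)} (16/e)^w \geq 16 r$. Since $\alpha \geq 1$ and the definition of $w$ gives $w \geq 2\log_2 r$, this follows from $(16/e)^w \geq r^{2\log_2(16/e)} > r^5 \geq 16r$ for $r \geq 2$. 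The main obstacle, and the only nontrivial ingredient, is this bookkeeping: the proposition holds precisely because the threshold constant $16$ in $\Delta 2^{4-i}$ was chosen sufficiently larger than the Euler constant $e$ in $\beta$ that the ratio $(16/e)^w$ absorbs the factor of $r$ from $\beta$, leaving room for the $\alpha^{2(s-i)}$ slack.
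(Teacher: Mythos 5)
Your proposal is correct and follows essentially the same route as the paper: the same decomposition of $S_i' - S_i$ into the lost edge weight versus the gained per-vertex penalties, the same observations (non-bad vertices contribute nonnegatively; bad vertices' penalties vanish entirely; $\binom{d_i(v)}{w}\geq\binom{\Delta 2^{4-i}}{w}$), and the same reduction to the scalar inequality $b_i\binom{\Delta 2^{4-i}}{w}\geq(\tfrac{1}{2\alpha})^{s-i}$, with the identical endgame $(16/e)^w/(16r)\geq 1$. The only cosmetic differences are that you keep the lost weight as $a(F)$ while the paper writes it as $\sum_{v\in U}a(N(v)\cap E_i)$ (equivalent by the same double-counting), and you substitute $\Delta 2^{-i}=x\,2^{s-i}$ directly rather than first invoking Eq.~(\ref{yg5}) to reduce to $\binom{16x}{w}$.
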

\begin{proof}
Let $U = \{ v \in V \mid d_i(v) \geq \Delta 2^{4-i} \}$, and we define $E'_i$ to be the edge-set $E_i$ after discarding the vertices in $U$. We also define $S'_i$ to be the resulting potential function, i.e. with $E'_i$ instead of $E_i$. Finally, write $d'_i(v) = \deg_{E'_i}(v) \leq \Delta 2^{4-i}$.

There are three main terms in the difference $S'_i - S_i$:
\begin{align*} 
  S_i' - S_i &= ( \tfrac{1}{2 \alpha})^{s-i} (a(E'_i) - a(E_i)) \\
  & + b_i  \sum_{v \in V-U} \bigl( \tbinom{ \Delta 2^{-i} }{w} + \tbinom{d_i(v)  }{w} \bigr) a(N(v) \cap E_i) - \bigl( \tbinom{ \Delta 2^{-i} }{w} + \tbinom{d'_i(v)  }{w} \bigr) a(N(v) \cap E'_i) \\
  & + b_i \sum_{v \in U} \bigl( \tbinom{ \Delta 2^{-i} }{w} + \tbinom{d_i(v)  }{w} \bigr) a(N(v) \cap E_i) - \bigl( \tbinom{ \Delta 2^{-i} }{w} + \tbinom{d'_i(v) }{w} \bigr) a(N(v) \cap E'_i) 
  \end{align*}

Let us estimate these in turn.  First, we have
  $$
  a(E'_i) - a(E_i) = -a(E_i - E'_i) = -a( \bigcup_{v \in U} N(v) \cap E_i ) \geq -\sum_{v \in U} a(N(v) \cap E_i)
  $$
  
  Next, for $v \in V - U$, we have $d_i(v) \geq d'_i(v)$ and $a(N(v) \cap E_i) \geq a(N(v) \cap E'_i)$, so 
  $$
  \bigl( \tbinom{ \Delta 2^{-i} }{w} + \tbinom{d_i(v)  }{w} \bigr) a(N(v) \cap E_i) - \bigl( \tbinom{ \Delta 2^{-i} }{w} + \tbinom{d'_i(v)  }{w} \bigr) a(N(v) \cap E'_i) \geq 0
  $$
  
  Finally, for $v \in U$, we have $N(v) \cap E'_i = \emptyset$ and $d_i(v) \geq \Delta 2^{4-i}$, so that
  \begin{align*}
&  \bigl( \tbinom{ \Delta 2^{-i} }{w} + \tbinom{d_i(v)  }{w}) a(N(v) \cap E_i) - \bigl( \tbinom{ \Delta 2^{-i} }{w} + \tbinom{d'_i(v)   }{w} \bigr) a(N(v) \cap E'_i) \geq \tbinom{\Delta 2^{4-i} }{w} a(N(v) \cap E_i)
 \end{align*}
 
 Putting these three terms together, we have shown that
 \begin{align*}
   S'_i - S_i &\geq ( \tfrac{1}{2 \alpha})^{s-i} \bigl( - \sum_{v \in U} a(N(v) \cap E_i) \bigr) + b_i  \sum_{v \in V-U} 0 + b_i \sum_{v \in U} \tbinom{\Delta 2^{4-i} }{w} a(N(v) \cap E_i) \\
  &= \sum_{v \in U}  \Bigl( -(\tfrac{1}{2 \alpha})^{s-i} + b_i  \tbinom{\Delta 2^{4-i} }{w} \Bigr) a(N(v) \cap E_i)
  \end{align*}
  
  In order to show that the sum is non-negative, we will show that
  \begin{equation}
    \label{yg6}
  b_i \tbinom{\Delta^{4-i} }{w}  \geq (\tfrac{1}{2 \alpha})^{s-i}
  \end{equation}
  
  Substituting the value of $b_i$, we calculate
  \begin{align*}
    \frac{b_i \tbinom{\Delta^{4-i} }{w}}{ (\tfrac{1}{2 \alpha})^{s-i} } &= \frac{ 2^{-(s-i) (w+1)}  \alpha^{s-i} \tbinom{\Delta^{4-i} }{w}}{ \beta (\tfrac{1}{2 \alpha})^{s-i}} = \frac{2^{-(s-i) w}  \alpha^{2(s-i)} \tbinom{\Delta^{4-i} }{w}}{ \beta }
  \end{align*}

  Using Eq.~(\ref{yg5}) and noting that $\alpha \geq 1$, we thus have
  \begin{align*}
    \frac{b_i \tbinom{\Delta^{4-i} }{w}}{ (\tfrac{1}{2 \alpha})^{s-i} } &\geq \frac{ 2^{-(s-i) w} \tbinom{\Delta 2^{4-i}}{w} }{\beta}  \geq \frac{ \tbinom{ \Delta 2^{4-i} \times 2^{-(s-i)}}{w}}{\beta} = \frac{ \tbinom{2^{4-s} \Delta}{w} }{\beta} = \frac{ \tbinom{16 x}{w} }{\beta}
  \end{align*}
      
  By Eq.~(\ref{yg52}),  we have $\tbinom{16 x}{w} \geq (16 x/w)^w$. Since $\beta = 16 r (e x/w)^w$, we have $\tbinom{16 x}{w} /  \beta \geq (16/e)^w / (16 r)$. From the definition of $w$ and our assumption that $r \geq 2$, this is greater than $1$.
   \end{proof}

\begin{proposition}
\label{hhreduce-prop}
There is a $\tilde O( r \log \Delta + \log^2 \Delta )$-round deterministic algorithm to find a subset $E' \subseteq E$ with maximum degree $\Delta' \leq \polylog(\Delta, r)$ and with $a(E') / \Delta' \geq \Omega( a(E)/\Delta)$.
\end{proposition}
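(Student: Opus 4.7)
\smallskip
\noindent\textbf{Proof proposal.} The plan is to run the iterative chain construction envisioned just before Lemma~\ref{lem-splitg2}: build nested edge sets $E = E_0 \supseteq E_1 \supseteq \dots \supseteq E_s$ whose potentials satisfy $S_0 \leq S_1 \leq \dots \leq S_s$, and then output $E' = E_s$. At stage $i$, I first apply Proposition~\ref{vertex-discard-prop} to discard every vertex $v$ with $d_i(v) > \Delta 2^{4-i}$ together with its incident edges; this can only \emph{increase} $S_i$ and re-establishes the degree hypothesis of Lemma~\ref{lem-splitg2}. I then invoke Lemma~\ref{lem-splitg2} to produce $E_{i+1} \subseteq E_i$ with $S_{i+1} \geq S_i$. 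The discard step is local (one round of communication), and the splitting step takes $\tilde O(r + \log \Delta)$ rounds; with $s = O(\log \Delta)$ stages the total round count is $\tilde O(r \log \Delta + \log^2 \Delta)$, as claimed.

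To extract the degree and weight bounds from the output, observe that at the very last stage we have discarded all vertices with $d_s(v) > \Delta 2^{4-s} = 16 x$, so the maximum degree of $E_s$ is at most $\Delta' = 16 x$. From the definitions $x = \Delta 2^{-s}$ and $s = \lceil \log_2 \tfrac{\Delta}{w^4 \log^{10}(r\Delta)} \rceil$ with $w = O(\log r + \log\log \Delta)$, we get $x \leq \polylog(r,\Delta)$, hence $\Delta' \leq \polylog(r,\Delta)$ as required.

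For the weight bound, chaining the monotonicity inequalities gives $S_s \geq S_0$. Proposition~\ref{s0prop} yields $S_0 \geq \Omega(a(E) x / \Delta)$. On the other hand, inspecting the definition of $S_s$, the second (subtracted) term is manifestly non-negative, so $a(E_s) \geq S_s$. Combining these,
\[
\frac{a(E')}{\Delta'} \;=\; \frac{a(E_s)}{16 x} \;\geq\; \frac{S_0}{16 x} \;\geq\; \Omega\!\left(\frac{a(E)}{\Delta}\right),
\]
which is exactly the required bound.

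The only non-routine step in this outline is verifying that the hypothesis of Lemma~\ref{lem-splitg2} is re-established after each discard step and that the good coloring assumed throughout is preserved (or cheaply rebuilt) as vertices and edges are removed; both are handled by the coloring bookkeeping described after Lemma~\ref{basicderand2}, whose $O(\log^*(r\Delta))$ overhead is absorbed into the $\tilde O(\cdot)$ notation. All other pieces (the conservative counting of stages, the non-negativity of the subtracted term in $S_s$, the cumulative degree threshold $16 x$) are immediate from the definitions.
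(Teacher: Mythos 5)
Your proposal is correct and follows essentially the same route as the paper's own proof: iterate discard-then-split via \Cref{vertex-discard-prop} and \Cref{lem-splitg2} to get the monotone chain $S_0 \leq \dots \leq S_s$, use \Cref{s0prop} for the lower bound $S_0 \geq \Omega(a(E)x/\Delta)$, and note that the subtracted term in $S_s$ is nonnegative so $a(E_s) \geq S_s$. The degree bound $\Delta' = O(x) = \polylog(r,\Delta)$ and the final division by $\Delta'$ are likewise identical to the paper's argument.
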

\begin{proof}
We set $E_0 = E$ and proceed through $s$ stages; at the $i^{\text{th}}$ stage, we discard vertices with degree $d_i(v)$ exceeding $\Delta 2^{4-i}$ and we then apply Lemma~\ref{lem-splitg2} to $E_{i}$ to generate subset $E_{i+1} \subseteq E_{i}$ with $S_{i+1} \geq S_{i}$. We return the final set $E' = E_s$.   Each stage takes $\tilde O(r + \log \Delta)$ rounds and there are $s = O(\log \Delta)$ stages altogether, giving the stated complexity.

We now check the required bound on $a(E')$. Proposition~\ref{s0prop} shows that $S_0 \geq \Omega( a(E) x / \Delta)$. From Lemma~\ref{lem-splitg2} and Proposition~\ref{vertex-discard-prop}, we have $S_s \geq S_{s-1} \geq \dots \geq S_0$. Finally, we have
$$
S_s = a(E_s) - b_s \sum_{v \in V} \bigl( \tbinom{ \Delta 2^{-s} }{w} + \tbinom{d_s(v)  }{w} \bigr) a(N(v) \cap E_s) \leq a(E_s).
$$   

Putting these inequalities together, we therefore have $a(E_s) \geq \Omega( a(E) x / \Delta)$.

Our step of discarding high-degree vertices ensures that $E_s$ has maximum degree $\Delta' = \Delta 2^{4-s} = 16 x = \polylog(\Delta, r)$. Furthermore, we have
\[
a(E_s) / \Delta' \geq \frac{ \Omega(a(E) x/ \Delta) }{ 16 x } \geq \Omega( a(E)/\Delta) \qedhere
\]
\end{proof}

\begin{proof}[Proof of Theorem~\ref{split2}]
Given the initial hypergraph $H = (V,E)$ of maximum degree $\Delta$, we apply Proposition~\ref{hhreduce-prop} to obtain hypergraph $H' = (V,E')$ of maximum degree $\Delta' = \polylog(r, \Delta)$ and such that $a( E') / \Delta' \geq \Omega( a(E) / \Delta )$. Next, apply Proposition~\ref{hhreduce-prop} to hypergraph $H'$, obtaining a hypergraph $H'' = (V, E'')$ of maximum degree $\Delta'' = \polylog(r, \Delta') = \poly( \log r, \log \log \Delta)$ and such that $a(E'')/\Delta'' \geq \Omega( a(E')/\Delta') \geq \Omega( a(E)/\Delta)$.

Finally, apply Proposition~\ref{match-prop2} to hypergraph $H''$, obtaining a matching $M \subseteq E''$ such that $a(M) \geq \Omega( \tfrac{a(E'')}{r \Delta''}) \geq \Omega( \tfrac{a(E)}{r \Delta} )$.

The first application of Proposition~\ref{hhreduce-prop} takes $\tilde O( \log^2 \Delta + r \log \Delta)$ rounds. The second application takes $\tilde O( \log^2 \Delta' + r \log \Delta')$ rounds. The final application of Proposition~\ref{match-prop2} takes $\tilde O(r \Delta'')$ rounds. Noting that $\Delta'' = \poly( \log r, \log \log \Delta)$ and $\Delta' = \poly( \log r, \log \Delta )$, the overall complexity is $\tilde O( \log^2 \Delta + r \log \Delta)$.
\end{proof}

\section{Proof of Lemma~\ref{lem-splitg2}}
\label{splitg2sec}
For a given index $i < s$, our goal now is to find a subset $E_{i+1} \subseteq E_i$ with $S_{i+1} \geq S_i$.  As a starting point, consider the random process wherein each edge $e \in E_i$ goes into $E_{i+1}$ independently with probability $1/2$; it can be checked that $\bE[ S_{i+1} ] \geq S_i$.  Unfortunately, the potential function $S_{i+1}$ is not directly amenable to derandomization by Lemma~\ref{basicderand3}. We will develop an approximating statistic $\tilde S$, which is close to $S_{i+1}$, yet satisfies  properties (A1) and (A2). 

For this section, we will assume that $\Delta$ is larger than any needed constants; note that if $\Delta = O(1)$, then we can use \Cref{basicderand2} to find $E_{i+1}$ in $O(\Delta r) = O(r)$ rounds.

Lemma~\ref{basicderand3} requires an appropriate vertex coloring of the underlying graph $G$; in this case, this will correspond to a defective edge-coloring of $H$. The first step in the derandomization is to use Lemma~\ref{lem-split-crude} to obtain a partial balanced edge-coloring $\chi$, which retains most of the edges by weight. We summarize this in the following result:

\begin{proposition}
  \label{lem-splitg21}
  In $\tilde O(r \polyloglog \Delta)$ rounds, we can generate an edge-set $F \subseteq E_i$ with $a(F) \geq a(E_i)/\alpha$ along with a $t$-defective $k$-edge-coloring $\chi$ of $F$, where we define the parameters
$$
k = \lceil 2048 w^2 \log \Delta \rceil, \qquad t = 2^{6 - i} \Delta / k.
$$
\end{proposition}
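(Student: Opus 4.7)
The plan is to apply \Cref{lem-split-crude} directly to the subhypergraph $(V,E_i)$, with $k$ as given and an appropriately tuned failure parameter $\delta$, so that its guarantees translate into exactly those claimed. By the hypothesis of \Cref{lem-splitg2}, the maximum degree of $(V,E_i)$ is at most $\Delta_i := \Delta 2^{4-i}$. I take $\delta := 1 - 2^{-1/s} = 1 - 1/\alpha$, which lies in $(0,1/2)$ since $s = \Theta(\log \Delta)$ and which makes the conclusion $a(F) \geq (1-\delta)a(E_i)$ of \Cref{lem-split-crude} become precisely $a(F) \geq a(E_i)/\alpha$. With these parameters, \Cref{lem-split-crude} outputs $F \subseteq E_i$ together with a $k$-edge-coloring $\chi$ of defectiveness $4\Delta_i/k = 4 \cdot \Delta 2^{4-i}/k = 2^{6-i}\Delta/k = t$, which matches the stated $t$ exactly. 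The ambient good coloring required by \Cref{lem-split-crude} is obtained from the good coloring of $H$ by the routine $O(\log^*(r\Delta))$-round update noted after \Cref{basicderand2}.

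Two things remain: verifying the numerical precondition $\Delta_i \geq C k \log \tfrac{r \log k}{\delta}$ of \Cref{lem-split-crude} and bounding the running time. For the precondition, the definition $s = \lceil \log_2(\Delta/(w^4 \log^{10}(r\Delta)))\rceil$ gives $\Delta_i \geq 16 \Delta \cdot 2^{-s} \geq 16 w^4 \log^{10}(r\Delta)$. On the other side, $k = \Theta(w^2 \log \Delta)$ and $\log(r \log k/\delta) = O(\log r + \log \log \Delta)$, using $\log k = O(\log \log(r\Delta))$ and $\log(1/\delta) = O(\log \log \Delta)$; so the right-hand side of the precondition is $O(w^2 \log \Delta \cdot (\log r + \log \log \Delta))$. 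Since $w^2 \geq 4 \log^2(r \log \Delta)$ by definition of $w$, the factor $w^2$ already dominates $\log r + \log \log \Delta$, and the extra factor $\log^{10}(r\Delta)/\log \Delta$ in $\Delta_i$ leaves ample slack for any fixed constant $C$ (invoking the standing assumption that $\Delta$ exceeds the constants needed, as made at the start of the section). For the running time, \Cref{lem-split-crude} costs $\tilde O(r \log(1/\delta) \log^3 k) = \tilde O(r \log \log \Delta \cdot (\log\log(r\Delta))^3)$, which is $\tilde O(r \polyloglog \Delta)$ once the $\polylog r$ factors hidden in $\tilde O(\cdot)$ are absorbed, per the convention of the paper.

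I do not foresee substantive obstacles; this proposition is a careful rescaling of \Cref{lem-split-crude} designed precisely so that its defectiveness and weight-retention guarantees align with $t$ and $\alpha$. The only real bookkeeping is verifying the precondition, and that is exactly where the generous headroom in the definitions of $s$, $w$, and $k$ is spent.
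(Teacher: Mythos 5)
Your proof is correct and follows essentially the same route as the paper's: apply \Cref{lem-split-crude} to $(V,E_i)$ with the given $k$ and $\delta = 1 - 1/\alpha$, verify the degree precondition using $\Delta_i \geq 16x = \Theta(w^4 \log^{10}(r\Delta))$ against the right-hand side $\Theta(w^2 \log\Delta)\cdot O(\log r + \log\log\Delta)$, and read off the defectiveness $4\Delta_i/k = t$ and running time $\tilde O(r\log(1/\delta)\log^3 k)$. Your accounting is, if anything, slightly more explicit than the paper's in spelling out $\log(1/\delta) = O(\log\log\Delta)$ and $\log k = O(\log\log(r\Delta))$.
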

\begin{proof}
Here, $E_i$ has maximum degree $\Delta' = \Delta 2^{4-i}$.  We will apply Lemma~\ref{lem-split-crude} to $E_i$ with parameters $k$ and $\delta =  1 - 1/\alpha$. (We assume here that we have been provided a good coloring of $H$.)    We must check that the hypotheses of Lemma~\ref{lem-split-crude} are satisfied, specifically, we need  $\Delta' \geq C k \log( \tfrac{r \log k}{\delta})$. We have $\Delta'  \geq \Delta 2^{4-s} = 16 x$. Also, $\frac{1}{\delta} =  \frac{\alpha}{\alpha - 1} \leq \frac{1}{2^{1/s} - 1} = O(s) = O(\log \Delta)$. Thus, it suffices to show that $x \geq C' k \log(r \Delta \log k)$ for some constant $C'$. Since $k = O( w^2 \log \Delta)$ and $x = \Theta(w^4 \log^{10}(r \Delta))$, this indeed holds for $\Delta$ sufficiently large.

  This gives a coloring $\chi$ of an edge-set $F \subseteq E_i$ with $a(F) \geq (1-\delta) a(E_i)$ which has $k$ colors and has defectiveness $4 \Delta'/k = t$. Lemma~\ref{lem-split-crude} runs in $\tilde O(r \log \tfrac{1}{\delta} \log^3 k) = \tilde O( r \polyloglog \Delta)$ rounds.
\end{proof}

Given $F$ and $\chi$, we come to the heart of the construction. Let us first make a number of definitions.  For a vertex $v \in V$ and edge $e \in F$ define $\mathcal U_{v,e} \subseteq \binom{ N(v) \cap F}{w}$ to be the set of $w$-element set $W = \{ f_1, \dots, f_w \} \subseteq N(v) \cap F$ with the property that the values $\chi(e), \chi(f_1), \dots, \chi(f_w)$ are all distinct. We also define $R_{v,e} = \bigl| \mathcal U_{v,e} \cap 2^{E_{i+1}} \bigr|$.

 We now  define $\tilde S$ as a function of $E_{i+1}$ as follows:
  $$
  \tilde S = ( \tfrac{1}{2 \alpha})^{s-(i+1)} a(E_{i+1}) - \alpha b_{i+1}  \sum_{v \in V} \sum_{e \in N(v) \cap E_{i+1}} \Bigl( R_{v,e} + \tbinom{\Delta 2^{-(i+1)}}{w} \Bigr) a(e)
$$

We will apply Lemma~\ref{basicderand3} with respect to statistic $\tilde S$. This gives an edge-set $E_{i+1} \subseteq F$ with $\tilde S \geq \bE[ \tilde S ]$; here the expectation is taken over the random process wherein edges of $F$ go into $E_{i+1}$ independently with probability $1/2$. In order to show that $E_{i+1}$ has the desired properties, we will show the following chain of inequalities:
\begin{equation}
\label{ineq-chain}
S_{i+1} \geq \tilde S \geq \bE[\tilde S] \geq S_i
\end{equation}

We break this down into a number of smaller claims. 
\begin{proposition}
\label{mm3}
We have $S_{i+1} \geq \tilde S$.
\end{proposition}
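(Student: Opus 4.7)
The plan is to reduce the claimed inequality to a pointwise bound and then verify it by a case split on the current degree $d_v := \deg_{E_{i+1}}(v)$. The first term $(1/(2\alpha))^{s-(i+1)} a(E_{i+1})$ is common to $S_{i+1}$ and $\tilde S$, so after canceling it and dividing by $b_{i+1}$, and after using $a(N(v) \cap E_{i+1}) = \sum_{e \in N(v) \cap E_{i+1}} a(e)$ to match the outer sums, the statement $S_{i+1} \geq \tilde S$ reduces (since $\alpha \geq 1$ and all weights $a(e)$ are nonnegative) to the pointwise inequality
\begin{equation*}
\tbinom{d_v}{w} \leq \alpha\, R_{v,e} + (\alpha-1)\tbinom{\Delta 2^{-(i+1)}}{w}
\end{equation*}
for every $v \in V$ and $e \in N(v) \cap E_{i+1}$. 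Vertices with $d_v < w$ are trivial since $\binom{d_v}{w}=0$.

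Next I would bound $\binom{d_v}{w} - R_{v,e}$ by counting the $w$-subsets $W \subseteq N(v) \cap E_{i+1}$ that fail to lie in $\mathcal U_{v,e}$: either $W$ contains an edge of color $\chi(e)$, or $W$ contains two edges sharing a color. Using the $t$-defectiveness of $\chi$ from Proposition~\ref{lem-splitg21}, the first type contributes at most $t\binom{d_v-1}{w-1}$ subsets and the second at most $\tfrac{t d_v}{2}\binom{d_v-2}{w-2}$ (bounding $\sum_c \binom{n_c}{2} \leq t d_v/2$ since each color class in $N(v)$ has at most $t$ edges). Applying the identities $\binom{d_v-1}{w-1} = \frac{w}{d_v}\binom{d_v}{w}$ and $\binom{d_v-2}{w-2} = \frac{w(w-1)}{d_v(d_v-1)}\binom{d_v}{w}$ and simplifying yields
\begin{equation*}
\tbinom{d_v}{w} - R_{v,e} \;\leq\; \tbinom{d_v}{w}\cdot \frac{2tw^2}{d_v}.
\end{equation*}

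Now I would split on $d_v$ relative to the ``expected'' degree $\Delta 2^{-(i+1)}$. In the typical case $d_v \geq \Delta 2^{-(i+1)}/2$, substituting $t = 2^{6-i}\Delta/k$ with $k = \lceil 2048 w^2 \log \Delta\rceil$ gives $2tw^2/d_v \leq 1/(4\log \Delta)$; since $s \leq \log_2 \Delta$ we have $1-1/\alpha = 1 - 2^{-1/s} \geq (\ln 2)/s \cdot (1 - o(1)) \geq 1/(4\log \Delta)$, so $R_{v,e} \geq \binom{d_v}{w}/\alpha$ and the bound holds with the additive term to spare. In the small-degree case $d_v < \Delta 2^{-(i+1)}/2$, the proxy $R_{v,e}$ may be vacuous, but $\binom{d_v}{w} \leq \binom{\Delta 2^{-(i+1)}/2}{w} \leq 2^{-w}\binom{\Delta 2^{-(i+1)}}{w}$ by the elementary estimate $\binom{n/2}{w} \leq 2^{-w}\binom{n}{w}$; because $w = \lceil 2\log_2(r \log_2 \Delta)\rceil$ we have $2^{-w} \leq (r\log_2\Delta)^{-2}$, which is far below $\alpha - 1 = \Omega(1/\log \Delta)$, so the additive term $(\alpha-1)\binom{\Delta 2^{-(i+1)}}{w}$ alone absorbs $\binom{d_v}{w}$.

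The delicate point, which dictates the choice of parameters in Proposition~\ref{lem-splitg21}, is the tightness of the constants in Case 1: both the collision slack $2tw^2/d_v$ and the tolerance $(\alpha-1)/\alpha$ are of order $1/\log \Delta$, so the $2048\,w^2 \log \Delta$ lower bound on $k$ is precisely what makes the collision term fit. Everything else is a routine binomial manipulation.
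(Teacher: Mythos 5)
Your proof is correct and follows essentially the same structure as the paper's: you reduce to the same pointwise inequality $\alpha R_{v,e} + (\alpha-1)\binom{\Delta 2^{-(i+1)}}{w} \geq \binom{d_{i+1}(v)}{w}$, split on the same degree threshold $\Delta 2^{-(i+2)}$, and in each case invoke the defectiveness $t$ of $\chi$ and the choice of $k$ in the same way. The only variation is in the large-degree case, where you upper-bound the complement $\binom{d_v}{w} - R_{v,e}$ by counting $w$-subsets with a color collision, whereas the paper lower-bounds $R_{v,e}$ directly via a falling-factorial enumeration giving $R_{v,e} \geq (y-tw)^w/w!$; the two estimates are equivalent to first order and both lead to the same conclusion.
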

\begin{proof}
We compute the difference:
{\allowdisplaybreaks
\begin{align*}
S_{i+1} - \tilde S &= -b_{i+1} \sum_{v \in V} \bigl( \tbinom{d_{i+1}(v)}{w} + \tbinom{\Delta 2^{-(i+1)}}{w} \bigr) a(N(v) \cap E_{i+1}) \\
& \qquad \qquad + \alpha b_{i+1}  \sum_{v \in V} \sum_{e \in N(v) \cap E_{i+1}} \bigl( R_{v,e} + \tbinom{\Delta 2^{-(i+1)}}{w} \bigr) a(e) \\
&= b_{i+1} \sum_{v \in V} \sum_{e \in N(v) \cap E_{i+1}} a(e) \Bigl( \alpha R_{v,e} + (\alpha-1) \tbinom{\Delta 2^{-(i+1)}}{w} - \tbinom{d_{i+1}(v)}{w} \Bigr) 
\end{align*}
}

To show this is non-negative, we claim that for any vertex  $v$ and edge $e \in N(v) \cap E_{i+1}$ we have 
\begin{equation}
\label{hnn1}
\alpha R_{v,e} + (\alpha - 1) \tbinom{\Delta 2^{-(i+1)}}{w} \geq \tbinom{d_{i+1}(v)}{w}
\end{equation}
Let $y = d_{i+1}(v)$. There are two cases to show Eq.~(\ref{hnn1}).

\noindent \textbf{Case I: $y \leq \Delta 2^{-(i+2)}$.} Then it suffices to show that $ (\alpha - 1) \tbinom{\Delta 2^{-(i+1)}}{w} \geq \tbinom{\Delta 2^{-(i+2)}}{w}$.

Since $\frac{ \tbinom{\Delta 2^{-(i+1)}}{w} }{ \tbinom{\Delta 2^{-(i+2)}}{w} } \geq (\frac{ \Delta 2^{-(i+1)} }{ \Delta 2^{-(i+2)} })^w = 2^w$ and $(\alpha - 1) = 2^{1/s} - 1 \geq \frac{1}{2 s}$, it suffices to show that $2^w \geq 2 s$. This holds for $\Delta$ sufficiently large because $s \leq 1+ \log_2 \Delta$ and $w \geq 2 \log_2 \log_2 \Delta$.

\noindent \textbf{Case II:  $y > \Delta 2^{-(i+2)}$.} Then, in order to show Eq.~(\ref{hnn1}), it suffices to show that $\alpha R_{v,e} \geq \tbinom{y}{w}$.

Note here that we have
\begin{equation}
\label{hnn23}
\frac{w t}{y} \leq \frac{4 w (\Delta 2^{6-i})/k}{\Delta 2^{-(i+2)}} = \frac{256 w}{k} \leq \frac{256 w}{2048 w^2 \log \Delta} = \frac{1}{8 w \log \Delta}
\end{equation}

In particular, $y \geq w t$. We now claim that we have the bound:
\begin{equation}
\label{uve-bnd}
R_{v,e} \geq (y - t w)^w/w!
\end{equation}

To show Eq.~(\ref{uve-bnd}), note that we can construct a set $W \in \mathcal U_{v,e} \cap 2^{E_{i+1}}$ as follows. First, select some edge $f_1$ in $N(v) \cap E_{i+1}$ with a different color than $e$. Since there are at most $t$ edges with the same color as $e$, there are at least $y - t$ such choices. Next, select edge $f_2 \in N(v) \cap E_{i+1}$ with a different color than $e$ or $f_1$. Again, there are at least $y- 2 t$ such choices. Continue this process to choose edges $f_3, \dots, f_w$; we will have at least $y - t w \geq 0$ choices for each edge $f_j$ in this process. Now observe that any set of edges $W = \{ f_1, \dots, f_w \}$ is counted $w!$ times in this enumeration process. 

As $\tbinom{y}{w} \leq y^w / w!$, the bounds from Eqs.~(\ref{hnn23}, \ref{uve-bnd}) show that:
$$
\frac{ R_{v,e} }{ \tbinom{y}{w} } \geq \frac{ (y - t w)^w / w! }{ y^w / w!} = (1 - \frac{w t}{y})^w \geq (1 - \frac{1}{8 w \log \Delta})^w
$$

We have $1 - \frac{1}{8 w \log \Delta} \geq e^{-\frac{1}{4 w \log \Delta}}$ and $s \leq 1 + \log_2 \Delta$, so for $\Delta$ sufficiently large we have
\[
\frac{ \alpha R_{v,e} }{ \tbinom{y}{w} } \geq \alpha (e^{-\frac{1}{4 w \log \Delta}})^w = 2^{1/s} \times e^{-\frac{1}{4 \log \Delta}} = e^{\frac{\log 2}{s} - \frac{1}{4 \log \Delta}} \geq 1. \qedhere
\]
\end{proof}

\begin{proposition}
\label{mm1}
For the random process wherein each edge $e \in F$ goes into $E_{i+1}$ independently with probability $1/2$, we have $\bE[\tilde S] \geq S_i$.
\end{proposition}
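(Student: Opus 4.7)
The plan is to compute $\bE[\tilde S]$ by linearity and check that each surviving term dominates the corresponding piece of $S_i$. The key structural observation is that for any $W \in \mathcal U_{v,e}$, the definition of $\mathcal U_{v,e}$ forces $\chi(e) \neq \chi(f)$ for every $f \in W$, and in particular $e \notin W$. This independence of the coin flip for $e$ from the coin flips for $W$ is what makes the expectation tractable: $\bE[[[e \in E_{i+1}]] R_{v,e}] = 2^{-(w+1)} |\mathcal U_{v,e}|$ and $\bE[[[e \in E_{i+1}]]] = 1/2$.

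The linear term is immediate: $\bE[a(E_{i+1})] = a(F)/2$, and Proposition~\ref{lem-splitg21} gives $a(F) \geq a(E_i)/\alpha$. Multiplying by $(\tfrac{1}{2\alpha})^{s-(i+1)}$ recovers exactly the first term $(\tfrac{1}{2\alpha})^{s-i} a(E_i)$ of $S_i$. So it remains to show that the expected value of the negative term in $\tilde S$ is at most the magnitude of the negative term in $S_i$.

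For the negative term I would first compute the coefficient identity $\alpha b_{i+1} = 2^{w+1} b_i$ directly from the definition of $b_i$, which conveniently cancels the $2^{-(w+1)}$ from the expectation above. The bound to be proved then separates into two parts. Part A, involving $R_{v,e}$, requires $\sum_{e \in N(v) \cap F} a(e) |\mathcal U_{v,e}| \leq \tbinom{d_i(v)}{w} a(N(v) \cap E_i)$ for each $v$; this follows from $|\mathcal U_{v,e}| \leq \tbinom{|N(v) \cap F|}{w} \leq \tbinom{d_i(v)}{w}$ together with $F \subseteq E_i$. Part B, involving $\tbinom{\Delta 2^{-(i+1)}}{w}$, reduces to the inequality $2^w \tbinom{\Delta 2^{-(i+1)}}{w} \leq \tbinom{\Delta 2^{-i}}{w}$, which is Eq.~(\ref{yg5}) with $p = 1/2$, $T = \Delta 2^{-i}$, again combined with $a(N(v) \cap F) \leq a(N(v) \cap E_i)$. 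Summed across $v$, Parts A and B together match precisely the $\tbinom{d_i(v)}{w}$ and $\tbinom{\Delta 2^{-i}}{w}$ terms in $S_i$.

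The main obstacle is really just verifying that the cascade of parameters ($b_i$, $b_{i+1}$, $\alpha$, $2^w$, and the $(\tfrac{1}{2\alpha})^{s-i}$ scaling) lines up; the potential function was engineered precisely so that this expectation calculation is tight, so no deeper argument is needed beyond the independence observation and the routine inequality Eq.~(\ref{yg5}).
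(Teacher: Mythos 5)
Your proposal is correct and follows essentially the same route as the paper's proof: linearity of expectation, the observation that $e\notin W$ for $W\in\mathcal U_{v,e}$ so that $\bE[[[e\in E_{i+1}]]R_{v,e}]=2^{-(w+1)}|\mathcal U_{v,e}|$, the containment $\mathcal U_{v,e}\subseteq\binom{N(v)\cap E_i}{w}$ giving $|\mathcal U_{v,e}|\le\binom{d_i(v)}{w}$, Eq.~(\ref{yg5}) for the $\binom{\Delta 2^{-(i+1)}}{w}$ term, the bound $a(F)\geq a(E_i)/\alpha$ from Proposition~\ref{lem-splitg21}, and the coefficient identity $\alpha 2^{-(w+1)} b_{i+1}=b_i$. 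No gaps.
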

\begin{proof}
We compute $E[ \tilde S]$ as:
\begin{align*}
( \tfrac{1}{2 \alpha} )^{s-(i+1)} \bE[a(E_{i+1})] - \alpha b_{i+1} \sum_{v \in V} \sum_{e \in N(v) \cap F} \negthickspace\negthickspace \Pr(e \in E_{i+1})  a(e) \Bigl( \bE [ R_{v,e}  \mid e \in E_{i+1} ] + \tbinom{\Delta 2^{-(i+1)}}{w} \Bigr)
\end{align*}

Consider a vertex $v$ and edge $e \in N(v) \cap F$. The edge $e$ goes into $E_{i+1}$ with probability $\tfrac{1}{2}$. Since $e \notin W$ for any $W \in \mathcal U_{v,e}$, the conditional probability of $W$ surviving to $2^{E_{i+1}}$ is exactly $2^{-w}$, and so
\begin{equation}
\label{bgg4}
\bE \bigl[ R_{v,e} \mid e \in E_{i+1}  \bigr] = |\mathcal U_{v,e}| 2^{-w}
\end{equation}

Here $\mathcal U_{v,e} \subseteq \binom{ N(v) \cap E_i}{w}$, so $|\mathcal U_{v,e}| \leq \tbinom{ d_i(v)}{w}$. Also, by Eq.~(\ref{yg5}), we have $\tbinom{\Delta 2^{-(i+1)}}{w} \leq 2^{-w} \tbinom{\Delta 2^{-i}}{w}$. With this inequality and Eq.~(\ref{bgg4}), we have
$$
\bE \bigl[ R_{v,e} \mid e \in E_{i+1}  \bigr] + \tbinom{\Delta 2^{-(i+1)}}{w} \leq 2^{-w} \Bigl( \tbinom{d_i(v)}{w} + \tbinom{ \Delta 2^{-i}}{w} \Bigr)
$$

In addition, we have $\bE[a(E_{i+1})] = \frac{a(F)}{2} \geq \tfrac{a(E_i) }{2 \alpha}$, so $(\tfrac{1}{2 \alpha})^{s-(i+1)} \bE[a(E_{i+1})] \geq (\tfrac{1}{2 \alpha})^{s-i} a(E_i)$.

Putting these contributions together, we see:
\begin{align*}
\bE[\tilde S ] &\geq  ( \tfrac{1}{2 \alpha} )^{s-i} a(E_i) - \alpha b_{i+1} \sum_{v \in V} \sum_{e \in N(v) \cap F}  2^{-w-1}  \Bigl(  \tbinom{d_i(v)}{w} + \tbinom{ \Delta 2^{-i}}{w} \Bigr) a(e) \\
&=   ( \tfrac{1}{2 \alpha} )^{s-i} a(E_i) - \alpha  2^{-w-1} b_{i+1} \sum_{v \in V}  \Bigl( \tbinom{d_i(v)}{w} + \tbinom{ \Delta 2^{-i}}{w} \Bigr) a( N(v) \cap F)
\end{align*}

By direct calculation we see that $\alpha 2^{-w-1} b_{i+1} = b_i$. Also $a(N(v) \cap F) \leq a(N(v) \cap E_i)$, so we have the lower bound: 
\[
\bE[\tilde S] \geq   ( \tfrac{1}{2 \alpha} )^{s-i} a(E_i) - b_i \sum_{v \in V}  \Bigl( \tbinom{d_i(v)}{w} + \tbinom{ \Delta 2^{-i}}{w} \Bigr) a( N(v) \cap E_i) = S_i \qedhere
\]
\end{proof}

\begin{lemma}
\label{mm2}
The set $E_{i+1}$ can be generated in $\tilde O( \log \Delta \polylog r)$ rounds such that $\tilde S \geq \bE[\tilde S]$.
\end{lemma}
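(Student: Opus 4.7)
The plan is to derandomize the $1/2$-retention random process via Lemma~\ref{basicderand3}, using the $t$-defective $k$-edge-coloring $\chi$ of $F$ produced by Proposition~\ref{lem-splitg21}. In Lemma~\ref{basicderand3}'s formalism, the ``vertices'' are the edges $e\in F$ with $\{0,1\}$-valued random bits $X_e=[[e\in E_{i+1}]]$, the coloring is $\chi$, and the underlying communication graph is the line graph of $H$ restricted to $F$ (two edges of $H$ adjacent iff they share a vertex); a single round on this graph can be simulated by $O(1)$ rounds on the incidence graph of $H$. I apply Lemma~\ref{basicderand3} in its maximization form (replacing $\tilde S$ by $-\tilde S$) so that the output satisfies $\tilde S\ge\bE[\tilde S]$.

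I first expand $\tilde S$ as a polynomial in the $X_e$. The term $(\tfrac{1}{2\alpha})^{s-(i+1)}a(E_{i+1})$ and the $\binom{\Delta 2^{-(i+1)}}{w}$ contribution from the double sum are both linear in the $X_e$ and so contribute nothing to any second derivative. The only nonlinear piece is
\[
-\alpha b_{i+1}\sum_{v\in V}\sum_{e\in N(v)\cap F} a(e)\sum_{W\in\mathcal U_{v,e}} X_e\prod_{f\in W} X_f,
\]
a sum of degree-$(w{+}1)$ monomials of a highly constrained form.

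The key step is verifying condition (A1): $D_{e_1}D_{e_2}\tilde S=0$ whenever $\chi(e_1)=\chi(e_2)$. Only the nonlinear piece can contribute, and for a monomial $X_e\prod_{f\in W}X_f$ with $W\in\mathcal U_{v,e}$, both $e_1$ and $e_2$ would have to appear among the indeterminates $\{e\}\cup W$. But by the very definition of $\mathcal U_{v,e}$, the colors $\chi(e),\chi(f_1),\ldots,\chi(f_w)$ of the participating indeterminates are pairwise distinct, so no monomial can host two same-color indeterminates. This is precisely the reason the sum was restricted to $\mathcal U_{v,e}$ rather than all of $\binom{N(v)\cap F}{w}$, and is directly analogous to the toy-example argument in Section~\ref{toy-sec}; I expect this combinatorial check to be where most of the care goes. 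Condition (A2) is then immediate: every monomial in $D_{e_1}\tilde S$ carries a factor $X_{e_1}$, and each remaining indeterminate $X_f$ in such a monomial shares a vertex $v$ of $H$ with $e_1$ (either through $v\in e_1\cap e$ with $f\in W\subseteq N(v)$, or through $v\in e_1\cap f$ with $e\in N(v)$), hence sits at distance~$2$ from $e_1$ in the incidence graph of $H$ and can be queried in $O(1)$ communication rounds.

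With (A1) and (A2) in hand, Lemma~\ref{basicderand3} delivers a deterministic $O(k)$-round algorithm that produces an $E_{i+1}\subseteq F$ with $\tilde S\ge\bE[\tilde S]$. Substituting $k=O(w^2\log\Delta)$ from Proposition~\ref{lem-splitg21} and $w=O(\log r+\log\log\Delta)$ from its definition gives total complexity $O((\log r+\log\log\Delta)^2\log\Delta)=\tilde O(\log\Delta\,\polylog r)$, as claimed.
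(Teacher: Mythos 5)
Your proof is correct and follows essentially the same route as the paper: derandomize the independent $1/2$-retention of edges in $F$ via Lemma~\ref{basicderand3} using the coloring $\chi$ from Proposition~\ref{lem-splitg21}, with (A1) following from the distinct-colors requirement in the definition of $\mathcal{U}_{v,e}$ and (A2) from the fact that all indeterminates appearing in $D_{e_1}\tilde S$ correspond to edges sharing a vertex of $H$ with $e_1$. One tiny phrasing slip --- a monomial of $D_{e_1}\tilde S$ does not ``carry a factor $X_{e_1}$'' (the derivative removes it); you clearly mean that the \emph{contributing} monomials of $\tilde S$ carry such a factor --- but this does not affect the argument.
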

\begin{proof}
Let us define the graph $G = (\text{Inc}(H'))^2$ where $H'$ is the hypergraph $(V,F)$.  Note that $\chi$ can be viewed as a vertex coloring of $G$. We will apply Lemma~\ref{basicderand3} to the potential function $\tilde S$ with respect to $G$ and $\chi$. 
The runtime of Lemma~\ref{basicderand3} will be $O(k) = O(w^2 \log \Delta) = \tilde O( \log \Delta \polylog r)$.  We need to show that the potential function $\tilde S$ satisfies criteria (A1) and (A2). 

Recall that the random process  is that each edge of $F$ goes into $E_{i+1}$ independently with probability $1/2$. More concretely, let us say that each edge $e \in E_{i+1}$ chooses a $1$-bit random quantity $X_e$, and goes into $E_{i+1}$ if $X_e = 1$. We have
$$
R_{v,e} = \sum_{W \in \mathcal U_{v,e}} \prod_{f \in W} X_f
$$
so we can write $\tilde S$ as a polynomial in the $X_e$ variables as:
$$
\tilde S = (\tfrac{ 1}{2 \alpha})^{s-(i+1)} \sum_{e \in F} a(e) X_e - \alpha b_{i+1} \sum_{v \in V} \sum_{e \in N(v) \cap F} a(e) X_e \Bigl( \tbinom{ \Delta 2^{-(i+1)}}{w} + \sum_{W \in \mathcal U_{v,e}} \prod_{f \in W} X_f \Bigr)
$$

We want to compute the derivative $D_{g} \tilde S$ for some edge $g \in F$. By the linearity of the differentiation operator, we can calculate:
\begin{align*}
D_{g} \tilde S &= (\tfrac{ 1}{2 \alpha})^{s-(i+1)}  a(g)  -\alpha b_{i+1} \sum_{v \in V: g \in N(v)} a(g) \Bigl( \tbinom{ \Delta 2^{-(i+1)}}{w} + \sum_{W \in \mathcal U_{v,g}} \prod_{f \in W} X_f \Bigr) \\
& \qquad \qquad \qquad - \alpha b_{i+1} \sum_{v \in V} \sum_{\substack{e \in N(v) \cap F \\ e \neq g}} a(e) \sum_{\substack{W \in \mathcal U_{v,e} \\ g \in W}} \prod_{f \in W - \{g \}} X_f  
\end{align*}

This quantity only depends on the values $X_f$ for edges $f$ such that $f \in W \in \mathcal U_{v,g}$ or such that there is an edge $e$ with $\{f, g \} \subseteq W \in \mathcal U_{v,e}$. 

To show (A1), note that all such edges $f$ have a common vertex $v$ with $g$, and so the nodes are adjacent in $G^2$. Hence,  $g$ can locally compute the value $D_{g} \tilde S$.

To show (A2), note that by definition of $\mathcal U_{v,g}$ and $\mathcal U_{v,e}$, we must have $\chi(f) \neq \chi(g)$. Hence, the value of $D_g \tilde S$ is not affected by $X_f$ for edges $f$ with $\chi(f) = \chi(g)$.
\end{proof}

At this point, we have shown all the inequalities needed in Eq.~(\ref{ineq-chain}). Generating the coloring $\chi$ takes $\tilde O(r \polylog \Delta)$ rounds and applying Lemma~\ref{basicderand3} takes $\tilde O(\log \Delta \polylog r)$ rounds. This concludes the proof of Lemma~\ref{lem-splitg2}.

\section{Finding fractional hypergraph matchings}
\label{find-match}
The hypergraph matching algorithms of Theorems~\ref{main-thm1} and \ref{main-thm2} start by finding a high-weight fractional matching, which we describe in this section.  It is critical here to keep track of how close the fractional matching is to being integral. We use the following definition: 
\begin{definition}[$q$-proper fractional matching]
A fractional matching $h: E \rightarrow [0,1]$ is \emph{$q$-proper} if all the entries of $h$ are rational numbers with denominator $q$.
\end{definition}

Thus, an integral matching is a $1$-proper fractional matching.  There is a simple correspondence between $q$-proper fractional matchings and degree-$q$ hypergraphs: for a hypergraph $H$ and a $q$-proper fractional matching $h$, we define the  \emph{replicate hypergraph} $H^{[h]}$ by taking $q h(e)$ copies of each edge $e$. Note that the edge-set of $H^{[h]}$ is a multi-set. This hypergraph $H^{[h]}$ has maximum degree $q$ and has $a(H^{[h]}) = q a(h)$. This correspondence goes the other way as well: for a hypergraph $H$ of maximum degree $\Delta$, the fractional matching which assigns $h(e) = 1/\Delta$ for every edge is a $\Delta$-proper fractional matching.

We  obtain the fractional matching by using an algorithm of \cite{kmw} for packing or covering LP systems as well as some techniques for quantizing edge weights inspired by \cite{lotker}. These are relatively routine details so we defer the proof to Appendix~\ref{kmw-appendix}. We summarize this as follows:
\begin{lemma}
  \label{find-fmatch1}
Let $H$ be a hypergraph with an edge-weighting $a$. Recall that $a^*(H)$ is the maximum weight fractional matching for $H$.
  \begin{enumerate}   
  \item There is a deterministic $O(\log^2(\Delta r))$-round algorithm to generate a fractional matching $h$ which is $O(\Delta)$-proper and which satisfies $a(h) \geq \Omega(a^*(H))$.
    \item There is a randomized $O(\log r \log(\Delta r))$-round algorithm to generate a fractional matching $h$ which is $O(\log r)$-proper and which satisfies $\bE[a(h)] \geq \Omega(a^*(H))$.
  \end{enumerate}
\end{lemma}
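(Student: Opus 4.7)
The plan is to invoke the distributed packing-LP solver of \cite{kmw} on the standard fractional-matching LP and then quantize the resulting solution down to the target granularity while preserving a constant fraction of the weight. The LP is $\max \sum_{e \in E} a(e) h(e)$ subject to $\sum_{e \in N(v)} h(e) \leq 1$ for all $v$ and $h(e) \geq 0$. Running the deterministic KMW algorithm on this LP produces, in $O(\log^2(\Delta r))$ rounds, a fractional matching $\bar h$ with $a(\bar h) = \Omega(a^*(H))$, though with entries of arbitrary precision.

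For part (1), to make $\bar h$ into an $O(\Delta)$-proper matching I would scale first and round second. Set $\tilde h = \bar h/2$, which halves the weight but leaves slack $\sum_{e \in N(v)} \tilde h(e) \leq 1/2$ at every vertex. Then quantize by $h(e) = \lceil 2\Delta \tilde h(e) \rceil / (2\Delta)$. Since $h(e) \geq \tilde h(e)$ for every edge, we have $a(h) \geq a(\bar h)/2 = \Omega(a^*(H))$. Feasibility holds because each entry is increased by at most $1/(2\Delta)$ and each vertex is in at most $\Delta$ edges, so $\sum_{e \in N(v)} h(e) \leq 1/2 + \Delta \cdot \tfrac{1}{2\Delta} = 1$. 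The resulting $h$ is $(2\Delta)$-proper, settling the deterministic claim.

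For part (2), I would start from the $O(\Delta)$-proper $\bar h$ and randomly sparsify its replicate hypergraph $H^{[\bar h]}$. Keep each edge-replica independently with probability $p = c \log r / \Delta$ for a small constant $c$, and for any vertex whose surviving replica-degree exceeds a threshold $\Delta' = \Theta(\log r)$ discard all its incident survivors. Define $h'(e)$ to be the count of remaining replicas of $e$ divided by $\Delta'$; then $h'$ is $O(\log r)$-proper and feasible by construction. A first-moment calculation shows $\bE[a(h')] = \Omega(a^*(H))$: each replica survives with probability $p$, a Chernoff tail bound keeps the probability of being subsequently discarded a constant uniformly over vertices, and rescaling by $1/\Delta'$ with $p \Delta = \Theta(\Delta')$ yields expected weight $\Theta(a(\bar h))$. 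The $O(\log r \log(\Delta r))$ round bound is reached by combining the $O(1)$-round local sparsification with an appropriately tuned, lower-precision invocation of \cite{kmw} (or its randomized variant) -- one only needs relative accuracy $1/\log r$ here, not $1/\Delta$.

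The deterministic rounding is cleanly handled by the scale-and-round-up trick; the real technical difficulty is the randomized part. The discard step introduces correlations between retained replicas at a vertex and can bias $\bE[a(h')]$ downward, so $p$ and $\Delta'$ have to be tuned so that each vertex is discarded with only constant probability and the expected weight loss stays a constant factor. Matching the advertised $O(\log r \log(\Delta r))$ round count -- rather than the cruder $O(\log^2(\Delta r))$ one obtains by naively running the deterministic KMW to full precision -- also requires tracking how the precision demanded from the LP solver interacts with the random sparsification.
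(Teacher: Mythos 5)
The central gap is that you cannot simply ``run KMW on the LP'' in $O(\log^2(\Delta r))$ rounds. As the paper notes, the round complexity of \cite{kmw} is $O(\epsilon^{-4} \log \Gamma_p \log \Gamma_d)$ where $\Gamma_p, \Gamma_d$ depend on the entries of the constraint matrix. When one parametrizes the fractional-matching LP for \cite{kmw} (all constraint entries must lie in $\{0\} \cup [1,\infty)$), the natural substitution introduces the dynamic range $W = W_{\max}/W_{\min}$ of the edge weights, giving $\Gamma_p \leq W\Delta$ and $\Gamma_d \leq Wr$. Since $W$ is a priori unbounded (exponential in $n$, say), a direct call to the LP solver does not achieve the claimed bound. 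The paper's proof handles this by partitioning $E$ into geometric weight classes $E_i$ (weights in $[(rq)^i, (rq)^{i+1})$ for a parameter $q$), solving each bounded-range subproblem separately, quantizing each one via Proposition~\ref{maxmatch1}, and then combining them with a ``downward shadowing'' rule: an edge in a light class is zeroed out if any heavier-class edge at one of its vertices is nonzero, and the remaining values are scaled by $1/3$. Verifying that this combination is a feasible fractional matching with $a(h) = \Omega(\sum_i a(h_i))$ is the technical heart of the lemma and is entirely absent from your writeup.

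This also affects your account of the $O(\log r \log(\Delta r))$ round count in part (2). You attribute it to ``one only needs relative accuracy $1/\log r$, not $1/\Delta$,'' but the LP solver is always invoked with a constant $\epsilon$. The actual source of the savings is the spacing parameter: the deterministic branch uses $q = \Theta(\Delta)$ while the randomized branch uses $q = \Theta(\log r)$, which tightens the within-class weight ratio and drops $\Gamma_p$ from $O(r\Delta^2)$ to $O(r\Delta \log r)$, giving $O(\log r \log(\Delta r))$ per class. Your remaining ingredients are basically sound: the scale-and-round-up quantization for part (1) is a valid alternative to the paper's round-down-to-$1/(10\Delta)$ trick (the paper instead uses the bound $a^*(H) \geq a(E)/\Delta$ to control the rounding loss), and your Bernoulli sparsify-and-cap construction for part (2) parallels the paper's Poissonization argument. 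But neither of these salvages the argument without the weight-class decomposition.
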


This immediately gives our main deterministic algorithm for hypergraph matching:
\begin{reptheorem}{main-thm1}
For a hypergraph $H$ with a good coloring and an edge-weighting $a$, there is a deterministic $\tilde O(r \log \Delta + \log^2 \Delta)$-round  algorithm to generate a matching $M$ with $a(M) \geq \Omega(a^*(H)/r)$.
\end{reptheorem}
\begin{proof}
  Use \Cref{find-fmatch1} to obtain a fractional matching $h$ which is $q$-proper with $a(h) \geq \Omega(a^*(H))$ and $q = O(\Delta)$ in $O(\log^2 (\Delta r))$ rounds. Next, apply \Cref{split2} to hypergraph $H^{[h]}$, which has maximum degree $q$, to get a matching $M$ with $a(M) \geq \Omega(\frac{a(H^{[h]})}{q r}) = \Omega(a^*(H) / r)$. This takes $\tilde O(\log^2 q + r \log q) = \tilde O(\log^2 \Delta + r \log \Delta)$ rounds.
\end{proof}

We next turn to the randomized algorithm. To emphasize it is truly local, we show that our algorithm achieves success probability of $1-\delta$ for an arbitrary parameter $\delta$, which may depend on $n$ or any other quantities.  The strategy is to first randomly sparsify the hypergraph so that $\Delta \approx \polylog \tfrac{1}{\delta}$, and then use our deterministic algorithm. Note that this strategy is very different from a straightforward simulation of direct rounding. 

\begin{reptheorem}{main-thm2}
Let $\delta \in (0,\tfrac{1}{2})$ be an arbitrary parameter.  For a hypergraph $H$ with a good coloring an an edge-weighting $a$,  there is a randomized $\tilde O(\log \Delta + r \log \log \tfrac{1}{\delta} + (\log \log \tfrac{1}{\delta})^2)$-round  algorithm to generate a matching $M$ such that $a(M) \geq \Omega(a^*(H)/r)$ with probability at least $1 - \delta$.
\end{reptheorem}
\begin{proof}
Execute $t$ independent parallel applications of the randomized part of Lemma~\ref{find-fmatch1}, for a parameter $t > 0$ to be determined. This runs in $O( \log r \log(\Delta r))$ rounds and produces fractional matchings $h_1, \dots, h_t$ which are each $q$-proper for $q = O( \log r)$ and with $\bE[a(h_i)] \geq \Omega(a^*(H))$.

  Now form the fractional matching $h = (h_1 + \dots + h_t) / t$,  and consider the hypergraph $H' = H^{[h]}$. Since $h$ is $qt$-proper, $H'$ has maximum degree $\Delta' = q t$, and each $h_i$ is a fractional matching of $H'$.

We have $\bE[a(h_i)] \geq \Omega(a^*(H))$, and $a(h_i) \leq a^*(H)$ with probability one by definition of $a^*(H)$. Markov's inequality applied to the non-negative random variable $a^*(H) - a(h_i)$ shows that $a(h_i) \geq \Omega(a^*(H))$ with probability $\Omega(1)$. Since $a(h_1), \dots, a(h_t)$ are independent random variables, there is a probability of at least $1 - 2^{-\Omega(t)}$ that $a(h_i) \geq \Omega(a^*(H))$ for at least one value of $i$. Since each $h_i$ is a fractional matching of $H'$, this means that by taking $t = \lceil c \log \tfrac{1}{\delta} \rceil$ for a sufficiently large constant $c$, we ensure that $a^*(H') \geq \Omega(a^*(H))$ holds with probability at least $1 - \delta/2$. 

Next,  randomly choose an edge-coloring of $H'$ with $4 \Delta' r / \delta$ colors, and discard all pairs of adjacent edges with the same color. We claim that, with probability $1 - \delta/2$, the resulting hypergraph $H''$ has $a^*(H'') \geq \Omega(a^*(H'))$. For, consider an optimal fractional matching $g$ of $H'$. Each edge is discarded in $H''$ with probability at most $\delta/4$, and so the expected weight of discarded edges from $g$ i is at most $\delta a^*(H)  / 4$; by Markov's inequality, it is at most $a^*(H)/2$ with probability at least $1-\delta/2$. In this case, fractional matching $g$ has weight $a^*(H')/2$ in $H''$.

Now suppose that both desired events have have occurred, and so $a^*(H'') \geq \Omega(a^*(H')) \geq \Omega(a^*(H))$.  In $O(\log^* \tfrac{t r}{\delta})$ rounds the coloring of $H''$ can be converted into a good coloring of $H''$. We finish by applying \Cref{main-thm1} to $H''$ get a matching $M$  with $a(M) \geq \Omega(a^*(H''))$.  This step takes $\tilde O(r \log \Delta' + \log^2 \Delta') = \tilde O(r \log \log \tfrac{1}{\delta} + (\log \log \frac{1}{\delta})^2)$ rounds.
\end{proof}

\section{Maximum-weight graph matching}
\label{graph-match}
Consider a graph $G$  with an edge-weighting  $a$. Our goal is to find a matching $M$ with $a(M) \geq (1-\epsilon) T$ for some desired parameter $\epsilon > 0$, where $T$ denotes the maximum weight matching. We refer to such $M$ as an \emph{$\epsilon$-near matching} of $G$.

  The overall plan is to iteratively improve the matching by augmenting it with short alternating paths. This basic idea has been used for parallel algorithms in \cite{hv} and adapted to the distributed  setting  in \cite{nieberg, GKMU18}. Formally, for a matching $M$ of $G$, we define an \emph{$\ell$-augmentation $P$} to be a path or cycle of length at most $2 \ell$ which alternately passes through matched and unmatched edges, and has the additional property that if it ends at an edge $(u,v) \in E - M$ or starts at an edge $(v,u) \in E - M$, then  vertex $v$ must be unmatched in $M$.

We can augment the matching with respect to $P$, obtaining a new matching $M' = M \oplus P$. We define the \emph{gain} of $P$ as $g(P) = a(M') - a(M) = a(P - M) - a(P \cap M)$. If we have a collection $\mathcal P$ of vertex-disjoint paths or cycles, then we can augment them all simultaneously, getting a new matching $M'$ with $a(M') = a(M) + g(\mathcal P)$ where we define $g(\mathcal P) = \sum_{P \in \mathcal P} g(P)$.  We quote the following result from \cite{pettie-sanders}:
  \begin{proposition}[\cite{pettie-sanders}]
    \label{nieberg-prop}
    Let $M$ be an arbitrary matching of $G$. For an integer $\ell \geq 1$, there is a collection $\mathcal P$ of vertex-disjoint $\ell$-augmentations with $g(\mathcal P) \geq \tfrac{1}{2} ( T (1 - 1/\ell) - a(M) ).$
  \end{proposition}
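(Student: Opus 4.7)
The natural starting point is to compare $M$ against an optimum matching $M^*$ with $a(M^*) = T$ and examine the symmetric difference $D = M \oplus M^*$. Since both $M$ and $M^*$ are matchings, $D$ decomposes into a vertex-disjoint union of alternating paths and even alternating cycles. For each component $C$, the effect of augmenting $M$ along $C$ is $g(C) = a(C \cap M^*) - a(C \cap M)$, and summing over components gives $\sum_C g(C) = a(M^*) - a(M) = T - a(M)$.

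The plan is to build $\mathcal P$ from pieces of the components of $D$. Any component with at most $2\ell$ edges is itself a valid $\ell$-augmentation (the endpoint condition is automatic for path endpoints of $D$, since they are unmatched in $M$ or $M^*$; cycles have no endpoints), so we add it to $\mathcal P$ directly. For a longer component $C$, we must chop it into sub-segments of length at most $2\ell$. A key subtlety: any $\ell$-augmentation sitting in the interior of $C$ must begin and end with an $M$-edge, since its endpoints are matched in $M$ by the adjacent edges of $C$ outside the segment. Thus the ``cuts'' between consecutive segments must delete $M^*$-edges, which simultaneously makes the resulting segments vertex-disjoint (the cut vertex is split apart). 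An averaging argument over $\ell$ different offsets of the partition of $C$ shows that some offset deletes $M^*$-edges of total weight at most $a(C \cap M^*)/\ell$, producing a vertex-disjoint collection of valid $\ell$-augmentations of total gain at least $g(C) - a(C \cap M^*)/\ell$. Cycles are handled the same way after fixing an arbitrary starting point to reduce them to the path case.

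Summing over components and using $\sum_C a(C \cap M^*) \leq a(M^*) = T$, the total gain is at least $T - a(M) - T/\ell = T(1-1/\ell) - a(M)$, which is actually stronger than the claimed bound. The factor of $1/2$ in the statement is slack that accommodates a cruder version of the chopping step: partition $C$ into consecutive blocks of size $2\ell$ and take every other block to enforce vertex-disjointness at a deterministic factor-of-two loss, skipping the shift-averaging step entirely.

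The main obstacle is the chopping-and-averaging step for long components. It requires careful edge-type bookkeeping: tracking the alternation of $M$- and $M^*$-edges, choosing block sizes whose parity is compatible with the endpoint condition, arguing that the deleted edges actually separate the segments into vertex-disjoint pieces, and handling cycles separately from paths. None of the steps are conceptually deep, but the interaction between the combinatorial structure of $D$ and the ``endpoint must be unmatched in $M$'' clause in the definition of an $\ell$-augmentation is where one has to be the most careful.
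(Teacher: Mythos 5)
The paper does not prove this proposition; it quotes it from \cite{pettie-sanders}, so there is no internal argument to compare against. Your route---decompose $D = M \oplus M^*$ into alternating paths and even cycles, chop long components by deleting $M^*$-edges, and shift-average to control the deleted weight---is indeed the standard approach to statements of this type.

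There are, however, two concrete gaps in the sketch, both located exactly in the step you flag as the main obstacle. First, the claimed strengthening to $T(1-1/\ell) - a(M)$ is false because cycle components do not cooperate with the $1/\ell$ averaging. Every segment of a chopped cycle is interior, so it must start and end with $M$-edges, giving it at most $\ell - 1$ edges of $M^*$; hence a cycle with $m$ edges of $M^*$ forces at least $\lceil m/\ell \rceil$ deletions. When $\ell \nmid m$ the best shift deletes $M^*$-weight at least $\tfrac{\lceil m/\ell\rceil}{m} a(C \cap M^*) > a(C \cap M^*)/\ell$ (already $\tfrac{2}{\ell+1}a(C\cap M^*)$ for $m = \ell+1$ with equal weights), so the $\tfrac12$ factor in the proposition is not merely slack---it absorbs a genuine cycle overhead. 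Second, the cruder ``take every other block of $2\ell$ edges'' alternative does not produce valid $\ell$-augmentations: a block of exactly $2\ell$ consecutive edges of $C$ has $\ell$ edges of $M$ and $\ell$ of $M^*$, so one of its two end edges lies in $E - M$, yet the corresponding tip vertex is matched in $M$ by the adjacent $C$-edge in the skipped block, violating the endpoint condition in the definition of an $\ell$-augmentation. Both issues are repairable (one absorbs the cycle ceiling into the $\tfrac12$; the other needs blocks with the correct parity and $M$-edge boundaries), but as written the sketch neither achieves the bound it claims nor correctly recovers the $\tfrac12$-version via the fallback.
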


 The algorithm for graph matching, as we summarize next, is based on representing augmentations as hypergraph matchings.
\begin{reptheorem}{mt1}
Let $\epsilon \in (0,1)$ and let $G$ be a graph of maximum degree $\Delta$ with an edge-weighting $a$. 
  \begin{enumerate}
    \item There is a deterministic $\tilde O(\epsilon^{-4} \log^2 \Delta + \epsilon^{-1} \log^* n)$-round algorithm to find an $\epsilon$-near graph matching.

    \item   For any $\delta \in (0,\tfrac{1}{2})$, there is a randomized $\tilde O( \epsilon^{-3} \log \Delta + \epsilon^{-3} \log \log \tfrac{1}{\delta} + \epsilon^{-2} (\log \log \tfrac{1}{\delta})^2 )$-round algorithm to find an $\epsilon$-near graph matching  with  probability at least $1 - \delta$.
      \end{enumerate}
\end{reptheorem}
\begin{proof}  
  We first describe the deterministic algorithm. Let us set $\ell = \lceil 2/\epsilon \rceil$, and we define the \emph{path hypergraph} $\overline H$ to have vertex set $V$ and to have a hyperedge $\{v_1, \dots, v_s \}$ for every path or cycle  $(v_1, v_2, \dots, v_{s})$ of length $s \leq 2 \ell$ in $G$. Here $\overline H$ has rank $2 \ell$ and maximum degree  $\Delta^{2 \ell}$, and a matching of it corresponds to a collection of length-$2 \ell$ vertex-disjoint paths in $G$. 

  Our first step is to get a good coloring of $\overline H$. Next, we start with matching $M_0 = \emptyset$ and go through $t$ augmentation stages.  In stage $i < t$, we form hypergraph $H_i$ whose edges are $\ell$-augmentations of $M$, and where the weight of an edge of $H_i$ is the gain of the corresponding augmentation. Note that $H_i$ is a sub-hypergraph of $\overline H$. We apply \Cref{split2} to $H_i$, using the given coloring of $\overline H$, obtaining a matching $N_i$ of $H_i$. We then form $M_{i+1}$ by augmenting $M_i$ with respect to $N_i$.  At the end of this process, we output the final matching $M = M_{t+1}$.

  Let us first show that the resulting matching $M$ is $\epsilon$-near. Define $g_i$ to be the gain of set of paths $N_i$ with respect to matching $M_i$ and let us define $\alpha_i = T(1 - \tfrac{\epsilon}{2}) - a(M_i)$. By Proposition~\ref{nieberg-prop}, each $H_i$ has a fractional matching whose gain (with respect to $M_i$) is at least $\tfrac{1}{2} ( T (1 - 1/\ell) - a(M_i) ) \geq \alpha_i / 2$. Since $H_i$ has rank $2 \ell$, this means that $g_i \geq \Omega( \alpha_i / \ell ) = \Omega( \epsilon \alpha_i)$. So we have
  $$
  a(M_{i+1}) = a(M_i) + g_i \geq a(M_i) + \Omega(\epsilon \alpha_i)
  $$

Equivalently, we have $\alpha_{i+1} \leq \alpha_i ( 1 - \Omega(\epsilon))$.  Since $\alpha_0 \leq T$, this implies that $\alpha_{t+1} \leq \epsilon/2$ for some $t = \Theta( \tfrac{ \log(1/\epsilon) }{\epsilon} )$. This implies that $a(M_{t+1}) = T(1 - \epsilon/2) - \alpha_{t+1} \geq T (1-\epsilon)$, as desired.

 Let us next examine the complexity of this process. It requires $O( \tfrac{\log^*n}{\epsilon})$ rounds to get the coloring of $\overline H$. In each stage $i$, the hypergraph $H_i$ has maximum degree $\Delta' = \Delta^{2 \ell}$ and rank $2 \ell$. \Cref{main-thm1} requires $\tilde O(\ell \log \Delta' + \log^2 \Delta') = \tilde O(\tfrac{\log^2 \Delta}{\epsilon^2})$ rounds on $H_i$. Each communication step of $H_i$ can be simulated in $O(\ell) = O(1/\epsilon)$ rounds on $G$, so this overall takes $\tilde O(\tfrac{\log^2 \Delta}{\epsilon^3})$ time per stage. There are $t = \tilde O(1/\epsilon)$ stages altogether.

  The randomized version is completely analogous, except that we do not obtain the good coloring of $\overline H$ and we use \Cref{main-thm2} (with failure probability $\delta' = \delta/t)$ instead of \Cref{main-thm1}.
\end{proof}

Let us compare our algorithm with known bounds for approximating GMWM.

First, \cite{kmw} showed a lower bound of $\Omega ( \min (\sqrt{ \frac{\log n}{ \log \log n}},  \frac{\log \Delta}{\log \log \Delta} ) )$ rounds for any constant-factor approximation. There is also a randomized algorithm of \cite{bcgs}, which runs in $O(\frac{\log \Delta}{\epsilon^{3} \log \log \Delta})$ and gives an $\epsilon$-near matching with constant probability. Thus, the randomized round complexity of approximate maximum matching is precisely $\Theta( \frac{\log \Delta}{\log \log \Delta} )$ as a function of $\Delta$. Our randomized algorithm matches this up to $\log \log \Delta$ factors, and our deterministic algorithm matches this up to a factor of $\log \Delta$. 

Additionally, \cite{bks} showed an $\Omega(1/\epsilon)$ lower bound on the run-time for deterministic or randomized algorithms to get $\epsilon$-near matchings. 

Finally, we give a simple reduction to $3$-coloring a ring graph to show that the $\log^* n$ term is needed, even for Maximum Cardinality Matching with $\Delta = 2$. 
\begin{theorem}
  \label{gmwm-lb-thm}
  Let $\rho, n \geq 1$.
Any deterministic LOCAL algorithm for $\rho$-approximate graph Maximum Cardinality Matching on $n$-vertex graphs requires $\Omega(\frac{\log^* n}{\rho})$ rounds.
\end{theorem}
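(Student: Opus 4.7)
The plan is to reduce to Linial's classical lower bound that deterministic 3-coloring of the $n$-cycle requires $\Omega(\log^* n)$ rounds \cite{lin92}. Concretely, I would show that a $T$-round $\rho$-approximate MCM algorithm $A$ on $n$-vertex graphs yields a deterministic 3-coloring algorithm on $C_n$ running in $O(\rho T)$ rounds; combined with Linial's bound this forces $\rho T = \Omega(\log^* n)$ and hence the claim. Note that the lower bound instance will be a cycle, so $\Delta = 2$ and the bound reflects purely the $n$-dependence.

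The reduction first runs $A$ on $C_n$ (with its unique vertex IDs) to produce a matching $M$ with $|M| \geq n/(2\rho)$. Each matched edge $\{u,v\} \in M$ induces a canonical 2-coloring of its two endpoints using ID order: the smaller-ID endpoint receives color $1$ and the larger-ID endpoint color $2$. The still-uncolored vertices form a disjoint union of residual paths lying between matched edges. To extend the partial 2-coloring to a proper 3-coloring of $C_n$, I would iteratively reapply $A$ to these residual paths (each path is again a valid matching instance on which $A$ must attain the $\rho$-approximation guarantee), coloring newly matched endpoints as above. After a suitable $O(\rho)$ iterations the residual components become short enough to finish with a constant-round cleanup that assigns color $3$ to single unmatched vertices sandwiched between already-colored ones. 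Each iteration costs $T$ rounds, for a total of $O(\rho T)$.

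The main obstacle is making the iteration count genuinely $O(\rho)$ rather than $O(\rho \log n)$: a naive geometric analysis, in which each iteration of $A$ shrinks the total uncovered set by a factor $1-1/(2\rho)$, gives only the weaker bound. I expect the tight count to come either from (a) arguing that after $O(\rho)$ applications of $A$ the \emph{maximum} residual-component length (not just the aggregate number of unmatched vertices) is bounded by a constant depending only on $\rho$, by exploiting that $A$ must attain its approximation ratio on every sub-instance it is invoked on, or from (b) replacing iteration by a direct gadget reduction: encode 3-coloring of a target ring $C_m$ into a matching instance on a larger ring $C_n$ with $n = \Theta(\rho m)$ by substituting each vertex of $C_m$ with a length-$\Theta(\rho)$ path, in such a way that the structure of $A$'s matching output within each gadget deterministically encodes a color in $\{1,2,3\}$ for the corresponding vertex of $C_m$. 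I anticipate option (b) is the more straightforward route, and the main technical work there is to verify that, regardless of $A$'s internal behavior, the gadget output yields a proper 3-coloring of $C_m$, and that $\log^* n$ and $\log^* m$ differ only by a constant factor so the reduction is lossless.
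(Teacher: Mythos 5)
Your overall target (reduce to Linial's $\Omega(\log^* n)$ lower bound for $3$-coloring the ring) matches the paper, and you even identify the key technical lever---``exploiting that $A$ must attain its approximation ratio on every sub-instance''---but you apply it in the wrong place, and the step you flag as the main obstacle is exactly where your argument is incomplete. You propose to iterate $A$ on residual paths and then somehow argue the iteration count is $O(\rho)$; neither of your two proposed routes to that bound is worked out, and route (a) in particular is circular unless you say why a \emph{single} run already controls gap lengths.

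The paper resolves this in one shot, with no iteration. Run $A$ once on $C_n$ in $t$ rounds to get $M$. The claim is that no window of $8\rho t$ consecutive edges is entirely unmatched. This follows from indistinguishability, not from the size of $M$: if $v_1,\dots,v_\ell$ were a run of unmatched vertices, form the $\ell$-cycle $G'$ on $v_1,\dots,v_\ell$. The vertices $v_t,\dots,v_{\ell-t}$ have identical $t$-hop views in $G$ and $G'$, so $A$ on $G'$ matches none of them, giving $|M'|\le t+1$, while $\tau(G') \ge \ell/2-1$; the $\rho$-approximation guarantee forces $\ell < 8\rho t$. Thus the lower-ID endpoints of $M$ already form an $O(\rho t)$-ruling set, and a standard BFS-layered greedy coloring from this ruling set gives a proper $3$-coloring of $C_n$ in $O(\rho t)$ additional rounds. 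Combined with Linial's bound, $\rho t = \Omega(\log^* n)$. The missing idea in your proposal is precisely that the sub-instance/indistinguishability argument should be applied to the \emph{initial} run of $A$ to bound the maximum gap length directly, rather than to bound how fast an iteration shrinks the residual. Once you see that, the iteration (and its unresolved round count) disappears entirely; also note that the partial $2$-coloring of matched endpoints by ID order plays no role in the final construction, since what you actually need from $M$ is a ruling set, not a partial coloring.

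Your alternative route (b), a gadget reduction that replaces each vertex of $C_m$ by a $\Theta(\rho)$-length path, is a genuinely different construction and could plausibly be made to work, but as written it is not a proof: you would need to specify how $A$'s matching restricted to a gadget is decoded into one of three colors, and then prove that adjacent gadgets never decode to the same color regardless of how $A$ behaves near gadget boundaries. That verification is not obviously easier than the one-shot indistinguishability argument the paper uses, and you have not supplied it.
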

\begin{proof}
  Consider an $n$-vertex ring graph $G$, and suppose that algorithm $A$ runs in $t$ rounds on $G$ and guarantees a matching $M$ which is a $\rho$-approximation to maximum cardinality matching. We first claim that every contiguous sequence of $8 \rho t$ edges has at least one edge in $M$.

  For, suppose that a contiguous sequence $v_1, \dots, v_{\ell}$ lacks such an edge.  Form the ring graph $G'$ on vertices $v_1, \dots, v_{\ell}$ (the vertex $v_{\ell}$ becomes joined to the vertex $v_1$).  The vertices $v_t, \dots, v_{\ell -t}$ will not see any difference in their $t$-neighborhood compared to $G$, and so when we run $A$ on $G'$, the resulting matching $M'$ will not have any edges between $v_t, \dots, v_{\ell - t}$. Therefore, $|M'| \leq 2\lceil t/2 \rceil \leq t+1$. On the other hand, a maximum matching of $G'$ has size $\lfloor \ell/2 \rfloor \geq \ell/2 - 1$. Since $A$ guarantees a $\rho$-approximation, we must have have $t+1 \geq \frac{(\ell/2) - 1}{\rho}$, i.e. $\ell \leq 2 (1 + \rho + \rho t) < 8 \rho t$.

  Now, using $A$, we can form an $8 \rho t + 1$-ruling set $U$ for $G$, by selecting the lower-ID vertex of each edge of $M$. This allows us to generate a $3$-coloring of $G$ in $O(\rho t)$ rounds: we sort the vertices by their increasing distance from the closest element of $U$, and at each stage $i = 0, \dots, 8 \rho t+1$, the vertices at distance $i$ greedily choose a color. 

  On the other hand, Linial \cite{lin92} showed that $3$-coloring a ring graph requires $\Omega(\log^* n)$ rounds. Thus $t \geq \Omega( \frac{\log^* n}{\rho})$.
\end{proof}

\section{Hypergraph maximal matching and applications}
\label{maximal-match}
As is standard for matching algorithms, we build a maximal matching by repeatedly finding residual matchings. For a matching $M$ of a hypergraph $H$, the \emph{residual hypergraph} $\text{Res}_M(H)$ is constructed by removing all edges from $H$ which intersect with $M$. Note that $M$ is a maximal matching if and only if $\text{Res}_M(H)$ has no edges. We also define $\tau(H)$ to be the size of the largest matching in $H$.

The simplest hypergraph maximal matching (HMM) algorithm comes directly from the approximate maximum matching algorithm, as follows:
\begin{theorem}
\label{mt2a}
There is a deterministic HMM algorithm in $\tilde O( (r \log \tau(H)) (r \log \Delta + \log^2 \Delta) + \log^* n)$ rounds. There is a randomized HMM algorithm with success probability at least $1 - \delta$ in $\tilde O( (r \log \tau(H)) (\log \Delta + r \log \log \tfrac{1}{\delta} + (\log \log \tfrac{1}{\delta})^2) )$ rounds.
\end{theorem}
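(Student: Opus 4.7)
The plan is to build up a maximal matching by repeatedly applying the approximate HMWM algorithms of Theorems~\ref{main-thm1} and \ref{main-thm2} to the residual hypergraph. Concretely, we initialize $M_0 = \emptyset$ and at each phase $i \geq 0$ form the residual $H_i = \text{Res}_{M_i}(H)$ and run the HMWM algorithm on $H_i$ with uniform edge-weighting $a \equiv 1$, obtaining a matching $N_i \subseteq H_i$. Since $N_i$ is a matching in the residual hypergraph, its edges are pairwise disjoint and disjoint from $M_i$, so $M_{i+1} := M_i \cup N_i$ is itself a matching of $H$. We stop once $H_i$ contains no edges, at which point $M_i$ is maximal.

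The key invariant is that each phase shrinks the maximum matching size geometrically. With $a \equiv 1$ we have $a^*(H_i) \geq \tau(H_i)$ (any integral matching is a feasible fractional matching), and \Cref{main-thm1} guarantees $|N_i| = a(N_i) \geq \Omega(a^*(H_i)/r) \geq \Omega(\tau(H_i)/r)$. Hence $\tau(H_{i+1}) \leq \tau(H_i) - |N_i| \leq \tau(H_i)(1 - c/r)$ for some absolute constant $c>0$, so after $t = O(r \log \tau(H))$ phases we have $\tau(H_t) < 1$, i.e.\ $H_t$ is edge-free.

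Before the first phase we compute a good coloring of $H$ in $O(\log^* n)$ rounds via the algorithm of \cite{lin92}. A good coloring is a proper vertex coloring of $\text{Inc}(H)^2$ with $\poly(r,\Delta)$ colors; since the incidence graph of any $H_i$ is a vertex-induced subgraph of $\text{Inc}(H)$ and degrees/ranks only decrease, this single coloring restricts to a good coloring of every residual hypergraph, so we never need to recompute it. The deterministic complexity then follows by multiplying: $t = O(r\log \tau(H))$ phases of $\tilde O(r\log\Delta+\log^2\Delta)$ rounds each, plus the one-time $\log^* n$ term. For the randomized version, we run \Cref{main-thm2} in each phase with failure parameter $\delta' = \delta/t$; a union bound gives overall failure probability at most $\delta$, and since $\log\log(1/\delta') = O(\log\log(1/\delta) + \log\log r + \log\log\log \tau(H))$, the extra logs are absorbed into the $\tilde O$ and $\log\log(1/\delta)$ factors to give the stated bound.

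The main (modest) obstacle is verifying the geometric shrinkage invariant and inheritance of the good coloring by residuals; both reduce to observing monotonicity in the structure (fractional matching value dominates integral, subhypergraphs inherit proper colorings with unchanged color count). The remaining work is bookkeeping in the union bound and confirming that the second-order $\log\log$ factors arising from $\delta' = \delta/t$ fit into the claimed runtime.
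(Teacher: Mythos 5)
Your proposal is correct and takes essentially the same route as the paper: initialize an empty matching, repeatedly apply the HMWM approximation (Theorems~\ref{main-thm1} and \ref{main-thm2}) to the residual hypergraph with constant weights, observe that $\tau(\text{Res}_{M_{i+1}}(H)) \le \tau(\text{Res}_{M_i}(H)) - |N_i| \le (1-\Omega(1/r))\tau(\text{Res}_{M_i}(H))$, and conclude after $O(r\log\tau(H))$ phases. You also correctly note that a good coloring of $H$ computed once in $O(\log^* n)$ rounds restricts to a good coloring of every residual (since $\text{Inc}(\text{Res}_{M_i}(H))$ is a subgraph of $\text{Inc}(H)$, adjacency in the square only shrinks), and the union-bound choice $\delta' = \delta/t$ matches the paper's treatment up to harmless $\polyloglog$ factors absorbed by $\tilde O$.
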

\begin{proof}
  We first describe the deterministic algorithm. To begin, we compute a good coloring of $H$. We next initialize $M_0 = \emptyset$ and go through $t$ stages, where in each stage $i$ we apply \Cref{main-thm1} to hypergraph $\text{Res}_{M_i}(H)$ with the constant edge-weighting function to obtain a matching $L_i$. We form $M_{i+1} = M_i \cup L_i$, which is a matching of $H$ by definition of the residual hypergraph.

Define $\tau_i = \tau( \text{Res}_H(M_i) )$.  Since we are using a constant edge-weighting function, \Cref{main-thm1} ensures that $|L_i| \geq \Omega( \tau_i / r )$. Any matching of $\text{Res}_H(M_{i+1})$ could be combined with $L_i$ to yield a matching of $\text{Res}_H(M_i)$, so $\tau_{i+1} \leq \tau_i - |L_i| \leq \tau_i (1 - \Omega(1/r))$. This implies that $\tau_{t+1} < 1$ for $t = \Theta( r \log \tau(H) )$ and so matching $M_{t+1}$ is maximal.

 It requires $O(\log^* n)$ rounds to get the coloring of $H$. Each stage of \Cref{main-thm1} runs in $\tilde O(r \log \Delta + \log^2 \Delta)$ rounds, and there are $t = O( r \log \tau(H) )$ stages altogether.

   The randomized algorithm is completely analogous, using the randomized version of \Cref{main-thm1} with appropriately chosen failure probability $\delta' = \delta/\poly(r, \log \tau(H), \log \Delta, \log \tfrac{1}{\delta})$.
    \end{proof}

We also describe a second, alternative algorithm based on the ``shattering'' technique of \cite{beps}:
\begin{theorem}
\label{split4r}
There is an $\tilde O(r \log^2 \Delta+ r^2 (\log \log n)^2 + r (\log \log n)^3)$-round randomized algorithm to get a maximal matching of a hypergraph $H$ w.h.p.
\end{theorem}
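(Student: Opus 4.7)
The plan is to adapt the shattering paradigm of BEPS to trade the $\log \tau(H) \leq \log n$ iteration count in the deterministic bound of Theorem~\ref{mt2a} for the double-logarithmic terms $\poly(\log \log n)$. The algorithm runs in two phases: a randomized Phase~1 that makes global progress and leaves only tiny residual ``pockets,'' followed by a deterministic Phase~2 that finishes each pocket independently via Theorem~\ref{mt2a}.

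\emph{Phase 1 (shattering).} Iteratively apply the randomized matching algorithm of Theorem~\ref{main-thm2} (with constant edge-weighting $a \equiv 1$) to the residual hypergraph, setting the per-iteration failure probability to $\delta = \Delta^{-c}$ for a sufficiently large constant $c$. As in the proof of Theorem~\ref{mt2a}, each successful iteration shrinks $\tau(\mathrm{Res})$ by a factor $1 - \Omega(1/r)$, so after $t = O(r \log \Delta)$ iterations the residual hypergraph would be empty if every iteration succeeded globally. The per-iteration cost of Theorem~\ref{main-thm2} with this $\delta$ is $\tilde O(\log \Delta + r \log \log \Delta + (\log \log \Delta)^2)$, which, summed over $O(r \log \Delta)$ iterations, gives the claimed $\tilde O(r \log^2 \Delta)$ Phase~1 cost.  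The shattering step argues that the set $U$ of vertices still incident to residual edges after Phase~1 forms, with high probability, connected components of size at most $N = \poly(\log n)$ in the incidence graph. Informally, a given vertex remains unresolved only if some ``local'' failure occurred in its bounded-radius neighborhood of random bits, an event of probability $\Delta^{-\Omega(c)}$; a union bound over the $\Delta^{O(N)}$ possible rooted bad subtrees of size $N$ then closes once $c$ is chosen large enough.

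\emph{Phase 2 (deterministic clean-up).} On each residual component (in parallel), both $\Delta'$ and $\tau(\mathrm{Res}')$ are at most $N = \poly(\log n)$, so $\log \Delta' = \log \tau(\mathrm{Res}') = O(\log \log n)$. A good coloring of each small component is generated in $O(\log^* N) = O(\log^* \log n)$ rounds via Linial's algorithm, after which we invoke the deterministic algorithm of Theorem~\ref{mt2a}. Its complexity on a single component is
\[
\tilde O\bigl( (r \log \tau(\mathrm{Res}'))(r \log \Delta' + \log^2 \Delta') \bigr) = \tilde O\bigl( r^2 (\log \log n)^2 + r (\log \log n)^3 \bigr),
\]
matching the second block of terms in the claimed bound. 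Summing Phases 1 and 2 yields the stated total.

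\emph{Main obstacle.} The delicate step is making the BEPS shattering argument go through when the randomized primitive of Phase~1 is Theorem~\ref{main-thm2} — a nontrivial multi-round subroutine — rather than an $O(1)$-round step of Luby's algorithm. Concretely, one must argue that each vertex's ``locally unresolved'' event depends only on random bits within a bounded-radius neighborhood (and hence is independent of the corresponding event at sufficiently distant vertices), so that a union bound over bad subtrees of size $\Theta(\log n)$ produces the desired component-size bound. The enlarged radius of influence, roughly $\tilde O(\log \Delta + r \log \log \Delta)$ per iteration, inflates the bad-tree counting constants but does not alter the asymptotic $\poly(\log n)$ bound on component size, so the overall argument still closes.
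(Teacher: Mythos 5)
Your two-phase plan captures the spirit of the paper's proof (shatter, then clean up on small components), but there is a genuine gap in Phase~1 that the paper avoids by using a different shattering primitive and an intermediate reduction step.

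The gap is precisely the ``main obstacle'' you name and then wave past. The shattering argument of BEPS requires that the primitive supply a \emph{local} failure guarantee: for each vertex (or edge), the event ``this vertex is still unresolved after Phase~1'' must be determined by random bits in a bounded-radius ball, and must have probability at most $1/\poly(\Delta)$, so that distant failure events are independent and the union bound over $\Delta^{O(N)}$ bad subtrees closes. Theorem~\ref{main-thm2} gives you nothing of the sort. Its guarantee is purely \emph{global}: with probability $1-\delta$ the returned matching $M$ satisfies $a(M) \geq \Omega(a^*(H)/r)$, and on the failure event (probability $\delta$) there is no guarantee at all on where or how the matching is deficient. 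Moreover, even on the success event, the quantity that shrinks, $\tau(\mathrm{Res})$, is a global parameter; there is no per-vertex or per-neighborhood analogue of ``$\tau$ near $v$ shrank.'' A hypergraph with $\tau = 1$ can still be a single huge connected component (a sunflower), so ``$\tau$ small'' and ``components small'' are simply different conclusions, and driving one down does not drive the other down. Finally, $O(r\log\Delta)$ successful iterations only shrink $\tau$ by a $\Delta^{-\Omega(1)}$ factor, which does not bring $\tau$ below $\poly(\log n)$ when $\Delta$ is, say, constant.

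The paper sidesteps all of this with a three-phase structure. Phase~I uses the randomized MIS algorithm of \cite{ghaffari-mis} on the line graph of $H$; this is an algorithm engineered to have the local-shattering structure theorem, and after $O(\log(r\Delta))$ rounds it leaves residual components of size $\poly(r,\Delta)\log n$ w.h.p. Phase~II then iterates Proposition~\ref{conc-prop33} inside each component; that subroutine is a one-shot random rounding (select edges independently, discard conflicts) whose w.h.p. guarantee is proved directly via a polynomial concentration bound, and it drives $\tau$ of each component down to $\poly(r,\log n)$. Phase~III invokes the \emph{randomized} part of Theorem~\ref{mt2a} with $\delta = 1/\poly(n)$ on each component, which has no $\log^* n$ term. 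If you want to keep a two-phase shape, you should replace your Phase~1 with a primitive that actually has a local shattering theorem (Ghaffari MIS is the natural choice), and you will still need an intermediate step like Proposition~\ref{conc-prop33} to bring $\tau$ per component down to $\poly\log n$ before the final cleanup, because the component-size bound alone gives only $\tau \leq \poly(r,\Delta)\log n$, which is too large to invoke the cleanup within the target round budget.
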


This construction requires an additional technical result; since it depends on some non-standard concentration bounds, we defer it to Appendix~\ref{app:prop33}.
\begin{proposition}
\label{conc-prop33}
If $\tau(H) \geq (r \log n)^{10}$, then there is an $O( \log r \log \Delta)$-round randomized algorithm to find a matching $M$ with $|M| \geq \Omega(\tau(H)/r)$ w.h.p.
\end{proposition}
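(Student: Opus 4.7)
The plan is to round a fractional matching by direct independent sampling, using the hypothesis $\tau(H) \geq (r \log n)^{10}$ to push a concentration argument through. First, I would call the randomized part of Lemma~\ref{find-fmatch1} in $t = \Theta(\log n)$ parallel copies to obtain $O(\log r)$-proper fractional matchings $h_1, \dots, h_t$, each with $\E[a(h_i)] \geq \Omega(a^*(H)) \geq \Omega(\tau(H))$ under the constant edge-weighting $a \equiv 1$, and form the average $\bar h = \frac{1}{t}\sum_i h_i$. Linearity of the matching constraints makes $\bar h$ a fractional matching, and Hoeffding's inequality applied to the i.i.d.\ variables $a(h_i) \in [0, a^*(H)]$ yields $a(\bar h) \geq \Omega(\tau(H))$ w.h.p. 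The matching $\bar h$ is $Q$-proper for $Q = O(\log r \log n)$, and this step costs $O(\log r \log(\Delta r)) = O(\log r \log \Delta)$ rounds.

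Next I would perform one round of direct rounding on the replicate hypergraph $H' = H^{[\bar h]}$, which has maximum degree $Q$ and edge-count $|E(H')| = Q \cdot a(\bar h) = \Omega(Q \tau(H))$. Include each edge of $H'$ in a set $L$ independently with probability $p = 1/(crQ)$ for a small constant $c$, and then discard every edge incident to a vertex $v$ with $\deg_L(v) \geq 2$, leaving a matching $M$. Linearity gives $\E[|L|] = \Omega(\tau(H)/r)$, and since $\Pr(\deg_L(v) \geq 2 \mid e \in L) \leq (Q-1)p \leq 1/(cr)$ for each $v \in e$, a union bound over the $\leq r$ vertices of $e$ shows each selected edge survives with probability $\geq 1 - 1/c$; hence $\E[|M|] = \Omega(\tau(H)/r)$.

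The main obstacle is establishing high-probability concentration of $|M|$, which is a nonlinear function of the independent bits $\{X_e\}_{e \in E(H')}$. Here I would invoke a \emph{read-$k$ Chernoff} bound: writing $|M| = \sum_e X_e Y_e$, each indicator $Y_e$ depends only on the $X_{e'}$ with $e' \cap e \neq \emptyset$ in $H'$, and each bit $X_e$ is in turn read by at most $k = O(rQ)$ of those indicators. The standard read-$k$ inequality then yields $\Pr(|M| < \E[|M|]/2) \leq \exp(-\Omega(\E[|M|]/k)) = \exp(-\Omega(\tau(H)/(r^2 Q)))$. Substituting $Q = O(\log r \log n)$ and the hypothesis $\tau(H) \geq (r \log n)^{10}$, the exponent becomes at least $\Omega(r^8 \log^9 n / \log r)$, which is $\gg \log n$, so the failure event has probability $1/\poly(n)$. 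The specific exponent $10$ is calibrated precisely so that the polynomial slack in $\tau(H)$ absorbs the $r^2 Q$ factor in the Chernoff denominator while leaving $\polylog$-factor slack for the union bound; this is the ``non-standard'' concentration ingredient. Combined with the w.h.p.\ lower bound on $a(\bar h)$, we obtain $|M| \geq \Omega(\tau(H)/r)$ w.h.p., and the total round count remains $O(\log r \log \Delta)$ since the rounding and conflict-resolution are purely local.
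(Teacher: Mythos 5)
Your proof is correct in its essential logic, but it takes a genuinely different route from the paper at both key steps, and one of those detours introduces an unnecessary complication.

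For the fractional matching, the paper simply invokes \Cref{maxmatch1a} (the deterministic LP solver) with the constant edge-weighting; since $W=1$ there, that lemma runs in $O(\log r \log \Delta)$ rounds and directly produces a (non-random) $h$ with $h(E) \geq \Omega(\tau(H))$. You instead run the randomized part of \Cref{find-fmatch1} in $\Theta(\log n)$ parallel copies and apply Hoeffding to the i.i.d.\ values $a(h_i)$. That works, but it is a detour: it introduces extra randomness you then have to control, it routes through the quantization machinery of \Cref{find-fmatch1} (designed for unbounded $W$) rather than using \Cref{maxmatch1a} directly with $W=1$, and your round count $O(\log r\log(\Delta r))$ is not $O(\log r\log\Delta)$ when $r \gg \Delta$, so it does not quite match the stated bound. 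You also don't need $\bar h$ to be $Q$-proper here; the paper rounds the original $H$ directly with $p_e = h(e)/(10r)$ rather than passing to a replicate hypergraph.

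For the concentration step, your route is the more interesting divergence. The paper decomposes $|M| \geq \sum_e X_e - S$ where $S = \sum_{e\neq e',\,e\cap e'\neq\emptyset} X_e X_{e'}$ is a degree-$2$ polynomial in the independent indicators, handles $\sum_e X_e$ by a plain Chernoff bound, and controls $S$ via the Schudy--Sviridenko polynomial concentration inequality (\Cref{ssthm}), computing $\mu_0,\mu_1,\mu_2$ explicitly. You instead bound $|M| = \sum_e X_e Y_e$ directly using a read-$k$ Chernoff bound, observing that the family is read-$k$ with $k = O(rQ)$. Both paths exploit the same local structure (each indicator touches only $O(r\Delta')$ of the bits) and both are nonstandard concentration tools, so your approach is a legitimate alternative; your calibration of the exponent $10$ is correct, and indeed there is ample slack in either version. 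The read-$k$ route is arguably more transparent since it targets $|M|$ directly rather than bounding the ``collision'' surplus $S$, whereas the paper's polynomial-concentration route gives somewhat sharper constants via the explicit $\mu_i$'s. Either choice is defensible; just fix the first step by using \Cref{maxmatch1a} with $W=1$ to get a clean deterministic $h$ and the stated round complexity.
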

\begin{proof}[Proof of Theorem~\ref{split4r}]
  The algorithm builds the matching $M$ over three phases. The first phase is the MIS algorithm of \cite{ghaffari-mis}, the second is a few iterations of Proposition~\ref{conc-prop33}, and the final phase is Theorem~\ref{mt2a}.

\textbf{Phase I.} We begin with the randomized part of the MIS algorithm of \cite{ghaffari-mis} applied to the line graph of $H$. This takes $O(\log(r \Delta))$ rounds, and w.h.p. it generates a matching $M$ such that every connected component in $\text{Res}_{M}(H)$ has size at most $\poly(r, \Delta) \log n$.

In the next two phases, all of the connected components will be handled independently. So consider an arbitrary component $H'$ of $\text{Res}_{M}(H)$; we need to find a maximal matching of $H'$. Initially, $\tau(H') \leq \poly(r, \Delta) \log n$ (since that is the maximum number of vertices in $H'$).

\textbf{Phase II.} We apply Proposition~\ref{conc-prop33} for $\Omega(r \log(r \Delta))$ stages; each time we do so, we get a matching of $H'$, which we commit to the matching $M$.  As long as $\tau( \text{Res}_M(H')) \geq (r \log n)^{10}$, each application of Proposition~\ref{conc-prop33} generates w.h.p. a matching of size $\Omega( \tau(\text{Res}_M(H'))/r)$. Thus, $\tau(\text{Res}_M(H'))$ shrinks by a $(1 - \Omega(1/r))$ factor and after $O( r \log (r \Delta))$ stages this reduces $\tau(\text{Res}_M(H'))$ to $(r \log n)^{10}$ w.h.p. Overall Phase II takes $O(r \log(r \Delta) \times \log r \log \Delta) = \tilde O(r \log^2 \Delta)$ rounds.

\textbf{Phase III.} We finish by applying \Cref{mt2a} to $\text{Res}_M(H')$ with $\delta = 1/\poly(n)$. Since the matching $M$ after Phase II  satisfies $\tau(\text{Res}_M(H')) \leq \poly(r, \log n)$, this runs in $\tilde O(r \log \log n \log \Delta + r^2 (\log \log n)^2 + r (\log \log n)^3)$ rounds. We obtain a maximal matching $M'$ of $\text{Res}_M(H')$; combined with $M$, this gives a maximal matching of $H'$.
\end{proof}

At this point there are a number of HMM algorithms available, which we summarize in the following result. In particular, this shows Theorem~\ref{mt2}.
\begin{theorem}
\label{sum-hmm-thm}
Consider a hypergraph $H$ with $n$ vertices, $m$ edges, rank $r$, maximum degree $\Delta$, and maximum matching size $\tau$. There are distributed HMM algorithms with the following complexities:
\begin{enumerate}
\item[(a)] $O(\log(r \Delta)) +  \polyloglog(m)$ rounds and failure probability $1/\poly(m)$.
\item[(b)] $O(\log m)$ rounds and failure probability $1/\poly(m)$.
\item[(c)] $\tilde O( (r \log \tau) (r \log \Delta + \log^2 \Delta) + \log^* n)$ rounds and failure probability zero.
\item[(d)] $\tilde O( (r \log \tau) (\log \Delta + r \log \log \tfrac{1}{\delta} + (\log \log \tfrac{1}{\delta})^2))$ rounds and failure probability $\delta$.
\item[(e)] $\tilde O(r \log^2 \Delta + r^2 (\log \log n)^2 + r (\log \log n)^3)$ rounds and failure probability $1/\poly(n)$.
\item[(f)] $\tilde O( (r \log n) (r \log \Delta + \log^2 \Delta))$ rounds and failure probability zero.
\end{enumerate}
\end{theorem}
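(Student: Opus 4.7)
The plan is to derive each of the six bounds by combining earlier results from this section with classical and recent MIS machinery applied to the line graph $L(H)$. Items (c) and (d) are precisely the deterministic and randomized bounds of \Cref{mt2a}, so no further work is needed for them. Item (e) is exactly the content of \Cref{split4r}. Item (f) follows immediately from (c) after substituting $\tau \leq n$: the additive $\log^* n$ term is then dominated by the multiplicative $(r \log n)(r \log \Delta + \log^2 \Delta)$ term (using $r \geq 2$), so it can be absorbed.

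The remaining items (a) and (b) I would obtain by reducing HMM to MIS on the line graph $L(H)$. Here $L(H)$ has exactly $m$ vertices and maximum degree at most $r\Delta$; a maximal matching of $H$ is the same object as a maximal independent set of $L(H)$; and one round of communication on $L(H)$ can be simulated in $O(1)$ rounds on the incidence graph of $H$. Hence any MIS algorithm of round complexity $f(n', \Delta')$ on $n'$-vertex graphs of maximum degree $\Delta'$ yields an HMM algorithm of the same asymptotic complexity upon setting $n' = m$ and $\Delta' = O(r \Delta)$.

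For (b) I would invoke Luby's classical randomized MIS algorithm, which runs in $O(\log n')$ rounds with failure probability $1/\poly(n')$, yielding $O(\log m)$ rounds and failure probability $1/\poly(m)$. For (a) I would invoke a state-of-the-art randomized MIS algorithm of complexity $O(\log \Delta') + \polyloglog(n')$, obtainable by combining the shattering phase of \cite{ghaffari-mis} (which runs in $O(\log \Delta')$ rounds and leaves connected components on $\poly(\Delta') \log n'$ vertices) with a fast deterministic post-shattering solver based on the network-decomposition machinery of \cite{ghaffari-roz}. This yields $O(\log(r \Delta)) + \polyloglog(m)$ rounds with failure probability $1/\poly(m)$, as claimed.

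I expect the main (minor) obstacle to be the citation bookkeeping for item (a): the $O(\log \Delta) + \polyloglog(n)$ MIS bound is not explicitly quoted elsewhere in the paper and requires appealing to the more recent network-decomposition line of work, but no new algorithmic ideas are needed beyond what the earlier sections have already developed. The other five items are either straightforward restatements of previous theorems or immediate consequences thereof.
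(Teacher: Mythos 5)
Your proposal matches the paper's proof exactly: (a) and (b) are the MIS algorithms of \cite{ghaffari-mis}+\cite{ghaffari-roz} and \cite{luby} on the line graph $L(H)$, (c) and (d) restate \Cref{mt2a}, (e) restates \Cref{split4r}, and (f) is (c) with $\tau \le n$. The extra bookkeeping you supply (line-graph parameters $n' = m$, $\Delta' \le r\Delta$; absorbing the $\log^* n$ term in (f)) is consistent with what the paper leaves implicit.
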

\begin{proof}
Part (a) is the MIS algorithm of \cite{ghaffari-mis} combined with the network decomposition of \cite{ghaffari-roz} applied to the line graph of $H$. Part (b) is the MIS algorithm of \cite{luby} applied to the line graph of $H$. Parts (c) and (d) are restatements of Theorem~\ref{mt2a}. Part (e) is a restatement of Theorem~\ref{split4r}. Part (f) follows immediately from part (c), noting that $\tau \leq n$.
\end{proof}

One main motivation for HMM is in the context of graph algorithms. The following Theorem~\ref{mt4} lists some examples for edge-coloring. 
\begin{reptheorem}{mt4}
Let $G$ be a graph with maximum degree $\Delta$.
\begin{enumerate}
\item There is a $\tilde O(\log n \log^2 \Delta)$-round deterministic algorithm for $(2\Delta-1)$-list-edge-coloring.
\item There is a $\tilde O( (\log  \log n)^3 )$-round randomized algorithm  for $(2\Delta-1)$-list-edge-coloring
\item There is a $\tilde O(\Delta^4 \log^6 n)$-round deterministic algorithm for $\tfrac{3}{2} \Delta$-edge-coloring. 
\end{enumerate}
\end{reptheorem}
\begin{proof}
For (1),  Fischer, Ghaffari, and Kuhn \cite{fgk} reduces $(2\Delta-1)$-list-edge-coloring of a graph $G = (V,E)$ to maximal matching on a hypergraph of rank $r = 3$, with $O(|V|+|E|)$ vertices and maximum degree $O(\Delta^2)$. With these parameters, \Cref{sum-hmm-thm}(f) takes $\tilde O(\log^2 \Delta \log n)$ time.
  
For (2), there are a number of cases depending on the size of $\Delta$.  See \cite{fgk} for further details; the critical case is when $\Delta \leq \polylog n$, in which case Theorem~\ref{sum-hmm-thm}(e) takes $\tilde O( (\log \log n)^3 )$ rounds.

For (3), Ghaffari et al. \cite{GKMU18} describe a deterministic algorithm which uses HMM as a black box. Its runtime is $O(\Delta \log^3 n)$ plus $\Delta^2 \log n$ times the complexity of solving HMM on hypergraphs with $n$ vertices, rank $\Delta \log n$, and maximum degree $\Delta^{ O(r)}$. 

To find the HMM here, consider applying Theorem~\ref{sum-hmm-thm}(d) with parameter $\delta = 2^{-n^c}$ for a large constant $c$. This runs in $\tilde O(\Delta^2 \log^4 n)$ rounds, so it gives a randomized algorithm to find the desired edge-coloring, with run-time $\tilde O(\Delta^4 \log^6 n)$ and failure probability $2^{-n^c}$ for any desired constant $c$.

We can derandomize this by noting that there are at most $2^{\poly(n)}$ possibilities for the graph $G$ (including the ID's of all vertices). Since the randomized algorithm has failure probability $2^{-n^c}$, for $c$ sufficiently large there must exist a random seed which succeeds on \emph{all} such graphs $G$. Fixing this seed (which can be determined as a function of $n$ and $\Delta$) gives a deterministic algorithm.
\end{proof}

Previous algorithms for the first two problems \cite {prev} require  $O( \log^2 n \log^4 \Delta)$ and $O( (\log \log n)^6 )$ rounds respectively. The  previous algorithm for the third problem \cite{GKMU18} requires $\Delta^9 \polylog(n)$ rounds (the exponent of $\log n$ is not specified, but it is much larger than 6.)

\textbf{Remark.} Theorem~\ref{mt4}(3) illustrates a counter-intuitive phenomenon: it can be more efficient to use the \emph{randomized} HMM algorithm of Theorem~\ref{sum-hmm-thm}(d) with a very low failure probability in order to get \emph{deterministic} graph algorithms. The reason for this is that the deterministic HMM algorithm needs to succeed on \emph{all} rank-$r$ hypergraphs, whereas the randomized algorithm only has to succeed on a small fraction ($2^{n^2}$ out of a possible $2^{n^r}$) which come from graphs. 

\section{Approximate Nash-William decomposition}
\label{edge-sec}
We consider a variant of the classical Nash-Williams decomposition \cite{nw}. For a graph $G$,  there is an orientation of the edges such that every vertex has out-degree at most its arboricity. The  \emph{approximate edge-orientation} problem is to produce an edge-orientation where every vertex has out-degree at most $D = \lceil (1+\epsilon) \lambda \rceil$. (As usual in distributed algorithms, the parameter $\lambda$ should be viewed as a globally-known upper on arboricity.) Note that $\lambda \leq \Delta$, and it is possible that $\lambda \ll \Delta$.

We allow $G$ to be a multi-graph; by quantizing its adjacency matrix to multiples of $\poly(n/\epsilon)$, we can assume that $\lambda, \Delta \leq \poly(n, 1/\epsilon)$.

In \cite{GS17}, Ghaffari \& Su showed how to obtain such an orientation via a series of augmenting paths; they obtain a randomized algorithm running in $O(\log^4 n/\epsilon^3)$ rounds for simple graphs. This can be viewed as a HMM problem (wherein each augmenting path corresponds to a hyperedge). The deterministic HMM algorithm of \cite{fgk} converts this into a deterministic algorithm, which was subsequently improved by \cite{prev} to $O(\log^{10} n \log^5 \Delta / \epsilon^9)$ rounds. 

We also note that, after the original version of this paper, an alternate deterministic algorithm was developed based on LP solving instead of path augmentation \cite{hh2}; it runs in $\tilde O( \log^2 n / \epsilon^2)$ rounds. In particular, this strictly dominates our algorithm based on HMM.

Let us first summarize the algorithm of \cite{GS17}. The basic outline is to maintain an edge-orientation of $G$, and then iteratively improve it over stages $i = 1, \dots, \ell = \Theta( \frac{\log n}{\epsilon} )$. We let $G_i$ denote the resulting directed graph after stage $i$. The initial orientation $G_0$ can be arbitrary.  

At each stage $i$, we form an auxiliary graph $G'_i$ from $G_i$ by adding a source node $s$ and a sink node $t$. For each vertex $v \in G_i$ of out-degree $d > D$, we add $d - D$ edges from $s$ to $v$. For each vertex $v \in G_i$ of out-degree $d < D$, we add $D - d$ edges from $v$ to $t$. We then select a maximal set $P_i$ of edge-disjoint length-$i$ directed paths in $G'_i$ going from $s$ to $t$. We then ``augment'' the edge-orientation by reversing the orientation of all the edges in the paths in $P_i$.

The following result of \cite{GS17}, which is based on a blocking-path argument, explains why this overall process works.
\begin{theorem}[\cite{GS17}]
  \label{gs-thm}
The graph $G'_i$ has no $s-t$ paths of length strictly less than $i$. The graph $G_{\ell}$, for $\ell = O( \frac{\log n}{\epsilon} )$, has all its vertices with out-degree at most $D$.
\end{theorem}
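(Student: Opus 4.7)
My plan splits into two parts corresponding to the two claims: (i) the monotonic growth of the $s$-$t$ distance in $G'_i$, handled by a classical Dinic-style blocking-flow argument, and (ii) the $O(\log n/\epsilon)$ bound on the number of phases, handled by combining this distance growth with the arboricity constraint.

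For part (i), I would induct on $i$. Fix the BFS layering $L_0 = \{s\}, L_1, \ldots$ of $G'_i$ rooted at $s$, and assume inductively that $d_{G'_i}(s,t) \geq i$. Every length-$i$ path from $s$ to $t$ then uses only ``forward'' edges (from $L_{j-1}$ to $L_j$). When we reverse a maximal edge-disjoint family $P_i$ of such paths to form $G'_{i+1}$, any remaining $s$-$t$ path of length $\leq i$ must either stay inside the original layered DAG and be edge-disjoint from $P_i$ (contradicting maximality of $P_i$) or use a reversed edge, which now points backward in the layering and therefore forces the path length above $i$. Hence $d_{G'_{i+1}}(s,t) \geq i+1$.

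For part (ii), I would exploit a structural property forced by the current $s$-$t$ distance $k$. Since the only $s$-out-neighbors are excess vertices and the only $t$-in-neighbors are slack vertices, all excess vertices live in $L_1$ and all slack vertices in $L_{k-1}$. Thus every real vertex of $V_j := L_1 \cup \cdots \cup L_j$ with $j \leq k-2$ has out-degree at least $D = (1+\epsilon)\lambda$ in $G_i$, giving $\sum_{v \in V_j} \deg^+(v) \geq (1+\epsilon)\lambda|V_j|$. By arboricity, the induced subgraph on $V_j$ contains at most $\lambda|V_j|$ edges; the remaining out-edges of $V_j$ must leave $V_j$, and since BFS distances change by at most $1$ along an edge, they enter $L_{j+1}$ as forward BFS edges. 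So at least $\epsilon\lambda|V_j|$ forward edges cross from $V_j$ into $L_{j+1}$.

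To convert these layerwise lower bounds into $k \leq O(\log n/\epsilon)$, I would combine them with the global constraint that the total number of forward edges in $G'_i$ is $O(\lambda n)$, either via a telescoping argument over layers or by showing that the layer volumes grow multiplicatively as $|V_{j+1}| \geq (1 + \Omega(\epsilon))|V_j|$. The main obstacle is precisely this last step: a naive application of arboricity on $V_{j+1}$ yields only $|V_{j+1}| \geq \epsilon |V_j|$, which is vacuous for $\epsilon < 1$. To sharpen it to $1+\Omega(\epsilon)$ geometric growth, I would either strengthen the in-degree estimate on $L_{j+1}$ by partitioning $G$ into $\lambda$ forests and tracking how the forward out-edges distribute across distinct $L_{j+1}$-targets, or splice together a symmetric BFS from $t$ so that both endpoints of the layering contribute. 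Once this refinement is in place, iterating the multiplicative growth from $|V_1| \geq 1$ up to $|V_k| \leq n$ yields $k = O(\log n/\epsilon)$ as desired.
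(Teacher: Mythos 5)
The paper does not prove this theorem; it is quoted as a result of Ghaffari and Su \cite{GS17}, so there is no internal proof to compare against. Evaluating your proposal on its own merits: part~(i) is the standard Dinic-style potential argument (reversed edges carry layer-potential $-1$, all others at most $+1$, so any short path must avoid reversed edges and be edge-disjoint from $P_i$, contradicting maximality) and is essentially correct.

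For part~(ii), however, you identify the right ingredients but stop one step short of closing the argument, and the ``obstacle'' you describe is an artifact of an unnecessarily lossy edge count. You split the $\geq D|V_j| = (1+\epsilon)\lambda|V_j|$ out-edges of $V_j$ into those internal to $V_j$ (at most $\lambda|V_j|$ by arboricity) and those crossing to $L_{j+1}$ (at least $\epsilon\lambda|V_j|$), and then try to feed only the crossing edges back into an arboricity bound on $V_{j+1}$, getting the useless $|V_{j+1}|\geq \epsilon|V_j|$. The fix is to not throw away the internal edges: every out-edge of $G_i$ leaving a vertex of $V_j$ has its head at BFS level at most $j+1$, hence lies entirely inside $V_{j+1}$. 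Therefore $V_{j+1}$ already induces at least $D|V_j|$ edges; comparing with the arboricity upper bound of at most $\lambda|V_{j+1}|$ induced edges gives $(1+\epsilon)\lambda|V_j| \leq \lambda|V_{j+1}|$, i.e. $|V_{j+1}|\geq(1+\epsilon)|V_j|$ directly, for all $j\leq k-2$ where $k=d_{G'_i}(s,t)$. Iterating from $|V_1|\geq 1$ to $|V_{k-1}|\leq n$ yields $k\leq O(\log n / \epsilon)$, so after that many phases part~(i) forces there to be no $s$-$t$ path left, i.e. no excess vertices. Your proposed workarounds (a forest-by-forest accounting of in-degrees, or a symmetric BFS from $t$) are not needed and you never carry either through, so as written the proposal does not establish the $O(\log n/\epsilon)$ bound.
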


In order to find $P_i$, we form an associated hypergraph $H_i$, whose edge set consists of all length-$i$ paths in $G'_i$ going from $s$ to $t$, and whose vertex set corresponds to all edges of $G'_i$. A maximal matching of $H_i$ is a maximal set of length-$i$ edge-disjoint paths in $G'_i$.

\begin{proposition}
Hypergraph $H_i$ has $\poly(n \lambda)$ vertices and at most $n^3 (2 \lambda)^{i}$ edges.
\end{proposition}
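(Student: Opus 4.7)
I would prove the two bounds separately, both by direct counting. The arboricity hypothesis gives two useful facts that I would use repeatedly: (i) $|E(G)| \leq \lambda(n-1)$, from the decomposition of $E(G)$ into $\lambda$ forests of size at most $n-1$ each; and (ii) any pair of vertices has edge multiplicity at most $\lambda$, since $k$ parallel copies of a single edge would require at least $k$ separate spanning forests to cover them.

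For the vertex bound, write $V(H_i) = E(G'_i)$ and split these edges into three classes: the edges inherited from $G_i$ (which equal $E(G)$, of size at most $\lambda n$ by (i)); the added parallel edges from $s$ (one per unit of out-degree excess, totaling $\sum_{v : d_v > D}(d_v - D) \leq \sum_v d_v = |E(G)| \leq \lambda n$); and the added parallel edges to $t$ (at most $D - d_v$ per vertex, hence at most $Dn \leq 2\lambda n$ in all). The sum is $O(\lambda n)$, i.e.\ $\poly(n \lambda)$, as claimed.

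For the hyperedge bound, each edge of $H_i$ corresponds to a length-$i$ directed $s$-$t$ path $s \to v_1 \to v_2 \to \cdots \to v_{i-1} \to t$ in $G'_i$, together with a specific choice of parallel edge at each step. I would enumerate
\begin{equation*}
|E(H_i)| \;\leq\; \sum_{v_1,\, \ldots,\, v_{i-1}} m(s, v_1) \cdot m(v_{i-1}, t) \cdot \prod_{j=1}^{i-2} m(v_j, v_{j+1}),
\end{equation*}
where $m(\cdot,\cdot)$ denotes edge multiplicity in $G'_i$. Each internal multiplicity satisfies $m(v_j, v_{j+1}) \leq \lambda$ by (ii). For the terminal edges I would use the total-degree identities $\sum_{v_1} m(s, v_1) \leq \lambda n$ and $\sum_{v_{i-1}} m(v_{i-1}, t) \leq 2 \lambda n$ (the former via (i), the latter from $D \leq (1+\epsilon)\lambda$). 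Combining these sum-bounds with the at-most-$n^{i-3}$ choices of the remaining intermediate vertices and the $\lambda$ factors from the internal steps yields the claimed estimate $n^3 (2\lambda)^i$, with slack absorbed into the factor $2^i$.

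The main obstacle is obtaining the tight polynomial prefactor $n^3$ rather than the naive $n^{i-1}$. The issue is that the per-vertex terminal multiplicities $m(s, v_1)$ and $m(v_{i-1}, t)$ can individually be as large as $\Delta = \poly(n, 1/\epsilon)$, since the orientation $G_i$ at an intermediate stage of the algorithm need not be locally balanced. Applying only the per-pair bound $\lambda$ from (ii) would thus yield something like $n^{i-1} \Delta^2 \lambda^{i-2}$, which is far too loose. The resolution is to sum the terminal multiplicities over the endpoint vertex rather than bound them individually, converting two factors of $\Delta$ into two factors of $\lambda n$ via the total-edge bound (i).
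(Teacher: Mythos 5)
Your vertex-count argument is fine and matches the paper's. But the hyperedge bound as you have written it does not go through, and the gap is fundamental, not a matter of slack.

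Your enumeration gives
\begin{equation*}
|E(H_i)| \;\le\; \lambda^{i-2}\, n^{i-3}\,\Bigl(\sum_{v_1} m(s,v_1)\Bigr)\Bigl(\sum_{v_{i-1}} m(v_{i-1},t)\Bigr) \;=\; O\bigl(n^{i-1}\lambda^i\bigr),
\end{equation*}
and you claim the discrepancy with $n^3(2\lambda)^i$ can be ``absorbed into the factor $2^i$.'' It cannot: for $i = \Theta(\log n/\epsilon)$, the ratio $n^{i-1}/n^3$ is $n^{\Theta(\log n/\epsilon)}$, far larger than $2^i$. The source of the blow-up is that you allow all $n$ possible choices for each intermediate vertex $v_2,\dots,v_{i-2}$, and your per-pair multiplicity bound $m(v_j,v_{j+1}) \le \lambda$ gives no control on the \emph{number of distinct out-neighbors}, which is what actually needs bounding. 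The trick of summing terminal multiplicities over the endpoint, which you correctly identify for the $s$- and $t$-edges, does not touch this $n^{i-3}$ factor at all.

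The missing ingredient is structural, and comes from \Cref{gs-thm}: $G'_i$ has no $s$-$t$ path of length strictly less than $i$. The paper exploits this to show that any length-$i$ $s$-$t$ path contains at most one vertex of excess out-degree (a second such vertex $v_2$ would admit a direct $s \to v_2 \to t$ shortcut of length $2 < i$). Since the first vertex after $s$ is necessarily such a vertex, every vertex $v_2,\dots,v_{i-1}$ has out-degree at most $D \le 2\lambda$. That is what caps the number of continuations at each step at $2\lambda$ rather than $n$, producing the path count $m\cdot n\cdot D^{i-3} \le n^2\lambda(2\lambda)^{i-3}$. Your argument never invokes the blocking-path property, so it has no mechanism to avoid the $n^{i-3}$ term and cannot yield the stated bound.
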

\begin{proof}
$G$ has $m \leq n \lambda$ edges. $H_i$ has a vertex for each of these, plus for each of the special edges leaving $s$ and coming to $t$. Each vertex of degree $d$ has at most $d$ special edges, so this contributes another factor of $m$ as well.

For the edge bound, let $U$ denote the set of vertices $v \in G_i$ with out-degree larger than $D$. We claim that any $s-t$ path $p$ in $G'_i$ contains at most one vertex $v$ in $U$. For, suppose that it contains two such vertices $v_1, v_2$, where $v_1$ comes before $v_2$. We could short-circuit this path, getting a path directly from $s$ to $v_2$ to $t$, which has length strictly less than $i$ in $G'_i$, contradicting Theorem~\ref{gs-thm}.

  Thus, in order to enumerate a directed path $p=(s, v_1, v_2, \dots, v_{i-1}, t)$ in $G'_i$, we begin by looping over the first edge from $s$ (which has at most $m$ choices), and the second vertex $v_2$ (which has at most $n$ choices.) For each $j = 3, \dots, i-1$, we have at most $D \leq 2 \lambda$ choices for $v_j$ since $v_j$ is an out-neighbor of $v_{j-1}$. Overall, we have $n^2 \lambda (2\lambda)^{i-3}$ choices for the path $p$.
    \end{proof}

\begin{reptheorem}{mt3}[The deterministic part]
  There is a deterministic $\tilde O(   \frac{ \log^6 n}{ \epsilon^4})$-round algorithm for approximate edge-orientation.
\end{reptheorem}
\begin{proof}
First consider the algorithm of Theorem~\ref{sum-hmm-thm}(d) to find the maximal matching of each hypergraph $H_i$, with failure probability $\delta = 2^{-(n/\epsilon)^c}$ for some constant $c$.  Note that $H_i$ has at most $b = n^3 (2 \lambda)^i$ edges, so it has maximum degree $\Delta \leq b$. Therefore, the HMM algorithm takes $\tilde O( \log \tau(H_i) (
i \log b + i^2 \log \log \tfrac{1}{\delta} + i (\log \log \tfrac{1}{\delta})^2) ) = \tilde O( \frac{ \log^4 n}{ \epsilon^2})$ rounds.

Now note that as $\lambda \leq \poly(n,1/\epsilon)$, there are at most $2^{\poly(n,1/\epsilon)}$ possibilities for the graph $G$. So for $c$ sufficiently large, we can choose a random seed which succeeds on all such graphs $G$. 

Since it requires $i \leq O( \frac{\log n}{ \epsilon} )$ rounds on $G$ to simulate a round on $H$, we find the HMM of hypergraph $H_i$ in $\tilde O(  \frac{\log^5 n}{ \epsilon^3})$ rounds. There are $\ell = O( \frac{\log n}{ \epsilon })$ rounds in total.
\end{proof}

We can get further advantage for the randomized algorithm by using sparsification.
\begin{reptheorem}{mt3}[The randomized part]
There is a randomized $\tilde O( \frac{\log^3 n}{\epsilon^3})$-round algorithm to compute an approximate edge-orientation w.h.p.
\end{reptheorem}
\begin{proof}
  We first get a randomized algorithm running in $\tilde O( \frac{\log^3 n \log \lambda}{ \epsilon^3})$ rounds. To do so, we use Theorem~\ref{sum-hmm-thm}(b) to get the maximal matching of each hypergraph $H_i$. Since $H_i$  has at most $n^3 (2 \lambda)^{\ell}$ edges,  this takes $O( \ell \log \lambda )$ rounds w.h.p.  Simulating $H_i$ takes $O(\ell)$ rounds with respect to $G$ and there are $O(\ell)$ stages.

  We next remove the $\log \lambda$ factor. If $\lambda \leq O( \frac{\log n}{\epsilon^2})$, then the $\log \lambda$ term is already hidden in the $\tilde O$ notation. Otherwise, randomly partition the edges as $E = E_1 \cup \dots \cup E_k$, for $k = \lceil \lambda/y \rceil$ classes, where $y = \frac{c \log n}{\epsilon^2}$ for a sufficiently large constant $c$.
  
  We claim that w.h.p, each graph $(V,E_j)$ has arboricity at most $\lambda' = y (1+\epsilon) + 1$. For, consider some edge-orientation $A$ of $G$ with out-degree at most $\lambda$. In the edge-orientation $A \cap E_j$, each vertex $v$ has at most $y$ outgoing edges in expectation.  Due to the size of $y$, the number of outgoing edges does not exceed $y (1+\epsilon)$ for any vertex w.h.p.
  
  We now run the previous randomized algorithm in parallel on each $(V,E_j)$ with parameter $\epsilon/10$, getting an edge-orientation of maximum out-degree $\lceil \lambda' (1+\epsilon/10)^2 \rceil$. If we combine all these edge-orientations, then any vertex has out-degree at most $k \lceil \lambda' (1+\epsilon/10)^2 \rceil$. For $\epsilon$ sufficiently small and $c$ sufficiently large, this is at most $\lambda ( 1 + \epsilon)$.
\end{proof}

\section{Acknowledgments}
Thanks to Mohsen Ghaffari and Fabian Kuhn for helpful discussions and reviewing some early drafts. Thanks to Hsin-Hao Su for discussions about Nash-Williams decompositions. Thanks to anonymous conference reviewers and journal reviewers for helpful comments.

\appendix

\section{Proof of Lemma~\ref{lem-split-crude}}
\label{degree-split-app}
We begin by using Lemma~\ref{basicderand2} for degree-splitting.
\begin{lemma}\label{lemma:HG_splitting}
Let parameters $\epsilon, \eta \in (0, \tfrac{1}{2})$ satisfy $\Delta \geq 100 \log(r/\eta)/\epsilon^2$. Given a good coloring of $H$, there is a deterministic $\tilde O(r \log(1/\eta)/\epsilon^2)$-round algorithm to generate  disjoint edge subsets $L_1, L_2 \subseteq E$ such that $a(L_1 \cup L_2) \geq (1-\eta) a(E)$ and $\deg_{L_j}(v) \leq (1+\epsilon) \Delta/2$ for all $v, j$.
\end{lemma}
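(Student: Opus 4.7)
The plan is to combine the virtual-node-splitting technique of \cite{prev} with a Chernoff-type pessimistic estimator, derandomized via the proper-coloring machinery of \Cref{basicderand2}. First I would fix $d = \lceil 50 \log(r/\eta)/\epsilon^2 \rceil$ and, using the good coloring of $H$, split each vertex $v$ into ``virtual nodes'' $v_1, v_2, \dots$ by partitioning $N(v)$ into consecutive blocks $N_1(v), N_2(v), \dots$ each of size at most $d$. The hypothesis $\Delta \geq 100 \log(r/\eta)/\epsilon^2$ guarantees $d \leq \Delta/2$. Let $H'$ denote the resulting split hypergraph; it has rank $r$ and maximum degree $d$. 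A good coloring of $H'$ can be derived from the given good coloring of $H$ in $O(\log^*(r\Delta))$ rounds, which is absorbed by the $\tilde O$ notation.

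The random process under consideration is the obvious one: each edge $e \in E$ independently draws a bit $X_e \in \{0,1\}$ uniformly, joining the initial set $L_j^0$ for $j = X_e + 1$. Call a virtual node $v_i$ \emph{bad} if $\deg_{L_1^0}(v_i) > (1+\epsilon)d/2$ or $\deg_{L_2^0}(v_i) > (1+\epsilon)d/2$; we will form $L_j$ from $L_j^0$ by discarding every edge incident to a bad virtual node. Since each discarded edge has at least one bad virtual endpoint,
\[
a( E \setminus (L_1 \cup L_2) ) \leq \sum_{v_i \text{ bad}} a(N_i(v)).
\]
Following the symmetric-polynomial Chernoff bound of Section~\ref{toy-sec}, I would take $w = \lceil \epsilon d / 2 \rceil$ and define
\[
\Phi = \tbinom{\lceil (1+\epsilon)d/2 \rceil}{w}^{-1}\sum_{v,i} a(N_i(v)) \Bigl( \tbinom{\deg_{L_1^0}(v_i)}{w} + \tbinom{\deg_{L_2^0}(v_i)}{w} \Bigr).
\]
Since $[v_i \text{ is bad}]$ is dominated by the corresponding ratio of binomials, $\Phi$ pessimistically bounds $\sum_{v_i \text{ bad}} a(N_i(v))$. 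A direct computation using $\bE[\tbinom{\deg_{L_j^0}(v_i)}{w}] = \tbinom{d}{w} 2^{-w}$ together with the standard Chernoff-type estimate $\tbinom{d}{w} 2^{-w} / \tbinom{(1+\epsilon)d/2}{w} \leq e^{-\Omega(\epsilon^2 d)}$ gives, for our choice of $d$, that $\bE[\Phi] \leq (\eta/r) \sum_{v,i} a(N_i(v)) \leq (\eta/r) \cdot r \cdot a(E) = \eta \, a(E)$, where the last step uses $\sum_{v,i} a(N_i(v)) = \sum_v a(N(v)) \leq r \, a(E)$.

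Finally, I would apply \Cref{basicderand2} with the pessimistic estimator $\Phi$ on the split hypergraph $H'$, assigning to each virtual node $v_i$ the flag $F_{v_i} = \tbinom{\lceil (1+\epsilon)d/2 \rceil}{w}^{-1}a(N_i(v))\bigl(\tbinom{\deg_{L_1^0}(v_i)}{w} + \tbinom{\deg_{L_2^0}(v_i)}{w}\bigr)$, which depends only on the $X_e$ for edges incident to $v_i$ and so is locally computable. This produces a deterministic assignment of the $X_e$ with $\Phi \leq \bE[\Phi] \leq \eta \, a(E)$, hence $a(L_1 \cup L_2) \geq (1-\eta) a(E)$. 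Since every retained virtual node $v_i$ satisfies $\deg_{L_j}(v_i) \leq (1+\epsilon)d/2$, summing over virtual copies gives $\deg_{L_j}(v) \leq (1+\epsilon)\Delta/2$ for every original vertex $v$. The runtime of \Cref{basicderand2} on $H'$ is $O(rd) = \tilde O(r \log(1/\eta)/\epsilon^2)$ as required. The main technical obstacle will be the Chernoff calculation together with the integer-rounding slack: verifying that the constant $100$ in the hypothesis on $\Delta$ is large enough to absorb both the Chernoff error and the contribution of a possibly-short final virtual node per vertex, the latter easily handled by tightening $\epsilon$ by a constant factor and re-splitting if necessary.
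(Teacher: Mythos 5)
Your overall plan---virtual-node splitting followed by derandomization via \Cref{basicderand2}---matches the paper's approach, but there is a genuine gap in the final degree bound. You define a virtual node $v_i$ to be bad when $\deg_{L_j^0}(v_i)$ exceeds the \emph{fixed} threshold $(1+\epsilon)d/2$, and then sum over virtual copies. But if $v$ has $\ell = \lceil \deg(v)/d\rceil$ virtual nodes, this gives only $\deg_{L_j}(v)\leq \ell(1+\epsilon)d/2$, and $\ell d$ can exceed $\deg(v)$ by up to $d-1$. Since the hypothesis only guarantees $d\leq\Delta/2$, the bound you obtain is $\deg_{L_j}(v)\leq \tfrac{3}{4}(1+\epsilon)\Delta$, which is off by a factor of roughly $3/2$ --- a multiplicative error on $\Delta$ itself, not on the $(1+\epsilon)$ slack, so ``tightening $\epsilon$'' cannot repair it. The paper avoids this in two coordinated moves: the bad-event threshold for each virtual node $u$ is $(1+\epsilon)\deg(u)/2$ (proportional to $u$'s actual degree, so the per-vertex sum is exactly $(1+\epsilon)\deg(v)/2 \leq (1+\epsilon)\Delta/2$), and every virtual node is given degree in the range $[\alpha,2\alpha)$ with $\ell = \lfloor \deg(v)/\alpha\rfloor$ nodes, so that no node is too small to have a usable Chernoff bound. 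This is possible only after first discarding vertices with $\deg(v)\leq\Delta/2$, which automatically meet the degree constraint and so can be safely ignored.

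A secondary observation: your choice of a symmetric-polynomial (binomial-ratio) pessimistic estimator is unnecessary here. You are invoking \Cref{basicderand2}, which operates on a \emph{proper} vertex coloring and places essentially no structural restriction on the flags $F_u$ beyond local computability. The multilinearity machinery of Section~\ref{toy-sec} is only needed for \Cref{basicderand3} with a non-proper coloring. The paper simply takes $F_u = a(N(u))\,[[\,Z_{u,1} > (1+\epsilon)\deg(u)/2 \vee Z_{u,2} > (1+\epsilon)\deg(u)/2\,]]$ and applies the ordinary Chernoff bound to $\bE[F_u]$; this is both simpler and exact. Your version is not wrong---the binomial ratio dominates the indicator pointwise---but it adds complexity and worsens the constants without gaining anything in this particular lemma.
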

\begin{proof}
Any vertex $v$ with $\deg(v) \leq \Delta/2$ will automatically have the degree condition satisfied, and can be ignored. Hence, we assume without loss of generality that $\deg(v) \geq \Delta/2$ for all vertices $v$.

  Define $\alpha = 50 \log( r/ \eta)/\epsilon^2$.  We construct a new hypergraph $H' = (E,U)$ by dividing every vertex $v \in V$ into $\ell = \left\lfloor\deg_H(v)/\alpha\right\rfloor $
  virtual nodes $u_1,\dots,u_{\ell}$ in $H'$ and we assign each of the
  hyperedges of $v$ to exactly one of the virtual nodes
  $u_1,\dots,u_{\ell}$, so that each $u_i$ has $\deg(u_i) \in [\alpha, 2 \alpha)$ and $\sum_{i=1}^{\ell} \deg(u_i) = \deg(v)$. This subdivision process is possible due to our assumption that every vertex in $V$ (that we are not ignoring) has degree at least $\alpha$. 
        
Our construction will have three parts. First, we define a function $F: U \rightarrow [0,\infty)$, which can be computed via a 1-round randomized algorithm on $H'$. Next, we derandomize this to select random bits such that $\sum_{u \in U} F_{u} \leq \sum_{u \in U} \bE[F_{u}]$. Finally, we construct $L_1, L_2$.
   
  To begin, randomly partition the edges into two sets $L_1', L'_2$, wherein each edge $e$ goes into $L_1'$ or $L'_2$ independently with probability $1/2$. For each virtual node $u \in U$ and $j = 1,2$, we define $Z_{u,j} = \deg_{L'_j}(u)$, and we set    
 $$
  F_{u} = a(N(u))  \bigl[ \bigl[  \bigvee_{j=1,2}Z_{u,j} > (1+\epsilon) \deg(u) /2 \ \bigr] \bigr]
  $$
  
Now let us compute $\sum_{u} \bE[F_{u}]$. For a node $u \in U$, the value $Z_{u,j}$ is a binomial random variable with mean $\deg(u)/2 \geq \alpha/2$. By the Chernoff bound, 
      $$
      \Pr( Z_{u,j} >(1+\epsilon) \deg(u) /2  ) \leq e^{-\epsilon^2 \deg(u)/6} \leq e^{-\epsilon^2 \alpha/16}
      $$

      Thus, overall we have
      \begin{align*}
        \sum_{u \in U} \bE[F_{u}] &\leq  \sum_{u \in U}\sum_j a(N(u)) \Pr( Z_{u,j} > (1+\epsilon) \deg(u) /2) \leq \sum_{u \in U} 2 a(N(u)) e^{-\epsilon^2 \alpha/16}
\end{align*}

      Since $H'$ has rank $r$, we have $\sum_{u \in U} a(N(u)) \leq r a(E)$, and so overall      
      $$
      \sum_{u} \bE[F_{u}] \leq a(E)( 2 r e^{-\epsilon^2 \alpha/16} ) \leq \eta a(E)
      $$
      where the last inequality follows from our choice of $\alpha$.
            
      By \Cref{basicderand2}, there is a deterministic $O(r \alpha)$-round algorithm to find random bits such that $\sum_{u} F_{u} \leq \sum_{u} \bE[F_{u}] \leq \eta a(E)$. Now, suppose we have fixed such random bits, determining the sets $L'_1, L'_2$. We form $L_1, L_2$ by starting with the sets $L_1', L_2'$ and then discarding any edges incident to a vertex $u \in U$ with $Z_{u,j} > (1+\epsilon) \deg(u)/2$.
      
     By summing the virtual nodes $u$ corresponding to a vertex $v \in V$, we get
      $$
      \deg_{L_j}(v) \leq \sum_{\substack{u \in U \\ \text{corresponding to $v$}}} (1+\epsilon) \deg(u)/2 \leq (1+\epsilon) \deg(v)/2 \leq (1+\epsilon) \Delta/2
      $$
      
      Furthermore, this ensures that
      $$
      a(L_1 \cup L_2) \geq a(E) - \sum_{u \in U} a(N(u)) [[ \bigvee_{j=1,2} Z_{u,j} > (1+\epsilon) \deg(u)/2]] = a(E) - \sum_{u \in U} F_{u}
      $$
	which is by construction at least $(1-\eta) a(E)$.
\end{proof}

We prove Lemma~\ref{lem-split-crude} by iterating this degree-splitting process:
\begin{proof}[Proof of Lemma~\ref{lem-split-crude}]
We will partition the edge sets over $s = \lfloor \log_2 k \rfloor$ stages, wherein each stage $i = 0, \dots, s-1$ has disjoint edge sets $T_{i,1}, \dots, T_{i,2^i}$. Specifically, we form the sets $T_{i+1,2j}, T_{i+1, 2j+1}$  at stage $i$ by applying Lemma~\ref{lemma:HG_splitting} in parallel to each hypergraph $H_{i,j} = (V, T_{i,j})$ with parameters $\epsilon = \frac{1}{4 s}, \eta = \frac{\delta}{4 s}$. Here $T_{i+1, 2j}$ and $T_{i+1, 2j+1}$ correspond to the edge-sets $L_1, L_2$.

  We first claim that $\Delta_i = (1 + \epsilon)^i \Delta/2^i$ is an upper bound on the degree of each $H_{i,j}$. We show this by induction. The base case $i=0$ is trivial. For the induction step, let us first show that the condition of Lemma~\ref{lemma:HG_splitting} is satisfied, namely
  $$
  \Delta_i \geq 100 \log(r/\eta)/\epsilon^2
  $$
  
  Since $\Delta_i =  (1 + \epsilon)^i \Delta/2^i \geq \Delta/2^s$ , it suffices to show that
  \begin{equation}
  \label{bbff3}
  \frac{\Delta}{2^s} \geq 100 \log(4 r s/\delta)/\epsilon^2
  \end{equation}

As $s = \lfloor \log_2 k \rfloor$, our hypothesis ensures that Eq.~(\ref{bbff3}) holds for $C$ sufficiently large. Therefore, Lemma~\ref{lemma:HG_splitting} ensures that every vertex $v$ has $\deg_{T_{i+1,j}}(v)  \leq (1+\epsilon) \Delta_i/2 = \Delta_{i+1}$, thus completing the induction. Furthermore,  we get $a(T_{i+1,2 j} \cup T_{i+1, 2j+1}) \geq a(T_{i,j}) (1-\eta)$ for every $i,j$.

Let $E_i = T_{i,1} \cup \dots \cup T_{i, 2^i}$. Summing over $j$ gives $a(E_{i+1}) \geq a(E_i) (1-\eta)$, which implies that
$$
a(E_s) \geq a(E) (1-\eta)^s \geq a(E) (1 - \delta)
$$

Now set $E' = E_s$ and set $\chi(e) = j$ for each $e \in T_{s,j}$. This ensures that every vertex $v \in V$ has $|N(v) \cap T_j| \leq \Delta_{s} \leq (1+\epsilon)^{s} \Delta/2^s \leq 2 \Delta/2^s  \leq 4 \Delta/k$ as desired.

In each stage,  Lemma~\ref{lemma:HG_splitting} runs in  $\tilde O(r \log(1/\eta)/\epsilon^2 ) = \tilde O(r s^2 \log \tfrac{1}{\delta})$ rounds. So the overall complexity is $\tilde O(r s^3 \log \tfrac{1}{\delta}) = \tilde O(r \log^3 k \log \tfrac{1}{\delta})$.
\end{proof}

\section{Proof of \Cref{find-fmatch1}}
\label{kmw-appendix}
We begin by considering an edge-weighting $a$ which has \emph{bounded} range; by quantizing edge weights later we remove this range dependence.

\begin{lemma}
  \label{maxmatch1a}
Let $H = (V,E)$ be a hypergraph with edge-weighting $a$ such that $W_{\min} \leq a(e) \leq W_{\max}$ for all edges $e$, and let $W = W_{\max}/W_{\min}$. There is a deterministic $O(\epsilon^{-4} \log(W r) \log(W \Delta))$-round  algorithm to find a fractional matching $h$ with $a(h) \geq (1-\epsilon) a^*(H)$.
\end{lemma}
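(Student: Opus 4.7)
The plan is to cast fractional hypergraph matching as a positive (packing) linear program and invoke the distributed LP approximation algorithm of Kuhn, Moscibroda, and Wattenhofer \cite{kmw}. The relevant LP is
\[
\max \sum_{e \in E} a(e) h(e) \quad \text{subject to} \quad \sum_{e \in N(v)} h(e) \leq 1 \ \ \forall v \in V, \quad h \geq 0,
\]
whose optimum equals $a^*(H)$ by definition. In the LOCAL model, each variable $h(e)$ is naturally associated with an edge-node of the incidence graph $\text{Inc}(H)$ and each constraint with a vertex-node, so one round of communication on $H$ simulates one round of interaction between each variable and its constraints.

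First I would normalize by dividing every weight $a(e)$ by $W_{\min}$. This leaves both the LP optimum and the feasible region unchanged (up to a global scaling of the objective) but ensures that the objective coefficients lie in $[1, W]$ and the constraint matrix is $0/1$. In this normalized LP, each variable appears in at most $r$ constraints (one per vertex in its hyperedge), each constraint contains at most $\Delta$ variables, and the ratio between the largest and smallest nonzero coefficient is at most $W$.

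Second, I would plug this LP into the deterministic positive-LP solver of \cite{kmw}, which returns a feasible $h$ with objective value at least $(1 - \epsilon)$ times the LP optimum. Their round complexity is of the form $O(\epsilon^{-4} \log(\text{col-sparsity} \cdot \text{ratio}) \log(\text{row-sparsity} \cdot \text{ratio}))$; substituting our parameters gives $O(\epsilon^{-4} \log(Wr) \log(W\Delta))$ rounds on $\text{Inc}(H)$, and hence on $H$. Since the LP optimum equals $a^*(H)$, the returned $h$ satisfies $a(h) \geq (1 - \epsilon) a^*(H)$, and it is by construction a fractional matching.

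The main obstacle is lining up the parameter dependencies stated in \cite{kmw} with our setting. Their theorem is phrased for generic positive LPs in terms of row sparsity, column sparsity, and coefficient ratio, so the verification amounts to (i) checking that these map to $\Delta$, $r$, and $W$ respectively in our LP, and (ii) confirming that the two logarithmic factors in the round bound combine exactly as $\log(Wr) \log(W\Delta)$ under our $\{0,1\}$ constraint matrix. Beyond this bookkeeping, the argument is a black-box invocation, so no further algorithmic steps are needed.
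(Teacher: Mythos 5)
Your overall plan (cast fractional matching as a packing LP and invoke the solver of \cite{kmw}) is the same as the paper's, and the claimed complexity is correct. But there is a genuine gap in how you interface with the KMW framework. As the paper notes, \cite{kmw} analyzes packing LPs specifically in the normalized form
\[
\text{maximize } \sum_i x(i) \quad \text{subject to } A x \leq c,
\]
i.e.\ with an \emph{unweighted} objective, with all nonzero entries $A_{ij} \geq 1$, and with round complexity $O(\epsilon^{-4} \log \Gamma_p \log \Gamma_d)$ where $\Gamma_p = (\max_{j'} c_{j'})\bigl(\max_j \tfrac{\sum_i A_{ij}}{c_j}\bigr)$ and $\Gamma_d = \max_i \sum_j A_{ij}$. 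Your normalization ``divide by $W_{\min}$'' still leaves a weighted objective $\sum_e a(e) h(e)$ with coefficients spread over $[1,W]$, together with a $0/1$ constraint matrix; this does not match the required form, and it is unclear what ``ratio'' means in your proposed complexity expression when the constraint entries all equal~$1$. Your claim that ``col-sparsity $\cdot$ ratio'' and ``row-sparsity $\cdot$ ratio'' are the right quantities is exactly the unverified bookkeeping you flag at the end, and it does not hold under your normalization.

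The missing idea is the change of variables $x(e) = a(e) h(e)$, which folds the objective weights into the constraint matrix. After this substitution the LP becomes $\max \sum_e x(e)$ subject to $\sum_{e \ni v} (W_{\max}/a(e)) x(e) \leq W_{\max}$, with constraint entries either zero or in $[1, W]$, matching the KMW form. One then computes $\Gamma_p \leq W\Delta$ and $\Gamma_d \leq Wr$, which yields the stated $O(\epsilon^{-4}\log(W\Delta)\log(Wr))$ round bound. Without this substitution (or an explicit argument that KMW handles weighted objectives with the same complexity, which the paper does not claim), your derivation of the two log factors is not justified.
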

\begin{proof}
The problem of finding a maximum-weight fractional matching $h$ can be interpreted as a type of packing LP, namely
  \begin{equation}
  \label{ggh0}
  \text{maximize $\sum_e a(e) h(e)$} \qquad \text{subject to $\forall v \sum_{e \in N(v)} h(e) \leq 1$}
  \end{equation}
  
  Kuhn, Moscibroda \& Wattenhofer \cite{kmw} provides a deterministic algorithm for solving such problems. Their analysis applies to generic packing LP's; however, it requires these to be parameterized in the following form:
  $$
  \text{maximize $\sum_i x(i)$} \qquad \text{subject to $x \in \mathbb R^n, A x \leq c$}
  $$
  All the entries of the constraint matrix $A$ must either have $A_{ij} = 0$ or $A_{ij} \geq 1$. Furthermore, if $A_{ij} > 0$ and $A_{i'j} > 0$ for two rows $i, i'$, the communications graph must have an edge from $i$ to $i'$.  (This parametrization is chosen so that the corresponding dual LP has a   nice structure.)
  
With this parametrization, the run-time of the algorithm of \cite{kmw} is determined by two key parameters $\Gamma_p$ and $\Gamma_d$ (here, $p$ and $d$ stand for primal and dual). These are defined as:
 $$
  \Gamma_p =  \bigl( \max_{j'} c_{j'} \bigr) \bigl( \max_j  \frac{ \sum_{i=1}^n A_{ij} }{c_j} \bigr), \qquad  \Gamma_d = \max_i \sum_{j=1}^m A_{ij}
  $$
With this parametrization, \cite{kmw} runs in time $O(\epsilon^{-4} \log \Gamma_p \log \Gamma_d)$ to get a $(1+\epsilon)$-approximation. 
  
  To transform the fractional matching LP into this form, we  define variables $x(e) = a(e) h(e)$ for each edge $e$, and our constraints become
  \begin{equation}
  \label{ggh1}
  \max \sum_{e} x(e) \qquad \text{subject to } \forall v \sum_{e \ni v} (W_{\max}/a(e)) x(e) \leq W_{\max}
  \end{equation}
  
  Given a solution $x$ to Eq.~(\ref{ggh1}), we will then set $h(e) = x(e)/a(e)$; this will clearly satisfy the fractional matching LP given by Eq.~(\ref{ggh0}). In this formulation, our constraint matrix $A$ is given by 
  $$
A_{ev} = \begin{cases}
  W_{\max}/a(e) & \text{if $e \in N(v)$} \\
  0 & \text{if $e \notin N(v)$} \\
  \end{cases}
  $$
  which has its entries either zero or in the range $[1,W]$.
 
 This LP now has the form required by \cite{kmw}, with the constraint vector $c$ having all its entries equal to $W_{\max}$. Therefore, we have 
	$$
        \Gamma_p = W_{\max} \max_{v \in V} \frac{ \sum_{e \in N(v)} (W_{\max}/a(e))}{W_{\max}} \leq W \Delta, \qquad \Gamma_d = \max_{e \in E} \sum_{v \in e} (W_{\max}/a(e)) \leq W r
$$
  and so the algorithm of \cite{kmw} runs in $O(\epsilon^{-4} \log(W \Delta) \log(W r))$ rounds.
\end{proof}

The next result describe deterministic and randomized methods to \emph{partially} discretize a fractional matching, while losing only a constant factor in the weight.

\begin{proposition}
  \label{maxmatch1}
  Let $H$ be a hypergraph with an edge-weighting $a$ and fractional matching $h'$ satisfying $a(h') \geq a^*(H)/2$.
  \begin{enumerate}
  \item There is a deterministic constant-round algorithm to generate an $10 \Delta$-proper fractional matching $h$ with $a(h) \geq \Omega(a^*(H))$.
  \item There is a randomized constant-round algorithm to generate a $\lceil 20 \log r \rceil$-proper fractional matching $h$ with $\bE[a(h)] \geq \Omega(a^*(H))$.
    \end{enumerate}
\end{proposition}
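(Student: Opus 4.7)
For Part 1 (deterministic), the plan is to round each edge value purely locally in zero rounds by setting $h(e) = \lceil 5\Delta h'(e)\rceil/(10\Delta)$. This makes $10\Delta h(e)$ a non-negative integer, so $h$ is $(10\Delta)$-proper by construction. The sandwich bound $h'(e)/2 \leq h(e) \leq h'(e)/2 + 1/(10\Delta)$ then yields $\sum_{e \in N(v)} h(e) \leq \tfrac12 \sum_{e \in N(v)} h'(e) + \deg(v)/(10\Delta) \leq \tfrac12 + \tfrac1{10} < 1$, so $h$ is a valid fractional matching, while the lower side gives $a(h) \geq a(h')/2 \geq a^*(H)/4 = \Omega(a^*(H))$.

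For Part 2 (randomized), I would set $q = \lceil 20 \log r \rceil$ and $p_e = q h'(e)/10$. Each edge $e$ independently samples $X_e \sim \mathrm{Bernoulli}(p_e - \lfloor p_e\rfloor)$ and forms the integer $\tilde\xi_e = \lfloor p_e\rfloor + X_e$, so $\bE[\tilde\xi_e] = p_e$. In two rounds, each vertex $v$ would learn $\text{load}(v) = \sum_{e \in N(v)} \tilde\xi_e$ and broadcast a ``bad'' flag if $\text{load}(v) > q$. Finally, $\xi_e := \tilde\xi_e$ when no endpoint of $e$ is bad and $\xi_e := 0$ otherwise; the output is $h(e) = \xi_e/q$. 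At a good vertex, $\sum_{e \in N(v)} h(e) \leq \text{load}(v)/q \leq 1$; at a bad vertex every incident edge is zeroed; so $h$ is always a valid $q$-proper fractional matching.

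The substance is the expected-weight bound for Part 2. Because $\sum_{e \in N(v)} \lfloor p_e\rfloor \leq \sum_{e \in N(v)} p_e \leq q/10$, a vertex $v$ can be bad only when $Y_v := \sum_{e \in N(v)} X_e > 9q/10$. The $X_e$ are independent Bernoullis with $\bE[Y_v] \leq q/10$, so the standard Chernoff bound $\Pr[Y_v \geq k] \leq (e\mu/k)^k$ applied with $\mu = q/10$ and $k = 9q/10$ gives $\Pr[v \text{ bad}] \leq (e/9)^{9q/10} \leq r^{-C}$ for arbitrarily large $C$ once the constant $20$ in $q$ is taken big enough. The hard part will be the correlation between $\tilde\xi_e$ and the bad status of an endpoint, since both depend on $X_e$. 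I plan to handle this by splitting on $X_e$: write $\bE[\tilde\xi_e \cdot [[Y_v > 9q/10]]] = \lfloor p_e\rfloor \Pr[Y_v > 9q/10] + \Pr[X_e = 1] \Pr[Y_v - X_e > 9q/10 - 1]$, using independence of $X_e$ from $Y_v - X_e$. Since $\lfloor p_e\rfloor \leq p_e$ and $\Pr[X_e = 1] \leq p_e$, this is at most $2 p_e \cdot r^{-C}$. Union-bounding over the at most $r$ vertices of $e$ and summing over $e$ then gives $\bE[h(e)] \geq h'(e)/10 - O(r^{1-C}) h'(e) \geq h'(e)/20$ for $C$ large enough, so $\bE[a(h)] \geq a(h')/20 = \Omega(a^*(H))$.
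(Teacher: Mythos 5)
Your proof is correct, and both parts follow the same high-level framework as the paper (quantize in Part~1; randomly round, then discard overloaded vertices in Part~2), but the specific constructions differ in ways worth noting.

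For Part~1, the paper rounds $h'$ \emph{down} to the nearest multiple of $1/(10\Delta)$, then bounds the total loss $\leq a(E)/(10\Delta)$ against $a^*(H)$ using the observation that $a^*(H) \geq a(E)/\Delta$ (witnessed by the trivial uniform fractional matching). Your halve-then-round-up scheme $h(e) = \lceil 5\Delta h'(e)\rceil/(10\Delta)$ instead gives a pointwise lower bound $h(e) \geq h'(e)/2$ for free, so you never need the $a^*(H) \geq a(E)/\Delta$ observation; feasibility is paid for by the absorbed factor of $1/2$ plus the $\deg(v)/(10\Delta) \leq 1/10$ slack. This is a clean variant and arguably simpler to verify.

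For Part~2, the paper samples $X_e \sim \text{Poisson}(10 h'(e)\log r)$ copies of each edge and discards edges at vertices whose resulting degree exceeds $t = 20\log r$; the decoupling of $X_e$ from $\deg_{L'-e}(v)$ is handled implicitly via Poisson additivity, and the paper is somewhat terse about the correlation between "$e$ is sampled" and "$e$ is discarded." You instead use a Bernoulli rounding $\tilde\xi_e = \lfloor p_e\rfloor + \text{Bern}(\{p_e\})$ (so $\bE[\tilde\xi_e] = p_e$ exactly) and make the correlation explicit by conditioning on $X_e$: writing $\bE[\tilde\xi_e \cdot [[v\text{ bad}]]] = \lfloor p_e\rfloor \Pr[Y_v > 9q/10] + \Pr[X_e=1]\Pr[Y_v - X_e > 9q/10 - 1]$ and using independence of $X_e$ from $Y_v - X_e$. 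This is tighter bookkeeping than the paper provides, at the cost of a slightly more elaborate construction. Your Chernoff calculation $(e/9)^{9q/10} \leq r^{-C}$ with $C$ large for $q = \lceil 20\log r\rceil$ is correct (using natural log, consistent with the paper), and the union bound over $\leq r$ endpoints followed by $\bE[h(e)] \geq h'(e)(1/10 - O(r^{1-C})) \geq h'(e)/20$ is sound. Both arguments arrive at the same conclusion in $O(1)$ rounds.
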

\begin{proof}
  
\noindent  (1) Form $h$ by rounding down $h'$  to the nearest multiple of $\delta = \frac{1}{10 \Delta}$. The loss incurred is at most $\delta a(E)$, i.e. $a(h) \geq a(h') - \delta a(E)$. By hypothesis, $a(h') \geq a^*(H)/2$. Also, we must have $a^*(H) \geq a(E)/\Delta$, as there is a trivial fractional matching (setting $a(e) = 1/\Delta$ for every edge) with this value. Thus,  $a(h) \geq a^*(H)/2 - 0.1 a(E)/\Delta \geq \Omega(a^*(H))$. 
  
\noindent (2) Consider the following random process:  generate edge multi-set $L'$ consisting of $X_e \sim \text{Poisson}(p(e))$ copies of each edge $e$, where  $p(e) = 10 h'(e) \log r$. If any vertex $v$ has more than $t = 20 \log r$ neighbors in $L'$, it discards all such neighbors. The remaining edges are the set $L$. 

It is clear that $\bE[a(L')] = 10 a(h') \log r$. Let us compute the probability that an edge $e$ gets removed from $L'$, due to some vertex $v \in e$ having a degree exceeding $t$. For every vertex $v \in e$, the value $Z = \deg_{L' - e}(v)$ is a Poisson random variable with mean at most $10 \log r = t/2$ (since $h'$ is a fractional matching). So Chernoff's bound gives $\Pr(Z >  t ) \leq e^{-10 \log r/3} \leq r^{-3.3}.$ A union bound over the vertices in $e$ shows that $e$ is removed from $L'$ with probability at most $r^{-2.3}$. 

Since this holds for every edge $e \in L'$, the expected weight of removed edges is at most $\sum_{e \in E} p(e) \times a(e) \times r^{2.3} \leq 3.7 a(h') \log r$. Summing over $e$ we thus get $\bE[a(L)] \geq 10 a(h') \log r - 3.7 a(h') \log r \geq \Omega(a^*(H) \log r)$. 

Thus, $h(e) = \frac{ [[e \in L]]}{\lceil 20 \log r \rceil}$ is a fractional matching $h$ which is $\lceil 20 \log r \rceil$-proper and which has $\bE[a(h)] \geq \Omega(\bE[\frac{a(L)}{\log r}]) \geq \Omega(a^*(H))$.
\end{proof}

Finally, we are ready to prove \Cref{find-fmatch1}.
\begin{proof}[Proof of \Cref{find-fmatch1}]
  Let us first show the deterministic algorithm. We let $E_i$ denote the set of edges with weights in the range $[(rq)^i, (rq)^{i+1})$ for parameter $q = 10 \Delta$, and let  $H_i = (V,E_i)$. Since we can decompose any fractional matching of $h$ into fractional matchings of the hypergraph $H_i$, we have $\sum_i a^*(H_i) \geq a^*(H)$.
    
    Our first step is to apply \Cref{maxmatch1a} to each $H_i$ to get a fractional matching $h'_i$ with $a(h'_i) \geq a^*(H_i)/2$. Since the edge weights are in the range $[(r q)^i, (r q)^{i+1}]$, this takes $O(\log^2 (\Delta r))$ rounds. We next use \Cref{maxmatch1} to get $q$-proper fractional matchings $h_i$ with $a(h_i) \geq \Omega(a^*(H_i))$.

We finally need to combine the fractional matchings $h_i$ into a single fractional matching $h$ for $H$. We form $h$ as follows: for any edge $e \in E_i$, if there is some $f \in E_j$ for $j \geq i+3$ with $h_j(f) > 0$ and $f \cap e \neq \emptyset$, then set $h(e) = 0$; otherwise,  set $h(e) = h_i(e)/3$. More formally:
  $$
  h(e) = \frac{h_i(e)}{3} [[ \bigwedge_{j = i+3}^{\infty} \bigwedge_{\substack{f \in E_j \\ f \cap e \neq \emptyset }} h_j(f) = 0]]
  $$

Clearly $h$ is $O(\Delta)$-proper. We need to check that $a(h) \geq \Omega( \sum_i a(h_i) )$ and that $h$ is  a fractional matching.  First, note that since $h_j$ is $q$-proper, we have $h_j(f) \geq 1/q$ whenever $h_j(f) > 0$. So:
  $$
  h(e) \geq \frac{h_i(e)}{3} \bigl( 1 - \sum_{j=i+3}^{\infty} \sum_{\substack{f \in E_j \\ f \cap e \neq \emptyset }} [[h_j(f) > 0]] \bigl) \geq \frac{h_i(e)}{3} \bigl( 1 - \sum_{j=i+3}^{\infty} \sum_{\substack{f \in E_j \\ f \cap e \neq \emptyset }} q h_j(f) \bigr)
$$
  We may now estimate $a(h)$ as:
  \begin{align*}
    a(h) &= \sum_i \sum_{e \in E_i} a(e) h(e) \geq \sum_i \sum_{e \in E_i} a(e) \frac{h_i(e)}{3} \bigl( 1 - \sum_{j=i+3}^{\infty} \sum_{\substack{f \in E_j \\ f \cap e \neq \emptyset }} q h_j(f) \bigr) \\
    &= \tfrac{1}{3} \sum_i \sum_{e \in E_i} h_i(e) a(e) - \tfrac{1}{3} \sum_j \sum_{f \in E_j} \frac{h_j(f)}{q} \sum_{i = -\infty}^{j-3} \sum_{\substack{e \in E_i \\ f \cap e \neq \emptyset}} a(e) h_i(e)
    \end{align*}
    
  For a given $j$ and $f \in E_j$, the second summand here can be rewritten as:
    \begin{align*}
 \frac{h_j(f)}{q} \sum_{i = -\infty}^{j-3} \sum_{\substack{e \in E_i \\ f \cap e \neq \emptyset}} a(e) &\leq     q h_j(f) \sum_{i = -\infty}^{j-3} \sum_{\substack{e \in E_i \\ f \cap e \neq \emptyset}} (r q)^{i+1} \times h_i(e) \qquad \text{since $a(e) \leq (r q)^{i+1}$} \\
    &\leq q h_j(f) \sum_{i = -\infty}^{j-3} (r q)^{i+1} \times |f| \qquad \text{since $h_i$ is a fractional matching} \\
    &\leq (r q) h_j(f) \times 2 (r q)^{j-2} \qquad \text{since $r q \geq 2$ and $|f| \leq r$}
    \end{align*}
  
Collecting terms, we get:    
    \begin{align*}
    a(h) &\geq \tfrac{1}{3} \sum_i \sum_{e \in E_i} h_i(e) a(e) - \tfrac{1}{3} \sum_j \sum_{f \in E_j} h_j(f) \times 2 (r q)^{j-1} = \tfrac{1}{3} \sum_i \sum_{e \in E_i} h_i(e) \bigl( a(e) - 2 (r q)^{i-1}  \bigr)
  \end{align*}
  
  For such an edge $e$, we have $a(e) \geq (r q)^i$, so the term $2 (r q)^{i-1}$ is at most $a(e)/2$. Thus, we have overall shown that
  $$
  a(h) \geq \tfrac{1}{3} \sum_i \sum_{e \in E_i} h_i(e) a(e)/2 \geq \Omega(\sum_i a(h_i))
  $$

To show $h$ is a fractional matching, consider some vertex $v \in V$, and let $j$ be maximal such there is an edge $f \in N(v) \cap E_j$ with $h_j(f) > 0$. We must then have $h(e) = 0$ for all $e \in N(v) \cap E_i$ for $i \leq j-3$. So we have:
  \[
  \sum_{e \in N(v)} h(e) \leq \sum_{i=j-2}^j \sum_{e \in N(v) \cap E_i} h(e) \leq \sum_{i=j-2}^j \sum_{e \in N(v) \cap E_i} h(e)/3 \leq 1.
  \]

The randomized algorithm is the same, except we set the parameter $q = \lceil 20 \log r \rceil$, and we use the randomized part of \Cref{maxmatch1} to get $\bE[a(h_i)] \geq \Omega(a^*(H_i))$. Thus, $\bE[a(h)] \geq \sum_i \bE[a(h'_i)] \geq \Omega(a^*(H))$. With this value of $q$, the complexity of \Cref{maxmatch1a} is only $O(\log r \log(\Delta r))$ rounds.
\end{proof}
  
  \section{Proof of Proposition~\ref{conc-prop33}}
  \label{app:prop33}
  To prove Proposition~\ref{conc-prop33}, we will use a general concentration inequality for polynomials developed in \cite{schudy-sviridenko}. We state this result in a (slightly specialized) form as follows:
  \begin{theorem}[\cite{schudy-sviridenko}]
    \label{ssthm}
    Consider a degree-$q$ polynomial function $S: \{0, 1 \}^n \rightarrow \mathbb R$ of the form
    $$
    S(y) = \sum_{ U \subseteq [n]} w_U \prod_{j \in U} y_j
    $$
    for non-negative weights $w_U$, where $w_U = 0$ for all sets $U$ with $|U| > q$.
    
    Suppose that $Y_1, \dots, Y_n$ are independent random variables, wherein each $Y_j$ is Bernoulli$(p_j)$, and let $Y = (Y_1, \dots, Y_n)$. For each $i = 0, \dots, q$, define parameter $\mu_q$ by
    $$
    \mu_i = \max_{ U \in \binom{[n]}{i}}  \sum_{ W: U \subseteq W \subseteq [n] } w_{W} \prod_{j \in W - U} p_j
    $$
    
    Then, there are constants $R_0, R_1 \geq 1$ such that
    $$
    \Pr \Bigl( |S(Y) - \mu_0| \geq \lambda \Bigr) \leq e^2 \max \Bigl( \max_{i=1, \dots, q} e^{-\frac{\lambda^2}{\mu_0 \mu_i R_0^i R_1^q}},  \max_{i=1, \dots, q} e^{-\bigl( \frac{\lambda}{\mu_i R_0^r R_1^q} \bigr)^{1/i}} \Bigr)
    $$
  \end{theorem}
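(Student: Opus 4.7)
The plan is to prove this via the moment method, following the standard route for polynomial concentration inequalities of Kim--Vu / Schudy--Sviridenko type. First, center the polynomial: write $Z_j = Y_j - p_j$, expand each factor $y_j = p_j + Z_j$ in $S(Y)$, and collect terms, obtaining
\[
S(Y) - \mu_0 = \sum_{\emptyset \neq T \subseteq [n]} \tilde w_T \prod_{j \in T} Z_j, \qquad \tilde w_T = \sum_{W \supseteq T} w_W \prod_{j \in W \setminus T} p_j.
\]
By the definition of $\mu_i$, we immediately get $\tilde w_T \leq \mu_{|T|}$ for every nonempty $T$, and the centered polynomial still has degree at most $q$. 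Observe also that $\mu_0 = \mathbb{E}[S(Y)]$, so this really is a deviation around the mean.

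Next, bound the $2k$-th moment of $S(Y) - \mu_0$ for an integer parameter $k$ to be chosen later. Expanding the product yields
\[
\mathbb{E}\bigl[(S(Y) - \mu_0)^{2k}\bigr] = \sum_{T_1, \dots, T_{2k}} \tilde w_{T_1} \cdots \tilde w_{T_{2k}} \, \mathbb{E}\Bigl[\prod_{\ell=1}^{2k} \prod_{j \in T_\ell} Z_j\Bigr].
\]
Since the $Z_j$ are independent, centered, and satisfy $|\mathbb{E}[Z_j^m]| \leq p_j$ for $m \geq 2$, the expectation vanishes unless every index $j \in \bigcup_\ell T_\ell$ appears in at least two of the sets $T_\ell$. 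This restricts us to ``pair-covered'' tuples. One can now organize the sum by the multiset structure of $(T_1, \dots, T_{2k})$, grouping indices by their repetition pattern, and use $\tilde w_T \leq \mu_{|T|}$ together with a counting argument (on how to choose the shared indices and on the number of surjections from $[2k]$ onto the underlying multiset) to obtain a bound of the shape
\[
\mathbb{E}\bigl[(S - \mu_0)^{2k}\bigr] \leq \max_{i=1, \dots, q} C^{qk} \, \mu_0^k \, \mu_i^k \, (C \, k)^{ik}
\]
for an absolute constant $C$, where the index $i$ records how many ``free'' slots per block of the pairing contribute to $\mu_i$.

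Finally, apply Markov to $\mathbb{E}[(S - \mu_0)^{2k}] / \lambda^{2k}$ and optimize $k$ separately for each $i$. For each $i$, two regimes arise: when $k$ is small the dominant factor is $(\mu_0 \mu_i C^q / \lambda^2)^k$, giving a subgaussian bound $\exp(-\lambda^2 / (\mu_0 \mu_i R_0^i R_1^q))$; when $\lambda$ is large, one is forced to take $k$ proportional to $(\lambda / \mu_i)^{1/i}$, yielding the heavier Weibull tail $\exp(-(\lambda / (\mu_i R_0^i R_1^q))^{1/i})$. Taking the maximum over the two regimes and over $i$, with suitable constants $R_0, R_1$, gives the claimed inequality; the factor $e^2$ absorbs the fact that we pass from integer $k$ to the optimal real value.

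The main obstacle is the combinatorial bookkeeping in the moment expansion: organizing the sum over pair-covered tuples $(T_1, \dots, T_{2k})$ so that each repeated index is ``charged'' against a factor $p_j$ and each ``free'' index is charged against a factor of $\mu_i$, while correctly counting the multiplicities of surjections onto a given type. This is the technical core of Schudy--Sviridenko, and getting the exponents $i$ and $q$ in the denominator to match the precise form above (rather than a cruder $q$-uniform bound) is what makes the final bound useful across all scales of $\lambda$.
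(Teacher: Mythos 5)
You should first note that the paper itself contains no proof of this statement: Theorem~\ref{ssthm} is imported (in slightly specialized form) from \cite{schudy-sviridenko} and is used purely as a black box in the proof of Proposition~\ref{conc-prop33}. So the only meaningful benchmark is the original Schudy--Sviridenko argument, and your outline does follow the same general route as that paper: center the polynomial, bound high moments in terms of the parameters $\mu_i$, and apply Markov with an optimized moment order. Your centering step and the bound $\tilde w_T \leq \mu_{|T|}$, the observation $\mu_0 = \mathbb{E}[S(Y)]$, and the reduction to pair-covered tuples are all correct.

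However, there is a genuine gap at the quantitative core. The moment bound you assert, $\mathbb{E}[(S-\mu_0)^{2k}] \leq \max_i C^{qk}\mu_0^k\mu_i^k (Ck)^{ik}$, is not of the right shape to yield the stated tail: plugging it into Markov gives $\bigl(C^q \mu_0\mu_i (Ck)^i/\lambda^2\bigr)^k$, and optimizing over $k$ produces a tail of order $\exp\bigl(-c\,(\lambda^2/(\mu_0\mu_i C^q))^{1/i}\bigr)$, which for $i \geq 2$ is strictly weaker than the Gaussian branch $\exp(-\lambda^2/(\mu_0\mu_i R_0^i R_1^q))$ in the theorem and does not match the Weibull branch either. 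The two-regime dichotomy you then invoke does not follow from the bound you wrote down; it requires a moment estimate that is itself a maximum of two separate expressions, roughly $\mathbb{E}[(S-\mu_0)^{2k}] \leq \max_i \max\bigl((C^q k \mu_0\mu_i)^k, (C^q (Ck)^i \mu_i)^{2k}\bigr)$, where the first term (with only $k^k$, not $k^{ik}$) drives the subgaussian branch and the second drives the $\exp(-(\lambda/\mu_i)^{1/i})$ branch. Establishing that two-branch estimate is exactly the combinatorial heart of Schudy--Sviridenko --- charging repeated indices to factors $p_j$, free indices to the appropriate $\mu_i$, and counting the pair-covered partition types with the correct multiplicities --- and your proposal explicitly defers this. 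As written, the proposal is therefore a plausible plan in the spirit of the cited proof, not a proof, and its one concrete quantitative claim is stated in a form that cannot deliver the theorem.
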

     
  Using this bound, we now show Proposition~\ref{conc-prop33}. The algorithm we use is simple: we first apply \Cref{maxmatch1a} with the constant edge-weighting to get a fractional matching $h$ with $h(E)  \geq \Omega( \tau(H) )$.  We next form an edge-set $L$ by selecting each edge independently with probability $p_e = h(e)/(10 r)$. Finally, we form a matching $M$ from $L$ by discarding any intersecting edges.
  
  Let $b = h(E)$ and define the indicator variable $X_e = [[e \in L]]$ along with the related quantity
$$
S =\sum_{\substack{e, e' \in E \\ e \neq e' \\ e' \cap e \neq \emptyset}} X_e X_{e'}
$$

It is clear that $|M| \geq \sum_{e \in E} X_e - S$. As $\bE[\sum_{e \in E} X_e] = \frac{b}{10 r}$, a standard Chernoff bound calculation shows that $\sum_{e \in E} X_e \geq \frac{b}{20 r}$ with probability at least $1-e^{-\Omega(b/r)}$; by our assumption on $\tau(H)$, this is at least $1 - 1/\poly(n)$. We will show $S \leq \frac{b}{50 r}$ w.h.p.; this in turn will show that $|M| \geq \frac{b}{20 r} - \frac{b}{50 r} \geq \Omega( \tau(H)/r)$ w.h.p. as desired.

Observe that $S$ is a quadratic polynomial applied to the independent variables $X_e$, each of which is Bernoulli-$p_e$. This has precisely the form required for Theorem~\ref{ssthm}, wherein the weights are defined by setting $w_{U} = 1$ if $U$ consists of two overlapping edges  and $w_U = 0$ otherwise. The polynomial has degree $q = 2$ and so we need to calculate $\mu_0, \mu_1, \mu_2$.

Since all the weights are zero or one, we clearly have $\mu_2 \leq 1$. For $\mu_0$, we calculate:
\begin{align*}
\mu_0 &= \sum_{e \in E} \sum_{\substack{e' \neq e \\ e \cap e' \neq \emptyset}} p_e p_{e'} \leq \sum_{e \in E} p_e \sum_{v \in e} \sum_{e' \in N(v)} \frac{h(e')}{10 r} \leq \sum_{e \in E} \frac{p_e}{10 r} = \frac{b}{100 r}
\end{align*}

Similarly, we may calculate $\mu_1$ as:
\begin{align*}
\mu_1 &= \max_e \sum_{e': e \cap e' \neq \emptyset} p_{e'} \leq \max_e \sum_{v \in e} \sum_{e' \in N(v)} \frac{h(e')}{10 r} \leq \frac{1}{10}
\end{align*}

We apply Theorem~\ref{ssthm} with parameter $\lambda = \frac{b}{100 r}$ to get:
\begin{equation}
\label{gtag1}
\Pr( S \geq \frac{b}{50 r} ) \leq \Pr( |S - \mu_0| \geq \lambda) \leq e^2 \max \Bigl( e^{-\frac{\lambda^2}{\mu_0 \mu_1 R_0 R_1^2}}, e^{-\frac{\lambda^2}{\mu_0 \mu_2 R_0^2 R_1^2}}, e^{-\frac{\lambda}{\mu_1 R_0 R_1^2}}, e^{-\frac{\lambda^{1/2}}{\mu_2^{1/2} R_0 R_1}} \Bigr)
\end{equation}

Since $R_0, R_1$ are constants and $\mu_1, \mu_2 \leq O(1)$ and $\mu_0 \leq \lambda$, the four terms in the RHS of Eq.~(\ref{gtag1}) are bounded by respectively $e^{-\Omega(\lambda)}, e^{-\Omega(\lambda)}, e^{-\Omega(\lambda)}, e^{-\Omega(\lambda^{1/2})}$. Since $\lambda = \frac{b}{50 r} \geq \frac{(r \log n)^{10}}{50 r} \geq \Omega(\log^3 n)$, each of these is in turn at most $n^{-\omega(1)}$. So $S \leq \frac{b}{50 r}$ w.h.p. as we have claimed.

\end{document}